\pdfoutput=1
\documentclass[twoside]{article}

\usepackage[preprint]{aistats2027}

\usepackage[T1]{fontenc}
\usepackage{textcomp}
\usepackage{times}
\usepackage[scaled]{helvet}
\usepackage{courier}
\usepackage{caption}
\usepackage[breaklinks]{hyperref}
\usepackage{microtype}
\usepackage{xcolor}
\definecolor{colour1}{RGB}{166,206,227}
\definecolor{colour2}{RGB}{31,120,180}
\definecolor{colour3}{RGB}{178,55,250} %
\definecolor{colour4}{RGB}{51,160,44}
\definecolor{purple}{RGB}{178,55,250} %

\newcommand{\mc}[1]{#1}

\usepackage[american]{babel}

\usepackage[round]{natbib}
\usepackage{mathtools}  %
\usepackage{amssymb}
\usepackage{amsthm}

\usepackage{enumitem}
\usepackage{needspace}

\usepackage{booktabs}
\usepackage{multirow}

\usepackage[ruled,vlined,linesnumbered]{algorithm2e}

\usepackage{tikz}
\usetikzlibrary{arrows.meta, positioning}
\usepackage{graphicx}
\usepackage{xcolor}
\definecolor{ForestGreen}{rgb}{0.13, 0.55, 0.13}
\usepackage{fontawesome5}
\hypersetup{colorlinks=true,linkcolor=blue!60!black,citecolor=blue!60!black,urlcolor=blue!60!black}
\hypersetup{pdftitle={Causal Lag Structure Discovery in Confounded Time Series via Orthogonalized Adaptive Estimation},pdfauthor={Hong Kiat Tan, Isaac-Neil Zanoria, James Chen, Haoyang Lyu, Mihai Cucuringu}}

\newtheorem{theorem}{Theorem}
\newtheorem{proposition}[theorem]{Proposition}
\newtheorem{lemma}[theorem]{Lemma}

\theoremstyle{definition}
\newtheorem{definition}{Definition}
\newtheorem{assumption}{Assumption}
\theoremstyle{remark}
\newtheorem{remark}{Remark}

\newcommand{\bA}{\mathbf{A}}
\newcommand{\bB}{\mathbf{B}}

\newcommand{\bW}{\mathbf{W}}
\newcommand{\bT}{\mathbf{T}}
\newcommand{\bY}{\mathbf{Y}}
\newcommand{\bV}{\mathbf{V}}
\newcommand{\bU}{\mathbf{U}}
\newcommand{\bC}{\mathbf{C}}
\newcommand{\bI}{\mathbf{I}}
\newcommand{\beps}{\boldsymbol{\varepsilon}}

\newcommand{\bTheta}{\boldsymbol{\Theta}}
\newcommand{\hbTheta}{\hat{\boldsymbol{\Theta}}}
\newcommand{\bzero}{\mathbf{0}}

\DeclareMathOperator{\Var}{Var}
\DeclareMathOperator{\Cov}{Cov}

\DeclareMathOperator*{\argmin}{arg\,min}

\newcommand{\convd}{\xrightarrow{d}}
\newcommand{\convp}{\xrightarrow{p}}

\newcommand{\E}{\mathbb{E}}
\newcommand{\R}{\mathbb{R}}
\newcommand{\N}{\mathcal{N}}

\newcommand{\hg}{\hat{g}}

\begin{document}

\runningtitle{Causal Lag Structure Discovery in Confounded Time Series}
\runningauthor{Tan, Zanoria, Chen, Lyu, Cucuringu}
\twocolumn[
\aistatstitle{Causal Lag Structure Discovery in Confounded\\Time Series via Orthogonalized Adaptive Estimation}
\aistatsauthor{Hong Kiat Tan$^{1,*}$ \And Isaac-Neil Zanoria$^{2}$ \And James Chen$^{1}$ \And Haoyang Lyu$^{1}$ \And Mihai Cucuringu$^{1,3,4}$}
\aistatsaddress{$^{1}$Department of Mathematics, University of California, Los Angeles\\ $^{2}$Department of Electrical and Computer Engineering, University of California, Los Angeles\\ $^{3}$Oxford-Man Institute of Quantitative Finance, University of Oxford\\ $^{4}$Department of Statistics, University of Oxford}
]
{\renewcommand{\thefootnote}{}\footnotetext{$^{*}$Correspondence to: \texttt{maxtanhk@g.ucla.edu}}}

\begin{abstract}
Finding which variables cause which others in multivariate time series, and at what lags, is central to science and policy, yet existing methods force a choice between flexible confounder adjustment, data-driven lag selection, and inference that controls the false discovery rate (FDR). ORACLE-VARX does all three in one pipeline. First, double/debiased machine learning (DML) removes nonlinear confounder effects from the outcomes and the lagged series. Second, adaptive causal lag estimation (ACLE) picks the lag order at each time step by sequential significance tests, tracking regime changes. Third, entry-wise $z$-tests with Benjamini--Hochberg correction select directed edges at a target FDR. We prove that in each rolling window, the debiased coefficients are asymptotically normal around a window-averaged target, so their $z$-tests are asymptotically valid. On a synthetic benchmark with time-varying structure and nonlinear confounding, ORACLE-VARX (LightGBM) tracks the true lag order best (RMSE $0.96$ vs $1.1$--$1.5$), has edge FDR $0.047$, close to PCMCI ($0.045$) and below VAR ($0.129$) and VAR-LiNGAM ($0.187$), and forecasts better than all three. On nine U.S. sector ETFs with macroeconomic confounders, it yields interpretable causal graphs whose lag order rises in high-volatility regimes.
\end{abstract}
\begin{center}
\faGithub\ \,\href{https://github.com/HK-Tan/ORACLE-VARX}{\texttt{HK-Tan/ORACLE-VARX}}
\end{center}

\section{Introduction}\label{sec:intro}

Causal discovery, learning which variables directly influence
which others from observational data, is a cornerstone of
scientific reasoning~\citep{pearl2009causality}.  For time series, one must determine not only
\emph{which} variables are causally linked, but \mc{also \emph{the lags at which}}
those effects operate.  Such lag-annotated graphs arise across
domains: lead--lag relationships among financial assets signal
information flow and systemic
risk~\citep{lo1990contrarian,billio2012econometric}; climate
teleconnections link weather patterns across distant
regions~\citep{wallace1981teleconnections}; and neural connectivity
studies seek directed interactions at millisecond
resolution~\citep{brovelli2004beta,seth2015granger}.  In each setting, recovering the causal lag structure is the first
step toward reliable prediction, intervention, and policy
evaluation.

Time series pose challenges that generic causal
discovery methods do not address.  Autocorrelation inflates the
effective number of tests, complicating inference.
The true lag order may be unknown and time-varying.  Moreover, exogenous confounders often enter through nonlinear channels.  Classical Granger causality and VAR methods~\citep{granger1969investigating,sims1980macroeconomics,lutkepohl2005new}
assume a single, fixed lag order and either ignore confounders or
include them linearly.  Temporal structure-learning methods---PCMCI~\citep{runge2019detecting},
DYNOTEARS~\citep{pamfil2020dynotears}, CD-NOTS~\citep{sadeghi2025causal},
VAR-LiNGAM~\citep{hyvarinen2010estimation}---each address some of these
issues, but none handles all of (a)~nonlinear confounding,
(b)~adaptive lag order selection, and (c)~FDR-controlled
edge discovery with entry-wise confidence intervals.

As a concrete illustration, consider a system of three endogenous time
series $Y^{(1)}_t, Y^{(2)}_t, Y^{(3)}_t$ driven by a lag-1 cycle
$Y^{(3)} \!\to\! Y^{(1)} \!\to\! Y^{(2)} \!\to\! Y^{(3)}$ (each edge with
coefficient $0.60$), where additional
self-effects at lags~2 and~3 decay linearly to zero over time,
producing three successive regimes (6~edges, then~5, then only the
lag-1 cycle).  Three exogenous confounders enter through distinct
nonlinear functions (tanh, quadratic, sine), creating correlated noise
that cannot be removed by linear regression.  The challenge is: given
observed trajectories of $\bY_t$ and $\bW_t$, can we simultaneously
discover \emph{(i)}~which directed edges exist, \emph{(ii)}~at which
lags they operate, and
\mc{(iii) provide valid statistical inference that controls false discoveries?}
Figure~\ref{fig:running-dag} depicts this data-generating process as a time-unrolled
\mc{acyclic digraph}.

\begin{figure}[t]
\centering
\begin{tikzpicture}[
    scale=0.72,
    every node/.style={font=\small},
    varnode/.style={circle, draw, thick, minimum size=6.5mm, inner sep=0pt},
    confbox/.style={rectangle, draw, thick, rounded corners=3pt,
                    fill=gray!12, minimum width=12mm, minimum height=7mm,
                    inner sep=2pt, font=\small},
    lag1/.style={-{Stealth[length=2.2mm]}, thick, blue!70!black},
    lag2/.style={-{Stealth[length=2.2mm]}, thick, dashed, red!70!black},
    lag3/.style={-{Stealth[length=2.2mm]}, thick, densely dotted,
                 ForestGreen!90!black},
    conf/.style={-{Stealth[length=2mm]}, thick, gray!60, densely dashed},
]

\foreach \col/\lab [count=\c] in {0/{t\!-\!3}, 2.6/{t\!-\!2}, 5.2/{t\!-\!1}, 7.8/{t}} {
    \node[anchor=south, font=\footnotesize\itshape] at (\col, 3.55) {$\lab$};
}

\foreach \col [count=\c] in {0, 2.6, 5.2, 7.8} {
    \node[varnode] (Y1-\c) at (\col, 2.7) {$Y^{(1)}$};
    \node[varnode] (Y2-\c) at (\col, 1.35) {$Y^{(2)}$};
    \node[varnode] (Y3-\c) at (\col, 0)   {$Y^{(3)}$};
}

\foreach \s/\e in {1/2, 2/3, 3/4} {
    \draw[lag1] (Y3-\s) -- (Y1-\e)
        node[midway, above, sloped, font=\scriptsize, blue!70!black] {0.60};
}
\foreach \s/\e in {1/2, 2/3, 3/4} {
    \draw[lag1] (Y1-\s) -- (Y2-\e)
        node[midway, above, sloped, font=\scriptsize, blue!70!black] {0.60};
}
\foreach \s/\e in {1/2, 2/3, 3/4} {
    \draw[lag1] (Y2-\s) -- (Y3-\e)
        node[midway, below, sloped, font=\scriptsize, blue!70!black] {0.60};
}

\draw[lag2] (Y1-1) to[bend left=42] node[above, font=\scriptsize,
    red!70!black, yshift=4pt, fill=white, inner sep=1pt] {$a(t)$} (Y1-3);
\draw[lag2] (Y1-2) to[bend left=42] node[above, font=\scriptsize,
    red!70!black, yshift=4pt, fill=white, inner sep=1pt] {$a(t)$} (Y1-4);
\draw[lag2] (Y2-1) to[bend left=22] node[pos=0.35, below, font=\scriptsize,
    red!70!black, yshift=-5pt, xshift = -5pt, fill=white, inner sep=1pt] {$a(t)$} (Y2-3);
\draw[lag2] (Y2-2) to[bend left=22] node[pos=0.35, below, font=\scriptsize,
    red!70!black, yshift=-5pt, xshift = -5pt, fill=white, inner sep=1pt] {$a(t)$} (Y2-4);

\draw[lag3] (Y3-1) to[bend right=22] node[below, font=\scriptsize,
    ForestGreen!90!black, yshift=-6pt, fill=white, inner sep=1pt] {$b(t)$} (Y3-4);

\node[confbox] (W) at (3.9, -2.35) {$\bW_t$};
\foreach \c in {1,2,3,4} {
    \draw[conf] (W) -- (Y1-\c);
    \draw[conf] (W) -- (Y2-\c);
    \draw[conf] (W) -- (Y3-\c);
}

\matrix[anchor=north west, draw, rounded corners=2pt, inner sep=3pt,
        column sep=4pt, row sep=1pt, font=\scriptsize,
        nodes={anchor=west}]
    at (5, -1.2) {
    \draw[lag1] (0,0) -- (0.5,0); & \node{Lag-1 cycle (0.60)}; \\
    \draw[lag2] (0,0) -- (0.5,0); & \node{Lag-2 self, $a(t)\!\downarrow\!0$}; \\
    \draw[lag3] (0,0) -- (0.5,0); & \node{Lag-3 self, $b(t)\!\downarrow\!0$}; \\
    \draw[conf] (0,0) -- (0.5,0); & \node{Nonlinear confounders}; \\
};
\end{tikzpicture}
\vspace{-2mm}
\caption{An example of a time-unrolled causal graph of a 3-variable system.
Solid blue arrows show the lag-1 cycle (coefficient~$0.60$).
Dashed red and dotted green arrows show lag-2 and lag-3
self-effects $a(t)$ and $b(t)$ that decay to zero
(Section~\ref{sec:synthetic}).
Gray dashed arrows indicate nonlinear confounder effects.}
\label{fig:running-dag}
\end{figure}

We propose ORACLE-VARX (Orthogonalized Regression for Adaptive Causal
Lag Estimation in VARX), a framework that addresses all three challenges
\mc{within a unified inferential framework}.  Our contributions are:

\vspace{-2mm}
\begin{enumerate}[itemsep=0pt]
    \item We introduce \textbf{DML orthogonalization for VAR coefficients},
    applying double/debiased machine
    learning~\citep{chernozhukov2018double} to residualize both outcomes
    and treatments against nonlinear confounder effects via flexible
    first-stage learners and time-series cross-fitting, yielding
    $\sqrt{T}$-consistent and asymptotically normal coefficient
    estimates (Section~\ref{sec:dml-var}).
    \item We propose \textbf{adaptive causal lag estimation (ACLE)},
    a two-stage procedure that combines sequential significance testing
    across a grid of significance levels with rolling validation-based
    tuning to select the lag order $\hat{p}_t$ at each time step,
    tracking regime changes in the underlying causal structure without
    requiring $p$ to be fixed \emph{a priori}
    (Section~\ref{sec:acle}).
    \item We perform \textbf{FDR-controlled edge discovery} via
    entry-wise $z$-tests on the debiased coefficients, corrected by
    the Benjamini--Hochberg procedure, yielding directed causal graphs
    with false discovery rate control at level~$q$.
    \item We establish \textbf{formal guarantees}, proving
    local asymptotic normality of the debiased coefficients in each
    rolling window, around a window-averaged target and with valid
    standard errors (Theorem~\ref{thm:dml-var}), extending the DML framework
    of~\citet{chernozhukov2018double} to the VARX setting, and
    connecting zero debiased coefficients to Granger non-causality
    (Proposition~\ref{prop:granger}).
    \item We provide \textbf{empirical validation} on both synthetic
    and real data.  On a synthetic benchmark with time-varying structure
    and nonlinear confounding, ORACLE-VARX closely tracks the true lag
    order with near-unit RMSE across five first-stage learners. On nine U.S.\ sector ETFs with macroeconomic confounders, the ACLE adaptive lag  procedure alone achieves an annualized Sharpe ratio of approximately
    1.3 on open-to-close returns, a setting where broad market indices
    earn near-zero
    returns~\citep{haghani2022nightmoves}, and the resulting causal
    graphs yield interpretable lead--lag structure whose adaptive lag
    order rises in high-volatility regimes (Sections~\ref{sec:synthetic}--\ref{sec:etf}).
\end{enumerate}

\vspace{-2mm}
\section{Related Work}\label{sec:related}

\vspace{-2mm}
\paragraph{Granger causality and VAR models.}
Classical Granger causality~\citep{granger1969investigating} tests
whether past values of one series improve the prediction of another
beyond the target's own history, and remains the most widely used
framework for analyzing lead--lag effects in multivariate time series.
\citet{sims1980macroeconomics} popularized vector autoregressive (VAR)
models as a flexible, data-driven tool for macroeconomic analysis, and
\citet{lutkepohl2005new} codified the statistical theory
for estimation, testing, and lag selection---mostly under linear dynamics
and a fixed, globally chosen lag order.
Sparse VAR extensions handle high-dimensional settings:
\citet{basu2015regularized} analyzed $\ell_1$-penalized estimation
with non-asymptotic guarantees under stability conditions, and
\citet{davis2016sparse} fit sparse VARs in two stages via partial
spectral coherence.
\citet{uematsu2025discovering} test each lag coefficient of a large
VAR with debiased-LASSO statistics and control the FDR across edges, but
in a linear, stationary VAR of known order.
All of these methods assume that confounders enter linearly or are
absent, and none adapts the lag order over time.

\vspace{-1mm}
\paragraph{Causal discovery for time series.}
PCMCI~\citep{runge2019detecting} combines PC-style parent
selection with conditional independence tests (optionally
nonlinear) that handle autocorrelation, but it neither adapts the active
lags over time nor gives lag coefficients with confidence intervals
(Table~\ref{tab:toy-results} compares it and VAR-LiNGAM with ours).
VAR-LiNGAM~\citep{hyvarinen2010estimation} uses non-Gaussianity to
identify contemporaneous and lagged structure, but assumes linear effects
and no hidden confounders.
TiMINo~\citep{peters2013causal} establishes identifiability under
independent noise and nonlinear relationships, but its algorithm
rules out feedback loops between series.
DYNOTEARS~\citep{pamfil2020dynotears} learns structure by
continuous optimization but gives no inference on individual edges.
CD-NOTS~\citep{sadeghi2025causal} handles nonstationarity by detecting
mechanism changes, but assumes every hidden confounder is a smooth
function of time.
Recent ML-based Granger and conditional independence tests for time
series~\citep{angelis2025doubly,wiecksosa2025dynamic} adjust for confounders
flexibly, and \citet{shi2026conditional} also search over the lag depth,
one pair of series at a time.
These tests detect links but give no lag coefficients with
confidence intervals.

\vspace{-3mm}
\paragraph{Double/debiased machine learning.}
\citet{robinson1988root} showed that in a partially linear model,
nonparametric nuisance fits (regressions on the confounders) followed by
OLS give $\sqrt{T}$-consistent estimates of the linear part.
\citet{chernozhukov2018double} generalized this into DML: Neyman
orthogonal scores and cross-fitting allow ML first stages that converge fast
enough, while keeping asymptotically normal inference.
Recent works extend DML to dependent data~\citep{ciganovic2026double,yu2026semiparametric}.
For high-dimensional VARs, \citet{hecq2023granger} and
\citet{adamek2023lasso} give LASSO-based inference with a fixed lag order.
Closest to our work, \citet{ballinari2024irfs} prove DML
consistency for impulse response functions under $\alpha$-mixing,
showing that blocked cross-fitting preserves asymptotic validity
(see Remark~\ref{rem:buffer}).
We apply their cross-fitting theory to VAR
coefficient matrices and add ACLE and BH-FDR control.
\citet{yu2026semiparametric} allow vector outcomes and treatments;
with the lags as treatments and instruments, their estimator reduces to our single-window
one. They assume time-invariant nuisance functions;
Theorem~\ref{thm:dml-var} covers drifting ones for our window average.
\mc{ORACLE-VARX is, to our knowledge, the first to combine
nonparametric DML debiasing, adaptive lag discovery, and FDR-controlled edge selection, with a confidence interval per lag coefficient.}

\vspace{-2mm}
\section{Methodology}\label{sec:method}
\vspace{-2mm}

\subsection{The VARX Model and Its Partially Linear Extension}\label{sec:setup}
\vspace{-1mm}

Assume that we have $n$ endogenous time series with
$\bY_t = (Y^{(1)}_t,\ldots,Y^{(n)}_t)^\top \in \R^n$ for
$t=1,\ldots,T$, together with $d$-dimensional exogenous confounders
$\bW_t = (W_{1,t},\ldots,W_{d,t})^\top \in \R^d$.

\paragraph{Linear VARX.}
The standard linear VARX($p$) model is
\begin{equation}\label{eq:varx}
  \bY_t
  = \sum_{k=1}^{p} \bA_k\,\bY_{t-k}
    + \sum_{k=1}^{p} \bB_k\,\bW_{t-k}
    + \beps_t,
\end{equation}
where $\bA_k \in \R^{n\times n}$ are lag-$k$ coefficient matrices
encoding causal effects among the endogenous variables,
$\bB_k \in \R^{n\times d}$ capture linear confounder effects,
and $\beps_t$ is a zero-mean innovation.
For simplicity, we use the same lag order~$p$ for both endogenous and exogenous terms.

\paragraph{Partially linear VARX.}
To allow flexible confounder adjustment, we generalize to a
\emph{partially linear VARX}:
\begin{equation}\label{eq:plvar}
  \bY_t
  = \sum_{k=1}^{p^*} \bA_k\,\bY_{t-k}
    + g(\bW_{t-1},\ldots,\bW_{t-p})
    + \beps_t,
\end{equation}
where $g:\R^{d \cdot p}\!\to\R^n$ is an unknown, potentially nonlinear
nuisance function,
$p^*$ is the true (unknown) maximum causal lag order, $p\ge p^*$ is the lag order used in the model, and
$\E[\beps_t \mid \bY_{t-1},\ldots,\bW_{t-p}]=\bzero$.
The effect of lagged $\bY$ on current $\bY$ remains
linear---yielding interpretable, testable coefficients---while the
confounder effect is left fully flexible.
This is the model underlying OR-VARX and ORACLE-VARX, and we can recover
the standard linear VARX~\eqref{eq:varx} when $g$ is linear.

\begin{definition}[Causal Lag Graph]\label{def:lag-graph}
The \emph{causal lag graph}
$\mathcal{G}=(V,E)$ has vertex set
$V=\{Y^{(1)},\ldots,Y^{(n)}\}$
and directed, lag-annotated edge set
$E=\bigl\{(Y^{(j)}\!\to\! Y^{(i)},\,k): \bA_k[i,j]\neq 0\bigr\}$.
An edge $(Y^{(j)}\!\to\! Y^{(i)},\,k)$ means $Y^{(j)}$ at lag~$k$
Granger-causes $Y^{(i)}$.
\end{definition}

\noindent
\textbf{Goal.}\;
Recover $\mathcal{G}$ from observational data---that is, determine
which entries $\bA_k[i,j]$ are nonzero, at which lags $k$, and for
which directed pairs $(j,i)$ \mc{while providing valid statistical inference for each edge}, while making minimal assumptions on the nuisance function~$g$.

\subsection{DML for VAR: Orthogonalization Procedure}\label{sec:dml-var}

Following the double/debiased machine learning (DML) framework
of~\citet{chernozhukov2018double}, we cast the partially linear
VARX~\eqref{eq:plvar} into a treatment-effect setup.

\paragraph{DML formulation.}
Define the \emph{stacked treatment} and \emph{stacked confounders}:
\begin{equation}\label{eq:stacked}
  \bT_t \;:=\;
  \begin{pmatrix} \bY_{t-1} \\ \vdots \\ \bY_{t-p} \end{pmatrix}
  \!\in \R^{np},
  \quad
  \bC_t \;:=\;
  \begin{pmatrix} \bW_{t-1} \\ \vdots \\ \bW_{t-p} \end{pmatrix}
  \!\in \R^{dp}.
\end{equation}
Let $\bTheta \in \R^{np\times n}$ denote the stacked coefficient matrix
formed by vertically stacking $\bA_1^\top,\ldots,\bA_p^\top$, so that $\bTheta^\top\bT_t=\sum_{k}\bA_k\bY_{t-k}$.
Then~\eqref{eq:plvar} becomes the partially linear model
\begin{align}
  \bY_t &= \bTheta^\top \bT_t + g(\bC_t) + \bU_t, \label{eq:dml-Y}\\
  \bT_t &= g_T(\bC_t) + \bV_t, \label{eq:dml-T}
\end{align}
where $g:\R^{dp}\!\to\R^n$ and $g_T:\R^{dp}\!\to\R^{np}$ are
nuisance functions,
$\bU_t$ is the structural innovation with $\E[\bU_t\mid\bC_t,\bT_t]=\bzero$,
and $\bV_t$ is the treatment residual with $\E[\bV_t\mid\bC_t]=\bzero$.
The residualization below uses the outcome's prediction from the
confounders, $g_Y(\bC_t):=\E[\bY_t\mid\bC_t]=\bTheta^\top g_T(\bC_t)+g(\bC_t)$.
The parameter of interest is $\bTheta$ (equivalently, the individual
$\bA_k$ matrices), and the nuisance functions $g_Y$, $g_T$ are treated as
infinite-dimensional parameters that must be estimated.

\paragraph{Three-step DML procedure.}

\begin{enumerate}[leftmargin=*,itemsep=2pt]

\item \textbf{Cross-fitted nuisance estimation.}
Using flexible machine learning methods (extra trees, gradient boosting,
random forests, or a foundational model), estimate $\hg_Y$ and $\hg_T$ via
\emph{time-series cross-fitting}.

\item \textbf{Residualization.}
Form the residualized outcome and treatment:
\begin{equation}\label{eq:residuals}
  \tilde{\bY}_t := \bY_t - \hg_Y(\bC_t), \quad
  \tilde{\bT}_t := \bT_t - \hg_T(\bC_t).
\end{equation}

\item \textbf{Orthogonalized OLS.}
Estimate the debiased coefficient matrix by ordinary least squares on
the residualized data:
\begin{equation}\label{eq:theta-hat}
  \hbTheta
  = \Bigl(\textstyle\sum_{t}\tilde{\bT}_t\,\tilde{\bT}_t^\top\Bigr)^{-1}
    \Bigl(\textstyle\sum_{t}\tilde{\bT}_t\,\tilde{\bY}_t^\top\Bigr).
\end{equation}
\end{enumerate}

The estimator~\eqref{eq:theta-hat} satisfies Neyman orthogonality
(Section~\ref{sec:theory}), so first-stage estimation errors in
$\hg_Y$ and $\hg_T$ affect $\hbTheta$ only at second order.
This permits flexible ML learners in the first stage without
sacrificing inferential validity.

\paragraph{Time-series cross-fitting.}
Note that standard $K$-fold cross-validation violates temporal ordering and induces
dependence between training and test sets. Hence,
we instead use a \emph{blocked rolling window split}:
fold~$k$ trains on the interval
$[s_k,\;s_k+n_{\mathrm{train}})$ and tests on
$[s_k+n_{\mathrm{train}},\;s_k+n_{\mathrm{train}}+n_{\mathrm{test}})$,
with $s_{k+1}=s_k+n_{\mathrm{test}}$ so that folds tile without overlap.
Nuisance residuals $\tilde{\bY}_t$ and $\tilde{\bT}_t$
are computed and aggregated only on the test folds, for which
the second-stage OLS~\eqref{eq:theta-hat} then uses only these out-of-sample
residuals accordingly.
We \emph{memoize} the first stage: because the blocks lie on one grid shared by all windows, each block is fit once and reused by every window that overlaps it. This needs over $200\times$ fewer first-stage fits than recomputing each window, and a full synthetic run (all 2,595 windows, one tree-based learner) takes 6--17 minutes on a 10-core Apple M5 laptop CPU, with no GPU (Appendix~\ref{app:compute}).

\paragraph{Standard errors and inference.}
After obtaining $\hbTheta$, we compute standard errors using the
classical OLS variance formula on the residualized data:
\begin{equation}\label{eq:ols-se}
  \widehat{\Var}(\hbTheta_{\cdot i})
  = \hat{\sigma}_i^2\,
    \Bigl(\textstyle\sum_{t}\tilde{\bT}_t\,\tilde{\bT}_t^\top\Bigr)^{-1},
\end{equation}
for each equation $i$, where $\hbTheta_{\cdot i}$ is the $i$-th column of $\hbTheta$,
$\hat{\sigma}_i^2 = (T_{\mathrm{eff}}-np)^{-1}\sum_t
(\tilde{Y}_{t,i} - \hbTheta_{\cdot i}^\top\tilde{\bT}_t)^2$
and $T_{\mathrm{eff}}$ is the number of out-of-sample residualized
observations. Under conditionally homoskedastic innovations this is the
correct asymptotic variance, with no HAC correction
(Theorem~\ref{thm:dml-var}).
The diagonal of $\widehat{\Var}(\hbTheta)$ then provides entry-wise standard
errors, enabling $z$-tests for the hypothesis
$H_0:\bA_k[i,j]=0$ on every potential edge in the causal graph.

\begin{remark}[Buffer Zones]\label{rem:buffer}
Time-series DML theory often inserts buffer zones of $k_T$ observations
(e.g.\ $k_T=\lfloor T/K\rfloor$) between training and test folds
\citep{ballinari2024irfs}.
We use adjacent windows ($k_T=0$). The nuisance functions themselves drift
over time, so a buffer would make every nuisance fit staler, and it would drop
$k_T$ observations from every training fold. Our theory
(Section~\ref{sec:theory}) does not use a buffer: it assumes directly that
the cross-fitted nuisance errors are negligible (condition (N),
Appendix~\ref{app:assumptions}).
\end{remark}

\vspace{0mm}
\subsection{Lag Order Selection}\label{sec:acle}

\vspace{0mm}
The lag order $p$ controls the number of coefficient matrices
$\bA_1,\ldots,\bA_p$ to estimate.
Fixing $p$ \emph{a priori} is problematic: setting it too low misses
genuine edges at higher lags, while setting it too high inflates the
parameter count and increases false positives.
The true lag order may also shift over time.
We consider two data-driven approaches.

\vspace{0mm}
\paragraph{RMSE-based lag selection.}
For methods without significance testing (VAR, VARX, OR-VARX),
we iterate over candidate lags $p \in \{1,\ldots,p_{\max}\}$,
compute rolling forecasts $\hat{\bY}_s^{(p)}$ for each,
and select the lag minimizing root mean squared forecast error
on a trailing validation window of $v$ days.
Concretely, define the rolling validation RMSE for any candidate
lag~$p$ as
\begin{equation}\label{eq:rmse-select}
  \mathrm{RMSE}_v(p;\,t)
  = \sqrt{\frac{1}{v}\sum_{s=t-v}^{t-1}
      \bigl\|\bY_s - \hat{\bY}_s^{(p)}\bigr\|^2},
\end{equation}
and set $\hat{p}_t = \argmin_{p \in \{1,\ldots,p_{\max}\}}
\mathrm{RMSE}_v(p;\,t)$.
This is simple and assumption-free but ignores the statistical evidence for individual edges.

\vspace{0mm}
\paragraph{Adaptive Causal Lag Estimation (ACLE).}
ACLE uses statistical significance to
decide when to stop adding lags, so it tracks causal signal
strength, not only forecast accuracy.
It works with both OLS-VAR and DML coefficients.

\vspace{0mm}
\paragraph{Stage~1: Significance-based lag selection.}
For each rolling window centred at time~$t$, we evaluate candidate
lags $p \in \{2,\ldots,p_{\max}\}$ via sequential hypothesis testing.
Figure~\ref{fig:acle-dag} illustrates the procedure.

\begin{figure}[t]
\centering
\begin{tikzpicture}[
    scale=0.45,
    circ/.style={circle, draw, thick, minimum size=0.8cm, align=center, font=\small},
    wcircle/.style={circle, draw, thick, minimum size=0.8cm, align=center, font=\small},
    rect/.style={rectangle, draw, thick, minimum size=0.8cm, rounded corners, align=center, font=\small},
    arrow/.style={->, thick},
    dashed_arrow/.style={->, thick, densely dashed}
]

\begin{scope}
    \node[font=\bfseries] at (3, 7.5) {Lags $1$ to $p{-}1$};

    \node[circ] (T_pre) at (0,0) {$T^{(p-1)}$};
    \node[wcircle] (W_pre) at (3,5) {$W$};
    \node[circ] (Y_pre) at (6,0) {$\bY_t$};

    \draw[arrow] (W_pre) -- (T_pre);
    \draw[arrow] (W_pre) -- (Y_pre);
    \draw[arrow] (T_pre) -- (Y_pre);

    \node[] at (3,0.6) {$\bA_{1:p-1}$};
\end{scope}

\begin{scope}[xshift=10cm]
    \node[font=\bfseries] at (3, 7.5) {Testing lag $p$};

    \node[circ] (T) at (0,0) {$T^{(p-1)}$};
    \node[wcircle] (W) at (3,5) {$W$};
    \node[circ] (Y) at (6,0) {$\bY_t$};
    \node[rect] (R) at (0,-4) {$\bY_{t-p}$};

    \draw[arrow] (W) -- (T);
    \draw[arrow] (W) -- (Y);
    \draw[arrow] (T) -- (Y);

    \node[] at (3,0.6) {$\bA_{1:p-1}$};

    \draw[arrow, densely dotted] (R) -- (T);
    \draw[dashed_arrow] (R) -- (Y);

    \node[] at (4,-2.5) {$\bA_p$};
\end{scope}
\end{tikzpicture}
\caption{Sequential lag testing in ACLE.
Left: the model with lags $1,\ldots,p{-}1$.
Right: testing $H_0\colon \bA_p = 0$ on the lag-$p$ block.
If rejected at level $\alpha$, lag $p$ is retained and we proceed to
$p{+}1$; otherwise we stop and set
$p_\alpha = p{-}1$.}
\label{fig:acle-dag}
\end{figure}

\vspace{0mm}
For each candidate lag $p \in \{2,\ldots,p_{\max}\}$, we test
\[
  H_0\colon \bA_p[i,j] = 0 \quad \text{for all } (i,j) \text{ pairs simultaneously.}
\]
Entry-wise $z$-statistics
$z_{i,j} = \hat{\bA}_p[i,j] \,/\, \hat s(\hat{\bA}_p[i,j])$
are computed from the $p_{\max}$ fit, and Benjamini--Hochberg FDR correction~\citep{benjamini1995controlling} at level~$\alpha$
determines whether any entry of $\bA_p$ is significant.
If none is, we stop and set
$p_\alpha = p - 1$.
This yields a candidate lag $p_\alpha$ for each $\alpha$. Because $\alpha$ governs the sensitivity of
lag detection, we evaluate a grid
$\alpha \in \{\alpha_1,\ldots,\alpha_M\}$, producing candidate lags $\{p_{\alpha_1},\ldots,p_{\alpha_M}\}$.

\vspace{0mm}
\paragraph{Stage~2: Validation-based $\alpha$-selection.}
To select the optimal significance level~$\alpha^*_t$ without an
\emph{a priori} choice, we apply the same rolling validation
criterion~\eqref{eq:rmse-select}, now over the $\alpha$-grid
rather than over raw lag candidates:
\begin{equation}\label{eq:alpha-select}
  \alpha^*_t
  = \argmin_{\alpha \in \{\alpha_1,\ldots,\alpha_M\}}
    \mathrm{RMSE}_v(p_\alpha;\,t),
\end{equation}
where $p_\alpha$ is the lag selected by Stage~1 at level~$\alpha$.
The final adaptively selected lag is
$\hat{p}_t = p_{\alpha^*_t}$.

The selected lag $\hat{p}_t$ may vary as the
rolling window advances: when higher-order self-effects decay to zero,
ACLE tracks the change by reducing $\hat{p}_t$.

\vspace{0mm}
\subsection{ORACLE-VARX: Full Pipeline}\label{sec:oracle-varx}
\vspace{0mm}

The complete ORACLE-VARX pipeline combines the three components above
into a unified framework for causal lag discovery.
For each rolling window at time~$t$:

\begin{enumerate}[leftmargin=*,itemsep=1pt]
  \item \textbf{Input:}\;
  Time series $\{\bY_s,\bW_s\}$ in the current window, maximum lag
  $p_{\max}$, first-stage learner~$\mathcal{L}$.

  \item \textbf{Nuisance estimation:}\;
  For each candidate $p\in\{1,\ldots,p_{\max}\}$, run DML with rolling
  cross-fitting (Section~\ref{sec:dml-var}) to estimate nuisance
  functions $\hg_Y^{(p)}$ and $\hg_T^{(p)}$, then form residuals
  $\tilde{\bY}_t^{(p)}$ and $\tilde{\bT}_t^{(p)}$.

  \item \textbf{Orthogonalized regression:}\;
  Compute $\hbTheta^{(p)}$ and its standard errors
  via~\eqref{eq:theta-hat}--\eqref{eq:ols-se} for each candidate~$p$.

  \item \textbf{Lag selection:}\;
  For non-adaptive methods, select
  $\hat{p}_t$ via RMSE minimization~\eqref{eq:rmse-select}.
  For ACLE methods, run sequential significance testing on the $p_{\max}$ fit across the
  $\alpha$-grid (Stage~1), then select $\alpha^*_t$ via rolling
  validation~\eqref{eq:alpha-select} (Stage~2) to obtain
  $\hat{p}_t = p_{\alpha^*_t}$.

  \item \textbf{BH-FDR edge discovery:}\;
  From the $p_{\max}$ fit $\hbTheta^{(p_{\max})}$, compute $z$-scores
  $z_{k,i,j} = \hat{\bA}_k[i,j]\,/\,\hat s(\hat{\bA}_k[i,j])$
  for all $k\in\{1,\ldots,\hat{p}_t\}$ and $(i,j)$ pairs.
  Apply the Benjamini--Hochberg procedure~\citep{benjamini1995controlling} at level $q$ across
  all $n^2\hat{p}_t$ tests.

  \item \textbf{Output:}\;
  Causal lag graph $\hat{\mathcal{G}}_t$ with lag-annotated edges,
  and day-$t$ forecasts $\hat{\bY}_t$.
\end{enumerate}

The full pseudocode is given in Algorithm~\ref{alg:oracle} (Appendix~\ref{app:algorithm}).

\vspace{0mm}
\subsection{Method Hierarchy}\label{sec:hierarchy}

Table~\ref{tab:hierarchy} displays the nested hierarchy of methods
evaluated in this paper. The experiments in Section~\ref{sec:experiments} evaluate every row of
Table~\ref{tab:hierarchy}, allowing us to attribute performance gains
to each individual component.

\begin{table}[h!]
  \centering
  \caption{Hierarchy of methods. Each row adds one capability
  relative to the row above. ``RMSE'' lag selection picks the lag order
  minimizing rolling validation RMSE~\eqref{eq:rmse-select} over
  $\{1,\ldots,p_{\max}\}$;
  ``ACLE'' uses significance-based sequential testing with
  validation-based $\alpha$-tuning
  (Section~\ref{sec:acle}).}\label{tab:hierarchy}
  \small
  \begin{tabular}{@{}lccc@{}}
    \toprule
    \textbf{Method}
      & \textbf{Lag Sel.}
      & \textbf{Conf.\ Adj.}
      & \textbf{DML} \\
    \midrule
    VAR
      & RMSE
      & None
      & -- \\
    VARX
      & RMSE
      & Linear
      & -- \\
    ACLE-VAR
      & ACLE
      & None
      & -- \\
    ACLE-VARX
      & ACLE
      & Linear
      & -- \\
    OR-VARX
      & RMSE
      & Nonlinear
      & \checkmark \\
    ORACLE-VARX
      & ACLE
      & Nonlinear
      & \checkmark \\
    \bottomrule
  \end{tabular}
\end{table}

\vspace{0mm}
\section{Theoretical Analysis}\label{sec:theory}

The coefficients in our setting drift over time, so we study a single
rolling window. We use the stacked form of Section~\ref{sec:dml-var} with
$p=p_{\max}$ lags and let the coefficients vary slowly:
$\bTheta_t=\bTheta(t/T)$ for a function
$\bTheta:[0,1]\to\R^{np\times n}$, so that
$\bY_t=\bTheta_t^\top\bT_t+g_t(\bC_t)+\beps_t$.
Fix $u_0\in(0,1]$ and the window
$\mathcal W=\{t_0-T_w+1,\ldots,t_0\}$, $t_0=\lfloor u_0T\rfloor$, on which
the second-stage OLS~\eqref{eq:theta-hat} is run.

\paragraph{What a window estimates.}
Let $\bV_t=\bT_t-\E[\bT_t\mid\bC_t]$ be the part of the lags that the
confounders cannot predict, and $J_t=\E[\bV_t\bV_t^\top]$. The window
estimate targets
\begin{equation}\label{eq:target}
  \bar\bTheta=\Bigl(\sum_{t\in\mathcal W}J_t\Bigr)^{-1}
  \sum_{t\in\mathcal W}J_t\,\bTheta_t,
\end{equation}
a weighted average of the coefficients over the window. These weights make
the drift cancel exactly in the estimation error, since
$\sum_{t\in\mathcal W}J_t(\bTheta_t-\bar\bTheta)=0$, and $\bar\bTheta$
lies within $O(T_w/T)$ of $\bTheta(u_0)$ (Lemma~\ref{lem:A-target}).

\begin{assumption}\label{asm:main}
We assume the following conditions:
\begin{itemize}[wide=0pt,labelsep=0.4em,itemsep=0pt,topsep=-\parskip,parsep=0pt,partopsep=0pt]
\item[\textbf{(M)}] $\E[\beps_t\mid\mathcal F_{t-1}]=\bzero$, where
$\mathcal F_{t-1}=\sigma(\bY_s,\bW_s:s<t)$: given the observed past, the
innovation is unpredictable (all confounders are observed and
$p_{\max}\ge p^*$).
\item[\textbf{(S)}] $\bTheta$ is Lipschitz and $\lambda_{\min}(J_t)\ge c_J>0$:
coefficients drift slowly and confounder-adjusted lags are not collinear.
\item[\textbf{(W)}] $T_w\to\infty$ and $T_w=O(T^{2/3})$.
\item[\textbf{(R)}] The regularity conditions (D), (H), (N) of
Appendix~\ref{app:assumptions} hold: bounded moments and short memory,
conditionally homoskedastic innovations, and the usual DML requirement that
the cross-fitted nuisance errors are small (a product rate) and
uncorrelated with $\beps_t$ and $\bV_t$.
\end{itemize}
\end{assumption}

\begin{theorem}[Local asymptotic normality]\label{thm:dml-var}
Under Assumption~\ref{asm:main}, for every entry $(r,i)$ of $\bTheta$,
with $\hat s_{ri}$ the standard error from~\eqref{eq:ols-se},
\[
  \frac{\hbTheta_{ri}-\bar\bTheta_{ri}}{\hat s_{ri}}\convd\N(0,1).
\]
If moreover $T_w=o(T^{2/3})$, then $\bar\bTheta_{ri}$ may be replaced by
$\bTheta_{ri}(u_0)$.
\end{theorem}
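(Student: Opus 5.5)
The proof will follow the standard DML argument on the window $\mathcal W$, using the residualized outcome identity as the starting point. Since $g_Y(\bC_t)=\bTheta_t^\top g_T(\bC_t)+g_t(\bC_t)$, we have $\bY_t-g_Y(\bC_t)=\bTheta_t^\top\bV_t+\beps_t$. Write the nuisance errors as $\delta^Y_t=g_Y(\bC_t)-\hg_Y(\bC_t)$ and $\delta^T_t=g_T(\bC_t)-\hg_T(\bC_t)$. Then $\tilde\bY_t=\bTheta_t^\top\bV_t+\beps_t+\delta^Y_t$ and $\tilde\bT_t=\bV_t+\delta^T_t$.

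Let $\hat J=\sum_{t\in\mathcal W}\tilde\bT_t\tilde\bT_t^\top$. Substituting into~\eqref{eq:theta-hat} and subtracting $\bar\bTheta$ gives the decomposition
\[
\hat J(\hbTheta-\bar\bTheta)=\sum_{t}\bV_t\beps_t^\top+\sum_t\bV_t\bV_t^\top(\bTheta_t-\bar\bTheta)+R_T,
\]
where $R_T$ collects all terms containing $\delta^Y_t$ or $\delta^T_t$. These include cross terms $\delta^T_t\beps_t^\top$ and $\bV_t\delta^{Y\top}_t$, products $\delta^T_t\delta^{Y\top}_t$, and the drift terms that multiply $\delta^T_t$.

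The plan then treats the terms of this decomposition in order.

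\textbf{(i) Denominator.} Using the short-memory and moment parts of (D), a weak law of large numbers for mixing arrays gives $T_w^{-1}\sum_t\bV_t\bV_t^\top-T_w^{-1}\sum_tJ_t\convp 0$. Condition (N) makes the $\delta^T$ contributions to $\hat J$ negligible. Together with $\lambda_{\min}(J_t)\ge c_J$ from (S), this shows $T_w^{-1}\hat J$ is asymptotically equivalent to $\bar J:=T_w^{-1}\sum_tJ_t$ and is invertible with probability tending to one.

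\textbf{(ii) Score term.} Under (M), $\bV_t\beps_t^\top$ is a martingale difference array with respect to $\mathcal F_{t-1}$, because $\bV_t$ is $\mathcal F_{t-1}$-measurable. By (H), its conditional variance is $\sigma_i^2\bV_t\bV_t^\top$ for column $i$. A martingale CLT with the Lindeberg condition from the bounded moments gives
\[
\bar J^{-1/2}\,T_w^{-1/2}\sum_t\bV_t\varepsilon_{t,i}\convd\N(0,\sigma_i^2\bI).
\]

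\textbf{(iii) Drift term.} This is the main obstacle. The choice~\eqref{eq:target} kills only the mean: $\sum_tJ_t(\bTheta_t-\bar\bTheta)=0$, so the remaining piece is $\sum_t(\bV_t\bV_t^\top-J_t)(\bTheta_t-\bar\bTheta)$. By Lipschitzness in (S), $\|\bTheta_t-\bar\bTheta\|=O(T_w/T)$, and the centred sum $\sum_t(\bV_t\bV_t^\top-J_t)$ is $O_p(\sqrt{T_w})$ under mixing with fourth moments. A direct bound therefore gives $O_p(\sqrt{T_w}\,T_w/T)$. After dividing by $\sqrt{T_w}$ this is $O_p(T_w/T)=o_p(1)$, which suffices without using the $T^{2/3}$ rate. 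To keep the step robust, I would bound it using the variance of a mixing sum with deterministic bounded weights, via a Davydov-type covariance inequality.

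\textbf{(iv) Remainder.} $R_T/\sqrt{T_w}=o_p(1)$ follows directly from (N). The product-rate condition handles $\sum\delta^T\delta^{Y\top}$, and the assumed uncorrelatedness of the cross-fitted errors with $\beps_t$ and $\bV_t$ handles the linear cross terms. This is exactly Neyman orthogonality: the errors enter only as products or as mean-zero terms.

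\textbf{(v) Variance and second claim.} Consistency $\hat\sigma_i^2\convp\sigma_i^2$ follows from (i)--(iv) and a LLN for $\varepsilon_{t,i}^2$. Combining (i)--(v) by Slutsky, with $\hat s_{ri}^2=\hat\sigma_i^2[\hat J^{-1}]_{rr}$, yields the studentized limit $\N(0,1)$. For the replacement claim, Lemma~\ref{lem:A-target} gives $\bar\bTheta-\bTheta(u_0)=O(T_w/T)$, while $\hat s_{ri}\asymp T_w^{-1/2}$. The resulting bias ratio is $O(T_w^{3/2}/T)$, which is $o(1)$ exactly when $T_w=o(T^{2/3})$.
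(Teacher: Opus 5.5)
Your outline follows the paper's proof closely: the same exact decomposition, cancellation of $\sum_t J_tD_t$ by the choice of target, a Chebyshev-type LLN for the Gram matrix, a martingale CLT with no HAC terms, a weighted-variance bound making the centred drift term $O_p(T_w/T)$, Slutsky, and a recentring bias of order $T_w^{3/2}/T$. Step (iv), however, has a gap.

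The drift terms that multiply the nuisance errors do not follow from (N). These are $T_w^{-1/2}\sum_t(\bV_t\delta_t^{T\top}+\delta^T_t\bV_t^\top)D_t$ and $T_w^{-1/2}\sum_t\delta^T_t\delta_t^{T\top}D_t$, with $D_t:=\bTheta_t-\bar\bTheta$. Condition (N) gives only $T_w^{-1}\sum_t\lvert\delta^T_t\rvert^2\convp0$, with no rate. None of its orthogonality conditions covers sums weighted by the time-varying $D_t$. Bounding these terms requires Cauchy--Schwarz,
\[
\Bigl\lvert\frac1{\sqrt{T_w}}\sum_t\bigl(\bV_t\delta_t^{T\top}+\delta^T_t\bV_t^\top\bigr)D_t\Bigr\rvert_F
\le2\sqrt{T_w}\,\max_t\lVert D_t\rVert\,\Bigl(\frac1{T_w}\sum_t\lvert\bV_t\rvert^2\Bigr)^{1/2}\Bigl(\frac1{T_w}\sum_t\lvert\delta^T_t\rvert^2\Bigr)^{1/2},
\]
and similarly for the quadratic term. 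Here $\sqrt{T_w}\max_t\lVert D_t\rVert\lesssim T_w^{3/2}/T$, which stays bounded exactly because $T_w=O(T^{2/3})$. So the $2/3$ exponent in (W) is already needed for the first claim, not only for the recentring. Your remark in (iii) that the rate is not needed holds only for the centred term $\sum_t(\bV_t\bV_t^\top-J_t)D_t$.

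Relatedly, the precise (N) controls the combination $\delta^Y_t-\bTheta_t^\top\delta^T_t$, paired with $\bV_t$ and, in the product rate, with $\delta^T_t$. It does not control $\bV_t\delta_t^{Y\top}$ or $\delta^T_t\delta_t^{Y\top}$ separately. You should therefore write $\tilde\bY_t=\bTheta_t^\top\tilde\bT_t+(\delta^Y_t-\bTheta_t^\top\delta^T_t)+\beps_t$ before expanding; that rewriting is what produces the $\delta^T_tD_t$ terms above.

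Two smaller points. First, under (H) the noise variance drifts, so there is no single $\sigma_i^2$. The conditional variance in (ii) is $\Sigma_{\varepsilon,ii}(t/T)\bV_t\bV_t^\top$, and the right normalizer is $v_{T,ri}=T_w^{-1}\sum_t\Sigma_{\varepsilon,ii}(t/T)\,a^\top J_ta$ with $a=\bar J^{-1}e_r$. The studentized statistic has $T_w\hat s_{ri}^2\approx\bar\sigma_i^2[\bar J^{-1}]_{rr}$, where $\bar\sigma_i^2$ is the window average of $\Sigma_{\varepsilon,ii}(t/T)$. Matching the two uses the Lipschitz part of (H), which bounds the deterministic gap by $O(T_w/T)$. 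Taking $\sigma_i^2:=\Sigma_{\varepsilon,ii}(u_0)$ also works, by the same bound, but you have to say so.

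Second, in (iii) you cannot pull $\max_t\lVert D_t\rVert$ out of $\sum_t(\bV_t\bV_t^\top-J_t)D_t$, since $D_t$ varies with $t$. A triangle-inequality bound gives only $O_p(T_w^{3/2}/T)$ after scaling. The weighted variance bound you mention is the right tool. Under the paper's (D)(ii), which assumes summable covariances directly, it is just Chebyshev with weights $D_t/\max_s\lVert D_s\rVert$; no mixing or Davydov inequality is needed.
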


\paragraph{Proof idea.}
We first write the error exactly as
$\sqrt{T_w}(\hbTheta-\bar\bTheta)=\hat J^{-1}(S_0+R)$, where
$\hat J=T_w^{-1}\sum_{t\in\mathcal W}\tilde\bT_t\tilde\bT_t^\top$
is close to the average of the $J_t$. The leading term
$S_0=T_w^{-1/2}\sum_{t\in\mathcal W}\bV_t\beps_t^\top$ is a sum of
martingale differences, so a martingale central limit theorem applies with
no autocovariance (HAC) terms. The remainder $R$ collects the drift, whose
mean cancels by~\eqref{eq:target}, and the nuisance errors, which enter
only through products or terms uncorrelated with the innovations (Neyman
orthogonality), so they vanish under~(N) (Appendix~\ref{app:thm1-proof}).
Last, $T_w\hat s_{ri}^2$ is close to the limiting variance, since the residual
variance is consistent and the noise variance drifts slowly
(Appendix~\ref{app:se}).

\paragraph{Lag selection.}
ACLE is a heuristic; we do not prove that it selects the true lag.
Theorem~\ref{thm:dml-var} explains why it tracks the lag order: if
$\bA_k=0$ on the window for all lags above some $p^*$, their
$z$-statistics stay of order one, while every lag block with
$\bA_k(u_0)\neq0$ has an entry whose $\lvert z\rvert$ grows like
$\sqrt{T_w}$. With a fixed Stage~1 level $\alpha$, a zero lag is still
added with probability bounded away from zero, so $\alpha$ must shrink
with $T_w$; if it shrinks slowly enough that its cutoff grows more slowly
than $\sqrt{T_w}$, true lags are still detected. ACLE does not fix
$\alpha$: Stage~2 picks it from a grid by validation, letting the data set
the level. A formal consistency result covering the BH step and this
data-driven $\alpha$ is left to future work.

\paragraph{From coefficients to causal edges.}
Theorem~\ref{thm:dml-var} gives valid $p$-values for each entry of
$\bar\bTheta$ (of $\bA_k(u_0)$ if $T_w=o(T^{2/3})$); BH's FDR guarantee
also needs independent or positively dependent (PRDS)
$p$-values~\citep{benjamini1995controlling,benjamini2001control}.
Under causal sufficiency, meaning every common cause of the series is
observed in $\bW_t$, a zero entry of $\bA_k$ is equivalent to Granger
non-causality at lag $k$ (Proposition~\ref{prop:granger}, given a non-degeneracy condition). A hidden
variable breaks this, because its effect is absorbed into the lag
coefficients (Figure~\ref{fig:causal-suff}): a hidden common cause of two
series creates a spurious edge, and a hidden mediator makes an indirect link look direct.
The coefficients then estimate the best linear lead--lag predictor from
the observed variables, not direct causal effects, and their $p$-values
lose the guarantee of Theorem~\ref{thm:dml-var}. This happens for VAR and ACLE-VAR, which ignore the confounders
in our benchmark, and for the ETFs, where many common causes are
unobserved (Section~\ref{sec:etf}).

\vspace{-1mm}
\section{Experiments}\label{sec:experiments}

We evaluate the full method hierarchy from Table~\ref{tab:hierarchy} on two complementary benchmarks: a synthetic
time-varying system with known ground truth (Section~\ref{sec:synthetic}), and
a financial application to ETF markets with macroeconomic confounders
(Section~\ref{sec:etf}). The synthetic benchmark is our primary experiment,
providing controlled evaluation of lag recovery, coefficient estimation, and
false positive control. The financial application demonstrates that the
framework discovers interpretable causal structure in real data.

\vspace{-1mm}
\subsection{Synthetic Benchmark}\label{sec:synthetic}

\vspace{-1mm}
\paragraph{Data-generating process.}
We construct a three-variable ($n = 3$) partially linear VARX process whose
causal structure is depicted in Figure~\ref{fig:running-dag}. The lag-1
coefficient matrix is
\[
\bA_1 = \begin{pmatrix} 0 & 0 & 0.60 \\ 0.60 & 0 & 0 \\ 0 & 0.60 & 0 \end{pmatrix},
\]
and time-varying self-effects at lags 2 and 3 decay linearly to zero:
$\bA_2[1,1](t) = \bA_2[2,2](t) = 0.30 \cdot \max(0,\, 1 - t/2000)$ and
$\bA_3[3,3](t) = 0.30 \cdot \max(0,\, 1 - t/1000)$, creating three regimes
($p^*(t) = 3, 2, 1$). Three independent AR(1) confounders
($\rho_W = 0.5$) enter through nonlinear nuisance functions:
\begin{align}
g_1(\bW) &= \tfrac{\tanh(2W_1) + W_2^2 + \sin(\pi W_3)}{2.2}, \notag\\
g_2(\bW) &= \tfrac{\mathrm{ReLU}(W_1 - 0.5) + \sin(\pi W_2) + W_3^2}{2.1}, \label{eq:nuisance-fns}\\
g_3(\bW) &= \tfrac{W_1 W_2 + \tanh(W_3) + \cos(\pi W_1)}{1.7}, \notag
\end{align}
each entering the $i$-th structural equation as $\nu\lambda \cdot g_i(\bW_{t-1})$.
Key parameters: $T = 3000$, noise scale $\nu = 0.1$, confounder strength
$\lambda = 1.0$.

\vspace{0mm}
\paragraph{Observability conditions.}
All three confounders always drive the data; the observability level
sets how many of them a method is given: \textbf{All} (3 confounders observed), \textbf{Partial-2}
(2 observed), \textbf{Partial-1} (1 observed), and \textbf{None}
(no confounders; of our methods, only VAR and ACLE-VAR apply). Below All,
confounders that drive the data are left out of the model, which violates
causal sufficiency (Section~\ref{sec:theory}), so the FDR guarantee for
causal edges no longer applies. The main text therefore reports All, and
keeps VAR and ACLE-VAR (None) only as the reference for ignoring the
confounders altogether; Appendix~\ref{app:partial-obs} reports Partial-2
and Partial-1.

\vspace{0mm}
\paragraph{Methods.}
We evaluate all six method variants from Table~\ref{tab:hierarchy},
plus PCMCI (linear test, with an OLS refit) and VAR-LiNGAM, refit in every
window (Appendix~\ref{app:cd-baselines}).
For DML methods, we compare five first-stage learners: Extra Trees,
Random Forest, LightGBM, XGBoost, and TabPFN-2.5~\citep{hollmann2025tabpfn,grinsztajn2025tabpfn}.
TabPFN is a transformer pretrained on millions of synthetic tabular
datasets that performs inference via in-context learning,
achieving strong performance on small tabular datasets without
hyperparameter tuning, which is the regime of our DML cross-fitting
folds. We set $p_{\max} = 5$, a value larger than the true maximum lag of the data generating procedure.

\vspace{-1mm}
\paragraph{Metrics.}
We evaluate five complementary metrics:
\begin{enumerate}
    \item \textbf{Lag RMSE}: $\sqrt{T_{\mathrm{out}}^{-1} \sum_{t}
    (\hat{p}_t - p^*(t))^2}$, measuring how well the method tracks the
    true time-varying lag order. For RMSE-based methods, $\hat{p}_t$ is the
    validation-selected lag.
    \item \textbf{Coefficient MAE} (nonzero edges): average
    $|\hat{\bA}_k[i,j] - \bA_k^*[i,j](t)|$ over true nonzero entries,
    measuring signal recovery.
    \item \textbf{Forecast MAE}: out-of-sample prediction error.
    \item \textbf{FDR} and \textbf{Power}: the share of edges in
    $\hat{\mathcal{G}}_t$ (BH at level $q=0.05$ over all $n^2\hat{p}_t$
    entries) that are false, and the share of true edges found. An edge is
    true if its window-averaged coefficient is nonzero, since this
    average is what the estimates target (Theorem~\ref{thm:dml-var}).
\end{enumerate}

\vspace{-1mm}
\begin{table}[t]
    \centering
    \caption{Synthetic benchmark results (All confounders observed unless
    noted). ET: Extra Trees; LGBM: LightGBM; TP: TabPFN. FDR: BH at
    $q=0.05$. Bold: best per column; lower is better except for Power.}\label{tab:toy-results}
    \small
    \setlength{\tabcolsep}{1.7pt}
    \begin{tabular}{@{}lccccc@{}}
        \toprule
        \textbf{Method} &
        \textbf{Lag} &
        \textbf{Coeff.} &
        \textbf{Fcst.} &
        \textbf{FDR} &
        \textbf{Power} \\
        &
        \textbf{RMSE} &
        \textbf{MAE} &
        \textbf{MAE} &
        &
        $\uparrow$ \\
        \midrule
        \multicolumn{6}{@{}l}{\emph{No confounders}} \\
        VAR
            & 1.475 & 0.060 & 0.110 & 0.129 & 0.655  \\
        ACLE-VAR
            & 1.006 & 0.066 & 0.109 & 0.147 & 0.661  \\
        \addlinespace
        \multicolumn{6}{@{}l}{\emph{All confounders, no DML}} \\
        VARX
            & 1.239 & 0.063 & 0.109 & 0.076 & 0.650  \\
        ACLE-VARX
            & 1.099 & 0.069 & 0.108 & 0.080 & 0.651  \\
        \addlinespace
        \multicolumn{6}{@{}l}{\emph{All confounders, causal discovery}} \\
        PCMCI + OLS
            & 1.214 & 0.065 & 0.108 & \textbf{0.045} & \textbf{0.685}  \\
        VAR-LiNGAM
            & 1.132 & 0.070 & 0.107 & 0.187 & 0.641  \\
        \addlinespace
        \multicolumn{6}{@{}l}{\emph{All confounders, DML}} \\
        OR-VARX, ET
            & 1.212 & 0.063 & 0.095 & 0.094 & 0.649  \\
        OR-VARX, LGBM
            & 1.288 & 0.067 & 0.102 & \textbf{0.045} & 0.666  \\
        OR-VARX, TP
            & 1.159 & 0.067 & \textbf{0.091} & 0.056 & 0.660  \\
        \addlinespace
        ORACLE-VARX, ET
            & 1.073 & 0.063 & 0.094 & 0.101 & 0.652  \\
        ORACLE-VARX, LGBM
            & \textbf{0.962} & \textbf{0.058} & 0.101 & 0.047 & 0.673  \\
        ORACLE-VARX, TP
            & 0.981 & 0.060 & \textbf{0.091} & 0.066 & 0.673  \\
        \bottomrule
    \end{tabular}
\end{table}

\paragraph{Results.}
\emph{Finding 1: ACLE consistently improves lag tracking.}
Adaptive lag selection reduces lag RMSE regardless of the
base method (Table~\ref{tab:toy-results}). The improvement is 32\% for VAR, 11\% for VARX, and 11\% for OR-VARX with Extra Trees.
It holds for every learner and observability level
(Table~\ref{tab:full-results}; Figures~\ref{fig:edge-traj}--\ref{fig:lag-recovery}).

\emph{Finding 2: Adjusting for confounders lowers the FDR, by an amount that depends on the learner.}
Ignoring the confounders, BH edge discovery overshoots its target of
$0.05$: FDR is 0.129 for VAR. Most false edges are lag-1 self-loops,
which the autocorrelated confounders induce: each series appears to
predict itself.
Adjusting for the confounders in VARX and DML cuts FDR to 0.045--0.080 at similar
power (Table~\ref{tab:toy-results}), except with ET (0.094--0.101). Our theory needs the first-stage fits
to be accurate enough (condition~(N), Appendix~\ref{app:assumptions}). We
suspect that ET, with our hyperparameters, falls short of this: the
confounding it leaves in the residuals inflates the edge tests.

\emph{Finding 3 (tentative): TabPFN is a strong first-stage learner without tuning.}
Among the five DML learners at full observability, ORACLE-VARX with TabPFN
has the best forecast MAE (0.091), the best edge power together with LightGBM (both 0.673),
the second-lowest FDR (0.066, after LightGBM's 0.047; the other three
learners are at 0.10--0.12), and the
second-best lag RMSE (0.981, after LightGBM's 0.962;
Table~\ref{tab:full-results}). This validates TabPFN as an effective
``plug-and-play'' nuisance function estimator for DML, requiring no
hyperparameter tuning.

\emph{Finding 4: PCMCI finds edges as well as ours, but tracks lags and forecasts worse.}
PCMCI has the lowest FDR (0.045) and highest power (0.685), close to
ORACLE-VARX with LGBM (0.047, 0.673). ORACLE-VARX (LGBM, TP) is better
on lag RMSE (0.96--0.98 vs 1.21), coefficient MAE (0.058--0.060 vs
0.065) and forecast MAE (0.091--0.101 vs 0.108). This PCMCI uses a
linear test. We did not run its nonlinear version, which suits nonlinear
confounding like ours: it would take about 30 hours per configuration on
9 cores, against about 1 minute for the linear one
(Appendix~\ref{app:compute}).

\begin{figure}[t]
    \centering
    \includegraphics[width=\linewidth]{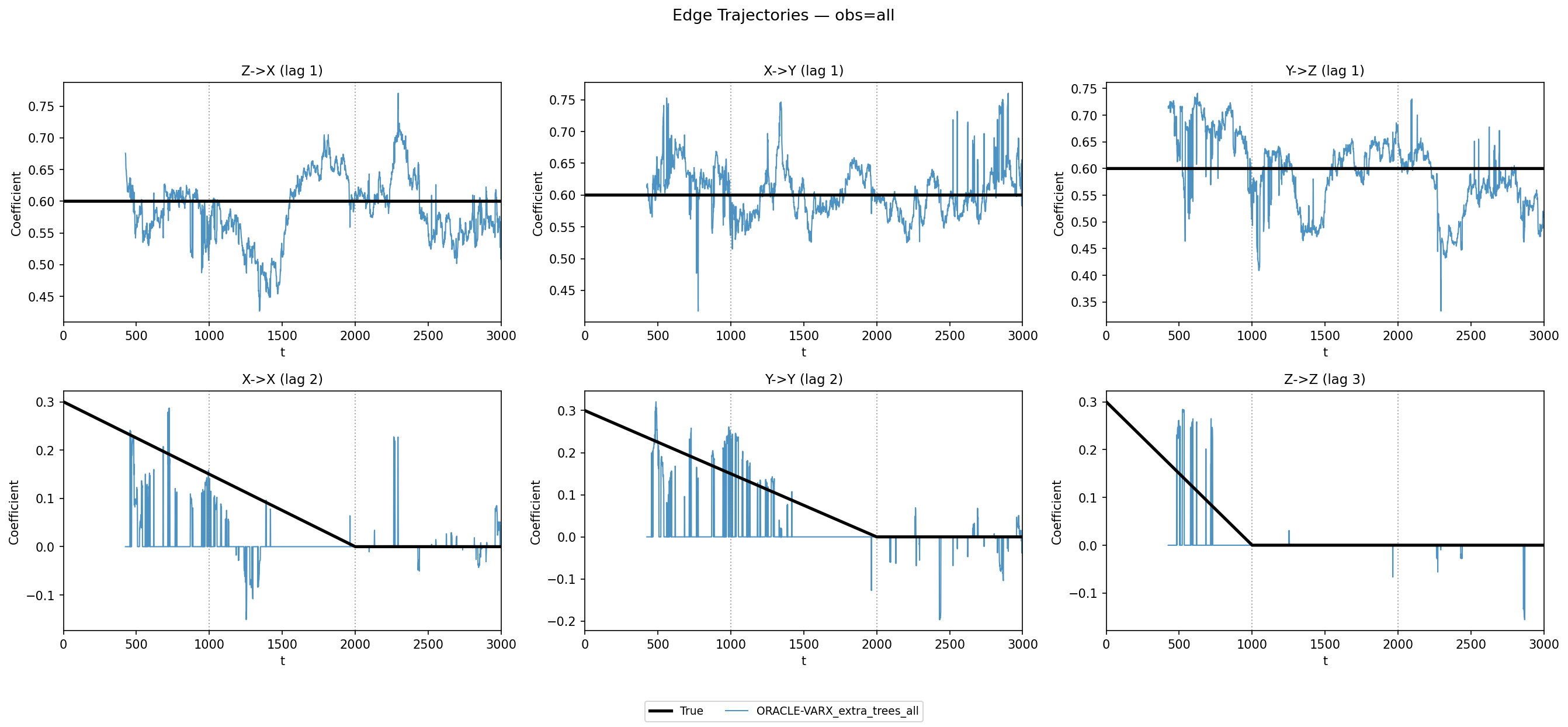}
    \caption{Estimated vs.\ true edge coefficients over time for
    ORACLE-VARX (Extra Trees, all confounders). Each panel shows one
    edge; the black line is the true coefficient $\bA_k^*[i,j](t)$ and
    the colored line is the rolling estimate $\hat{\bA}_k[i,j](t)$.
    Vertical dashed lines mark regime boundaries ($t = 1000$, $t = 2000$).}\label{fig:edge-traj}
\end{figure}

\begin{figure}[t]
    \centering
    \begin{minipage}[t]{0.48\linewidth}
        \centering
        \includegraphics[width=\linewidth]{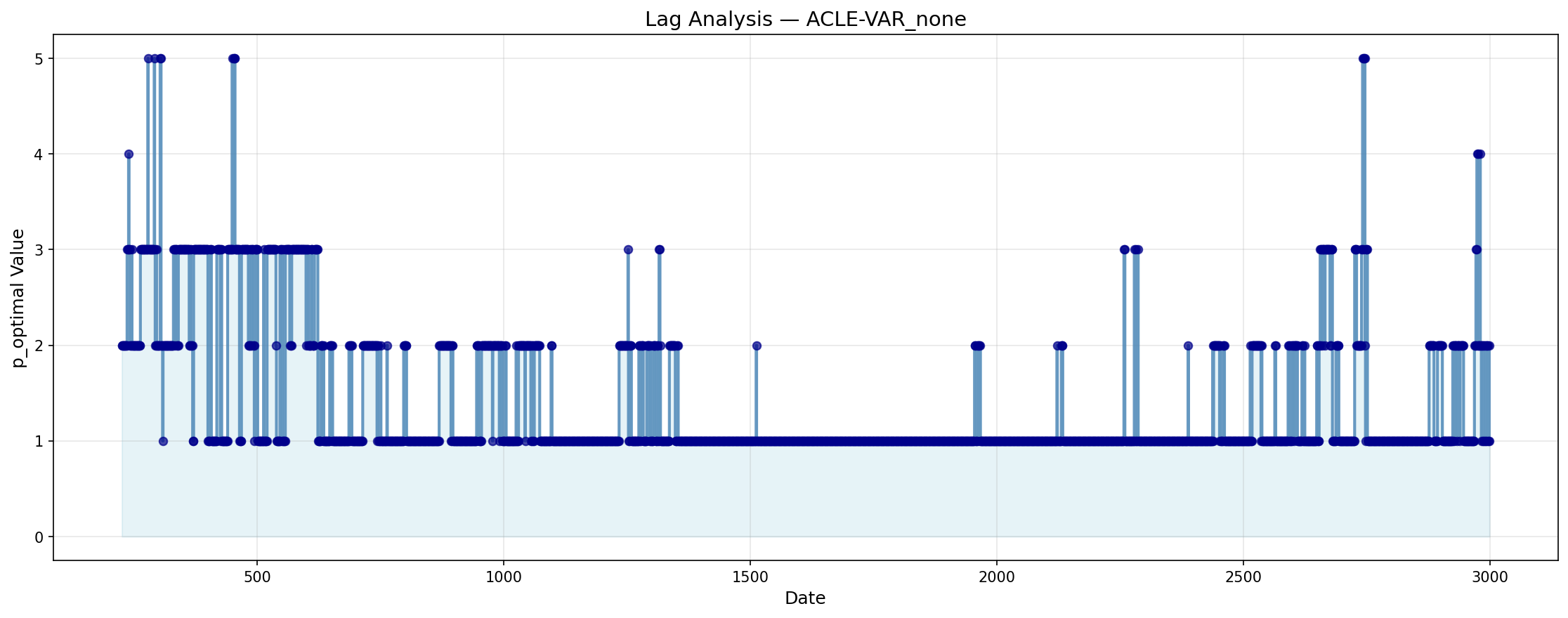}
    \end{minipage}\hfill
    \begin{minipage}[t]{0.48\linewidth}
        \centering
        \includegraphics[width=\linewidth]{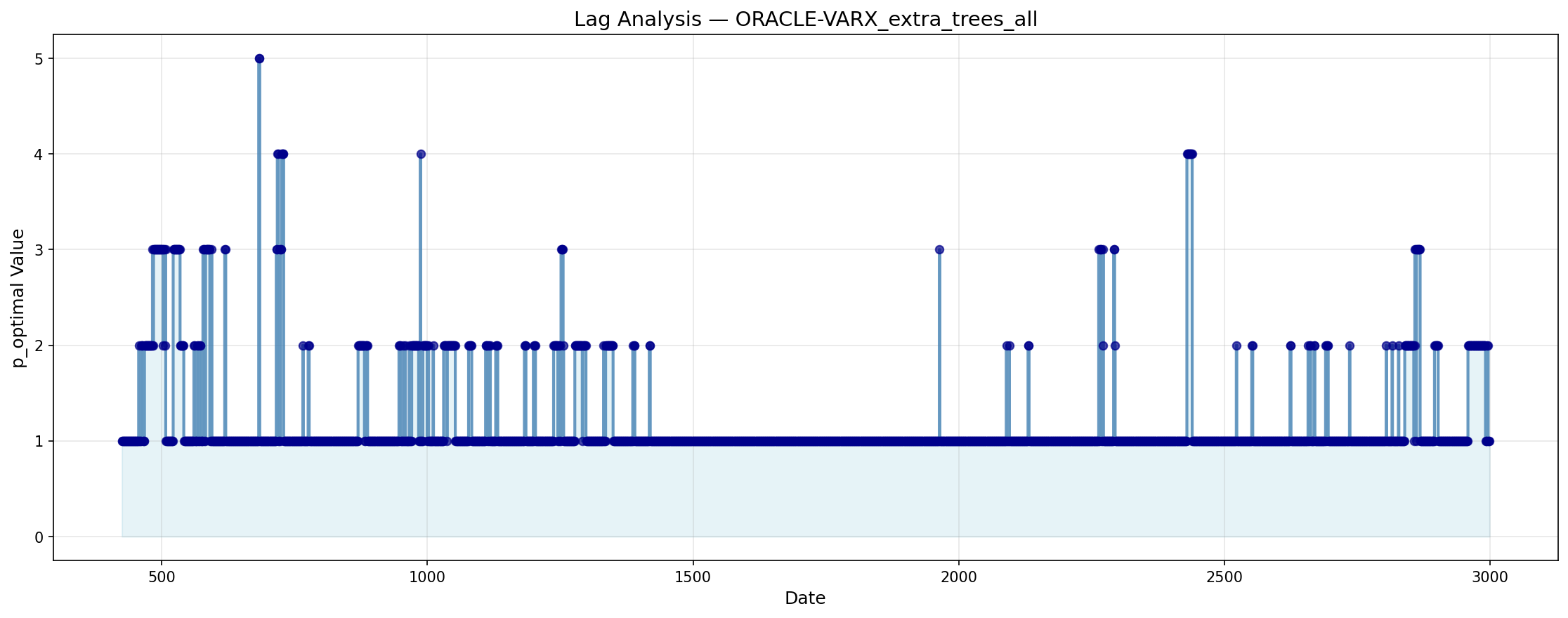}
    \end{minipage}
    \caption{Lag tracking: estimated $\hat{p}_t$ (blue) vs.\ true $p^*(t)$
    (black step function) over time. \emph{Left}: ACLE-VAR (no confounders).
    \emph{Right}: ORACLE-VARX (Extra Trees, all confounders). The lag RMSE of every method is in
    Table~\ref{tab:full-results}.}\label{fig:lag-recovery}
\end{figure}

\vspace{-2mm}

\subsection{Financial ETF Application}\label{sec:etf}

\vspace{-2mm}
\paragraph{Data.}
We apply ORACLE-VARX to daily open-to-close returns of nine U.S.\ sector ETFs
(XLY, XLP, XLE, XLF, XLV, XLI, XLB, XLK, XLU), covering January 2000
to December 2020 ($T \approx 5{,}200$ trading days). Ten macroeconomic
confounders serve as exogenous variables: VIX, the Federal Funds Rate,
5-Year inflation expectations, WTI crude oil, the Economic Policy
Uncertainty index, BBB corporate spreads, 10-Year real yield,
the broad U.S.\ Dollar Index, the emerging market Dollar Index, and
gold volatility. The model is estimated with a rolling window of
504~days ($\approx 2$~years) and $p_{\max} = 10$; see
Table~\ref{tab:config} in Appendix~\ref{app:implementation} for full details.

\vspace{-2mm}

\paragraph{Causal structure discovery.}
Unlike the synthetic benchmark, no ground truth is available for the ETF
application. We therefore focus on qualitative properties of the discovered
causal structure.

\emph{The detected lag order tracks market regimes.}
Figure~\ref{fig:etf-lag} plots the selected lag order $\hat{p}_t$
against the 21-day realized volatility of the S\&P~500 (SPY), the
standard deviation of its daily returns over the past 21 days. In calm markets (e.g., 2004--2006, 2013--2014),
$\hat{p}_t \approx 1$: lead--lag relationships are weak and
short-lived. During the 2008 financial crisis and the COVID-19
shock in early 2020, $\hat{p}_t$ spikes to 3--5: the lag structure
elongates as shocks propagate through the financial system, consistent with the
well-documented increase in cross-market equity correlations in bear
markets~\citep{longin2001extreme}.

\begin{figure}[t]
    \centering
    \includegraphics[width=\linewidth]{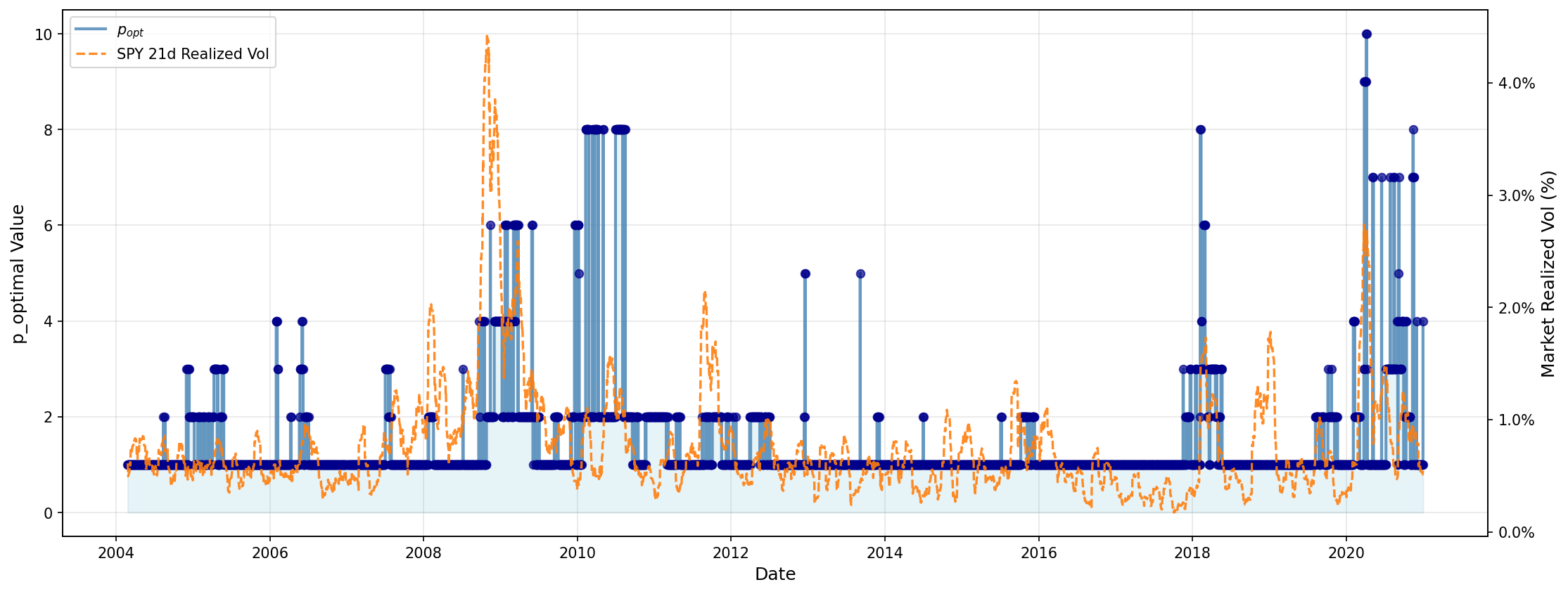}
    \caption{ETF application (ORACLE-VARX, Extra Trees, \texttt{all10}
    confounders). Blue step function: adaptively selected lag order
    $\hat{p}_t$ over time (2004--2020). Orange dashed line: 21-day
    realized volatility of the S\&P~500 (right axis). The lag order spikes
    during the 2008 financial crisis and COVID-2020, rising with the
    volatility.}\label{fig:etf-lag}
\end{figure}

\vspace{-1mm}
\paragraph{Portfolio performance.}
Table~\ref{tab:etf-sharpe} reports Sharpe ratios of a long-short
portfolio weighted by the forecasts (Appendix~\ref{app:etf-trading}).
ACLE generally raises the Sharpe ratio, and the best portfolio comes from
ACLE-VAR, with neither DML nor confounders. On these data, causal
sufficiency is badly violated: each sector ETF pools many stocks driven by
many common factors, most of them unobserved, and, as in Figure~\ref{fig:causal-suff}b, they create lag links that are
not causal but still predict. ACLE tunes its lag order on validation
forecasts, so it keeps these links with a short lag order. DML, in turn,
no longer yields causal coefficients once sufficiency fails, and its
first stage adds parameters to fit from 504 noisy days.

\begin{table}[t]
    \centering
    \caption{ETF Sharpe ratios (weighted strategy, 2004--2020) by
    confounder set; VAR and ACLE-VAR use none.}\label{tab:etf-sharpe}
    \small
    \begin{tabular}{@{}lccc@{}}
        \toprule
        \textbf{Method} & \texttt{vix} & \texttt{macro5} & \texttt{all10} \\
        \midrule
        VAR & \multicolumn{3}{c}{1.08} \\
        ACLE-VAR & \multicolumn{3}{c}{\textbf{1.32}} \\
        \addlinespace
        VARX & 1.11 & 0.43 & 0.75 \\
        ACLE-VARX & 1.07 & 0.81 & 0.86 \\
        \addlinespace
        OR-VARX, ET & 0.71 & 0.71 & 0.67 \\
        ORACLE-VARX, ET & 0.83 & 0.85 & 0.68 \\
        ORACLE-VARX, TP & 0.70 & 0.74 & 0.36 \\
        \bottomrule
    \end{tabular}
\end{table}

\vspace{-1mm}

\Needspace*{6\baselineskip}
\section{Conclusion}\label{sec:conclusion}
\vspace{-1mm}

We introduced ORACLE-VARX, a framework for causal lag discovery in
multivariate time series whose confounders act nonlinearly and whose lag
structure drifts over time. It combines (i) double machine learning, which
removes the confounders' nonlinear effects before the VAR coefficients are
estimated, (ii) ACLE, which picks the lag order in each rolling window by
sequential significance testing tuned on validation data, and (iii)
Benjamini--Hochberg edge selection. We proved that in each rolling window
the debiased coefficients are asymptotically normal around a
window-averaged target, with valid standard errors, even though the
coefficients drift; under causal sufficiency, a zero coefficient means
Granger non-causality. We validated it on synthetic data with known
ground truth and on nine U.S.\ sector ETFs.

\vspace{-2mm}
\section{Future Work}\label{sec:futurework}

Three directions stand out. First, a formal consistency result for
ACLE, covering the BH step and the validation-chosen level
(Section~\ref{sec:theory}). Second, weakening causal sufficiency: hidden
confounders turn lag coefficients into predictive rather than causal
quantities (Figure~\ref{fig:causal-suff}), so richer, high-dimensional
confounder sets, or tests that flag hidden confounding, would let more of
the discovered edges be read causally. Third, extending the model to a structural VAR (SVAR) with
contemporaneous effects~\citep{lutkepohl2005new} could distinguish direct
lagged effects from effects transmitted through same-time links, at the
cost of extra identifying assumptions.

\clearpage
\subsection*{AI use statement}

In this work, we used generative AI tools to state the main theorem and its assumptions precisely and to write its proof (Appendix~\ref{app:proofs}), to implement the method and the experiments, and to give feedback on the experimental design and its engineering, such as reusing repeated model fits.
We have not used generative AI tools to propose the research question or the method, to generate the synthetic data (the authors wrote the functions that simulate it), or to clean the financial data (done before this project used coding agents), and translation and qualitative data analysis are not applicable to this work.
We asked AI tools to comment on the results, but their comments listed details without picking out the main story, so the interpretations in the paper are our own.
Additionally, we used generative AI tools to draft parts of the paper, which the authors then heavily rewrote, to create Figures~\ref{fig:running-dag}, \ref{fig:acle-dag} and~\ref{fig:causal-suff}, to edit the text for readability, and to find related work and check every reference against its source.
We have reviewed all AI-assisted work.
One author, trained in mathematical analysis, checked every step and bound of the proofs by hand; Appendix~\ref{app:ai-use} describes this workflow with an example. The authors read the main mathematical and statistical functions of the code and checked that the outputs are correct or as expected.
We take responsibility for the final content of this work, including text, claims, or artifacts produced with the aid of generative AI.

\bibliography{references}

\clearpage
\appendix
\counterwithin*{equation}{section}\renewcommand{\theequation}{\thesection.\arabic{equation}}
\counterwithin*{theorem}{section}\renewcommand{\thetheorem}{\thesection.\arabic{theorem}}
\thispagestyle{empty}

\onecolumn
\aistatstitle{Causal Lag Structure Discovery in Confounded Time Series\\via Orthogonalized Adaptive Estimation:\\Supplementary Materials}

Our code, the synthetic and public data, and the per-run results are available at \url{https://github.com/HK-Tan/ORACLE-VARX}. Two licensed inputs, the CRSP returns and the ICE BofA spread, are not included; the repository's \texttt{README.md} explains how to obtain them.
\section{Full Proofs}\label{app:proofs}

\subsection{Setting, notation and assumptions}\label{app:assumptions}

\paragraph{Roadmap.}
This subsection fixes the notation and states the assumptions used in all later proofs. It then proves the basic facts that every later proof uses:
\begin{itemize}[nosep]
\item Lemma~\ref{lem:A-moments} bounds the moments of every random vector that appears. It comes first because the other lemmas need its integrability and its constant $K_J$.
\item Lemma~\ref{lem:A-identities} rewrites the cross-fitted residuals as ``true residual plus nuisance error''. This is the algebra behind every later decomposition.
\item Lemma~\ref{lem:A-target} collects the facts we need about the target $\bar\bTheta$.
\item \emph{Limits} fixes the notation $o_p$, $O_p$ and $\convd$; standard matrix and probability rules are named where they are used.
\item Lemma~\ref{lem:A-lln} is the one law of large numbers used throughout.
\end{itemize}

\paragraph{Data and filtration.}
We observe a time series $(\bY_t,\bW_t)$, $t=1,\ldots,T$, of length $T$. Here $\bY_t\in\R^n$ collects the $n$ endogenous series, whose lagged effects on each other we estimate, and $\bW_t\in\R^d$ collects the $d$ observed exogenous confounders (Section~\ref{sec:setup}). All limits are as $T\to\infty$. The distribution may change with $T$ (a \emph{triangular array}); we suppress $T$ in the notation. Write $X_t:=(\bY_t^\top,\bW_t^\top)^\top$. Let $(\mathcal F_t)$ be a filtration to which the data are adapted, i.e.\ $X_t$ is $\mathcal F_t$-measurable for every $t$. Think of $\mathcal F_t$ as the information available at time $t$. For example, $\mathcal F_t=\sigma(X_s:s\le t)$, the observed past, as in the main text.

\paragraph{The model.}
The maximal lag $p:=p_{\max}$ is fixed. For $t>p$, the endogenous series follow the partially linear VARX \eqref{eq:plvar} of the main text, written with all $p$ lags and with coefficients that may drift over time:
\begin{equation}\label{eq:A-model-sum}
  \bY_t=\sum_{k=1}^p\bA_k(t/T)\,\bY_{t-k}+g_t(\bW_{t-1},\ldots,\bW_{t-p})+\beps_t .
\end{equation}
The three terms are:
\begin{itemize}[nosep]
\item $\bA_k(t/T)\in\R^{n\times n}$, the \emph{lag-$k$ coefficient matrix}. Its entry $\bA_k[i,j]$ is the effect of $(\bY_{t-k})_j$ on $(\bY_t)_i$. These matrices are what we estimate. Theorem~\ref{thm:dml-var} does not need a true lag order: lags above it simply have $\bA_k=0$.
\item $g_t$, an unknown and possibly nonlinear effect of the lagged confounders.
\item $\beps_t\in\R^n$, the innovation. Assumption (M) below says that it is unpredictable from the past.
\end{itemize}
\emph{Why $\bA_k(t/T)$.} The coefficients may drift over time. We index the drift by \emph{rescaled time} $u=t/T\in[0,1]$: each $\bA_k$ is a function $\bA_k:[0,1]\to\R^{n\times n}$, and the coefficient at time $t$ is $\bA_k(t/T)$. As $T$ grows, neighbouring times $t$ and $t+1$ differ by only $1/T$ in $u$, so the coefficients change slowly from step to step.

\paragraph{Block form of the model.}
We now write \eqref{eq:A-model-sum} as one regression. Set the \emph{treatment} $\bT_t$ (the stacked lags of $\bY$, whose effects we estimate) and the \emph{control} $\bC_t$ (the stacked lags of the confounders $\bW$) as in \eqref{eq:stacked} of the main text:
\begin{equation}\label{eq:A-TC}
  \bT_t:=\begin{pmatrix}\bY_{t-1}\\ \vdots\\ \bY_{t-p}\end{pmatrix}\in\R^{np},
  \qquad
  \bC_t:=\begin{pmatrix}\bW_{t-1}\\ \vdots\\ \bW_{t-p}\end{pmatrix}\in\R^{dp}.
\end{equation}
Both are functions of $X_{t-1},\ldots,X_{t-p}$, hence $\mathcal F_{t-1}$-measurable. Stack the $p$ coefficient matrices into one block matrix,
\begin{equation}\label{eq:A-Theta}
  \bTheta(u):=\begin{pmatrix}\bA_1(u)^\top\\ \vdots\\ \bA_p(u)^\top\end{pmatrix}\in\R^{np\times n},
  \qquad \bTheta_t:=\bTheta(t/T).
\end{equation}
Block $k$ of $\bTheta$ (rows $(k-1)n+1,\ldots,kn$) is $\bA_k^\top$; we call it \emph{lag block $k$}. In entries, $\bTheta_{(k-1)n+j,\,i}=\bA_k[i,j]$. Block multiplication of \eqref{eq:A-Theta} and \eqref{eq:A-TC} gives the sum in \eqref{eq:A-model-sum} as one product,
\begin{equation}\label{eq:A-VARpart}
  \bTheta_t^\top\bT_t
  =\begin{pmatrix}\bA_1(t/T)&\cdots&\bA_p(t/T)\end{pmatrix}
   \begin{pmatrix}\bY_{t-1}\\ \vdots\\ \bY_{t-p}\end{pmatrix}
  =\sum_{k=1}^p\bA_k(t/T)\,\bY_{t-k}.
\end{equation}
Writing $g_t(\bC_t)$ for $g_t(\bW_{t-1},\ldots,\bW_{t-p})$, the model \eqref{eq:A-model-sum} becomes
\begin{equation}\label{eq:A-model}
  \bY_t=\bTheta_t^\top\bT_t+g_t(\bC_t)+\beps_t ,
\end{equation}
a linear regression on the treatment plus a nonlinear function of the control. This is \eqref{eq:dml-Y} of the main text, with drifting coefficients.

\emph{Equation $i$.} Since $\bY_t\in\R^n$, \eqref{eq:A-model} is $n$ scalar regressions, one per coordinate. \emph{Equation $i$} is coordinate $i$:
\begin{equation}\label{eq:A-eq-i}
  (\bY_t)_i=(\bTheta_t)_{\cdot i}^\top\bT_t+\bigl(g_t(\bC_t)\bigr)_i+(\beps_t)_i,
\end{equation}
where $\bTheta_{\cdot i}\in\R^{np}$ is \emph{column} $i$ of $\bTheta$, the coefficients of all $np$ regressors for the outcome $(\bY_t)_i$. So in an entry $\bTheta_{ri}$, the column $i$ names the equation (which series is the outcome), and the row $r=(k-1)n+j$ names the regressor (series $j$ at lag $k$). All $n$ equations share the same regressors $\bT_t$.

\paragraph{Nuisance functions and residuals.}
The nuisance functions $g_{T,t}:\R^{dp}\to\R^{np}$ and $g_{Y,t}:\R^{dp}\to\R^n$ are the best predictions of $\bT_t$ and $\bY_t$ from the controls alone, as $g_T$ and $g_Y$ in Section~\ref{sec:setup}:
\begin{equation}\label{eq:A-nuisance}
  g_{T,t}(\bC_t)=\E[\bT_t\mid\bC_t],\qquad g_{Y,t}(\bC_t)=\E[\bY_t\mid\bC_t]\qquad\text{almost surely}.
\end{equation}
Both conditional expectations exist because $\bT_t$ and $\bY_t$ are integrable under (D)(i) below. Each is a function of $\bC_t$ because it depends on the data only through $\bC_t$.

The \emph{oracle regressor residual} is the part of the treatment that the controls cannot predict; it is $\bV_t$ in \eqref{eq:dml-T} of the main text. We define it together with its second-moment matrix:
\begin{equation}\label{eq:A-VJ}
  \bV_t:=\bT_t-g_{T,t}(\bC_t),\qquad J_t:=\E[\bV_t\bV_t^\top]\in\R^{np\times np}.
\end{equation}
The entries of $J_t$ are finite by Lemma~\ref{lem:A-moments} below.

\paragraph{Window.}
Fix $u_0\in(0,1]$. Let $t_0:=\lfloor u_0T\rfloor$ and $\mathcal W:=\{t_0-T_w+1,\ldots,t_0\}$, a block of $T_w$ consecutive times. Unless said otherwise, every sum $\sum_t$ or $\sum_{\mathcal W}$ runs over $t\in\mathcal W$. Two facts about the window are used repeatedly.
\begin{enumerate}[label=(\alph*)]
\item \emph{The window eventually lies in $\{p+1,\ldots,T\}$.} Since $u_0T-1<t_0\le u_0T\le T$,
\begin{equation*}
  t_0-T_w+1>u_0T-T_w=T\Bigl(u_0-\frac{T_w}{T}\Bigr),
\end{equation*}
and $T_w/T\to0$ by (W) (see ``What (W) does'' after the assumptions). Since $u_0>0$, for all $T$ with $T_w/T\le u_0/2$ and $T>2p/u_0$ we get
\begin{equation*}
  t_0-T_w+1>T\Bigl(u_0-\frac{T_w}{T}\Bigr)\ge T\,\frac{u_0}{2}>p .
\end{equation*}
Both conditions hold for all large $T$. All statements below are for such $T$, so $\bT_t$ and $\bC_t$ are defined for every $t\in\mathcal W$.
\item \emph{Every window time is close to $u_0$.} For $t\in\mathcal W$ we have $t_0-T_w+1\le t\le t_0$, so
\begin{equation*}
  0\le u_0T-t_0\le u_0T-t\le u_0T-(t_0-T_w+1)<1+T_w-1=T_w .
\end{equation*}
Dividing by $T$,
\begin{equation}\label{eq:A-window}
  \Bigl\lvert\frac tT-u_0\Bigr\rvert<\frac{T_w}{T}\qquad\text{for every }t\in\mathcal W .
\end{equation}
Also, two window times $s,t$ satisfy $\lvert s-t\rvert\le T_w-1$.
\end{enumerate}

\paragraph{Cross-fitting.}
Each $t\in\mathcal W$ has fitted nuisance functions $\hat g_{T,(t)}:\R^{dp}\to\R^{np}$ and $\hat g_{Y,(t)}:\R^{dp}\to\R^n$, produced by a learner trained on data before the test block that contains $t$. The proofs never use how they were trained; they use only assumption (N) below, and that each $\hat g_{T,(t)}(\bC_t)$ and $\hat g_{Y,(t)}(\bC_t)$ is a random vector (measurable), so that $\hat J$, $\hbTheta$ and the events built from them are well defined. The cross-fitted residuals are the residuals \eqref{eq:residuals} of the main text, with the fold of $t$ made explicit:
\begin{equation}\label{eq:A-resid}
  \tilde\bT_t:=\bT_t-\hat g_{T,(t)}(\bC_t),\qquad \tilde\bY_t:=\bY_t-\hat g_{Y,(t)}(\bC_t),
\end{equation}
and the nuisance errors, evaluated at the data point, are
\begin{equation}\label{eq:A-Delta}
  \Delta_{T,t}:=\hat g_{T,(t)}(\bC_t)-g_{T,t}(\bC_t)\in\R^{np},\quad
  \Delta_{Y,t}:=\hat g_{Y,(t)}(\bC_t)-g_{Y,t}(\bC_t)\in\R^{n},\quad
  \Delta_{g,t}:=\bTheta_t^\top\Delta_{T,t}-\Delta_{Y,t}\in\R^{n}.
\end{equation}

\paragraph{Estimator.}
Recall the estimator \eqref{eq:theta-hat} of the main text, OLS of $\tilde\bY_t$ on $\tilde\bT_t$ over the window:
\[
  \hbTheta=\Bigl(\sum_{t\in\mathcal W}\tilde\bT_t\tilde\bT_t^\top\Bigr)^{-1}\Bigl(\sum_{t\in\mathcal W}\tilde\bT_t\tilde\bY_t^\top\Bigr).
\]
Dividing both sums by $T_w$ (the factors cancel) gives the form we use. Let
\begin{equation}\label{eq:A-thetahat}
  \hat J:=\frac1{T_w}\sum_{t\in\mathcal W}\tilde\bT_t\tilde\bT_t^\top,\qquad
  \hbTheta:=\hat J^{-1}\,\frac1{T_w}\sum_{t\in\mathcal W}\tilde\bT_t\tilde\bY_t^\top\quad\text{on the event }\{\hat J\text{ invertible}\}.
\end{equation}
If $\hat J$ is not invertible (for example, if the residualised lags happen to be exactly collinear in this sample), the formula has no value. To speak of probabilities and limits of $\hbTheta$, it must be a random variable, that is, have a value at every outcome, so there we set $\hbTheta:=0$. This is a placeholder, not an estimate, and any other value would do: $P(\hat J\text{ invertible})\to1$ (Lemma~\ref{lem:B-gram}), so the placeholder never affects a limit.

\paragraph{Standard errors and $z$-statistics.}
Recall the standard errors \eqref{eq:ols-se} of the main text, which are what the code computes: for equation $i$, $\widehat{\Var}(\hbTheta_{\cdot i})=\hat\sigma_i^2\bigl(\sum_t\tilde\bT_t\tilde\bT_t^\top\bigr)^{-1}$. In detail, let $\hat e_t:=\tilde\bY_t-\hbTheta^\top\tilde\bT_t\in\R^n$ be the residual and, for $T_w>np$ (true for all large $T$, since $T_w\to\infty$ by (W)),
\begin{equation}\label{eq:A-sigmahat}
  \hat\sigma_i^2:=\frac1{T_w-np}\sum_{t\in\mathcal W}(\hat e_t)_i^2 .
\end{equation}
On the event $\{\hat J\text{ invertible}\}$ we have $\sum_{t\in\mathcal W}\tilde\bT_t\tilde\bT_t^\top=T_w\hat J$, hence $\bigl(\sum_{t\in\mathcal W}\tilde\bT_t\tilde\bT_t^\top\bigr)^{-1}=\hat J^{-1}/T_w$, and
\begin{equation}\label{eq:A-se}
  \hat s_{ri}:=\Bigl(\hat\sigma_i^2\,\Bigl[\Bigl(\sum_{t\in\mathcal W}\tilde\bT_t\tilde\bT_t^\top\Bigr)^{-1}\Bigr]_{rr}\Bigr)^{1/2}
  =\Bigl(\frac{\hat\sigma_i^2\,[\hat J^{-1}]_{rr}}{T_w}\Bigr)^{1/2},
  \qquad
  z_{ri}:=\frac{\hbTheta_{ri}}{\hat s_{ri}} .
\end{equation}
This is the usual OLS standard error for column $i$ of $\hbTheta$, i.e.\ for equation $i$ in \eqref{eq:A-eq-i}: writing $X$ for the $T_w\times np$ matrix with rows $\tilde\bT_t^\top$, so that $X^\top X=\sum_{t\in\mathcal W}\tilde\bT_t\tilde\bT_t^\top$, it is the square root of entry $(r,r)$ of $\hat\sigma_i^2(X^\top X)^{-1}$. We do not assume that it is correct here; Lemma~\ref{lem:C-se} shows that it has the right scale.
(If $\hat J$ is singular or $\hat s_{ri}=0$, set $\hat s_{ri}:=1$ and $z_{ri}:=0$. Lemmas~\ref{lem:B-gram} and~\ref{lem:C-se} show that this event has probability tending to $0$, so the convention never affects a limit.)

\paragraph{Target.}
Recall the window target \eqref{eq:target} of the main text, $\bar\bTheta=\bigl(\sum_{t\in\mathcal W}J_t\bigr)^{-1}\sum_{t\in\mathcal W}J_t\bTheta_t$, a $J_t$-weighted average of the coefficients over the window. Dividing both sums by $T_w$ gives the form we use:
\begin{equation}\label{eq:A-target}
  \bar J:=\frac1{T_w}\sum_{t\in\mathcal W}J_t,\qquad
  \bar\bTheta:=\bar J^{-1}\,\frac1{T_w}\sum_{t\in\mathcal W}J_t\bTheta_t,\qquad
  D_t:=\bTheta_t-\bar\bTheta\in\R^{np\times n}.
\end{equation}
Lemma~\ref{lem:A-target}(i) shows that $\bar J$ is invertible, so $\bar\bTheta$ is well defined. Both $\bar\bTheta$ and $D_t$ are deterministic.

\paragraph{Norms.}
$\lvert\cdot\rvert$ is the Euclidean norm of a vector, $\lVert\cdot\rVert$ the operator norm of a matrix, and $\lvert\cdot\rvert_F$ the Frobenius norm. $e_r$ is the $r$-th unit vector, and $\lambda_{\min}(M)$ is the smallest eigenvalue of a symmetric matrix $M$. For a random vector $Z$ we write $M_8(Z):=(\E\lvert Z\rvert^8)^{1/8}$.

\paragraph{Assumptions.}
Assumption~\ref{asm:main} of the main text consists of (M), (S), (W) and the regularity conditions (D), (H), (N). They are stated precisely here.
\begin{enumerate}[leftmargin=3.2em]
\item[\textbf{(M)}] \emph{Model.} For $p<t\le T$, the model \eqref{eq:A-model} holds (equivalently \eqref{eq:A-model-sum}), where $g_t:\R^{dp}\to\R^n$ is Borel, $\beps_t$ is integrable, and $\E[\beps_t\mid\mathcal F_{t-1}]=\bzero$ a.s. In words: given the past, the innovation is unpredictable.
\item[\textbf{(S)}] \emph{Smoothness and identification.} There are constants $L<\infty$ and $c_J>0$, not depending on $t$ or $T$, such that (i) $\bTheta$ is $L$-Lipschitz: $\lVert\bTheta(u)-\bTheta(v)\rVert\le L\lvert u-v\rvert$ for all $u,v\in[0,1]$; and (ii) every $J_t$ is uniformly positive definite: $\lambda_{\min}(J_t)\ge c_J$ for all $t$ and $T$.
\item[\textbf{(W)}] \emph{Window size.} $T_w\to\infty$ and $T_w^{3/2}/T=O(1)$, i.e.\ $T_w=O(T^{2/3})$.
\item[\textbf{(D)}] \emph{Moments and short memory.} (i) $K_X:=\sup_T\sup_t\E\lvert X_t\rvert^8<\infty$. (ii) For some constant $C_D<\infty$,
\begin{equation}\label{eq:A-shortmem}
\begin{aligned}
  \sup_T\,\sup_t\,\max_{1\le a,b\le np}\ \sum_{h\ge0}\Bigl\lvert\Cov\bigl((\bV_t)_a(\bV_t)_b,\ (\bV_{t+h})_a(\bV_{t+h})_b\bigr)\Bigr\rvert&\le C_D,\\
  \sup_T\,\sup_t\,\max_{1\le a,b\le n}\ \sum_{h\ge0}\Bigl\lvert\Cov\bigl((\beps_t)_a(\beps_t)_b,\ (\beps_{t+h})_a(\beps_{t+h})_b\bigr)\Bigr\rvert&\le C_D.
\end{aligned}
\end{equation}
\item[\textbf{(H)}] \emph{Conditional variance.} $\E[\beps_t\beps_t^\top\mid\mathcal F_{t-1}]=\Sigma_\varepsilon(t/T)$ a.s. for all $p<t\le T$, where $\Sigma_\varepsilon:[0,1]\to\R^{n\times n}$ is deterministic, $\lVert\Sigma_\varepsilon(u)-\Sigma_\varepsilon(v)\rVert\le L_\varepsilon\lvert u-v\rvert$, and $\lambda_{\min}(\Sigma_\varepsilon(u))\ge c_\varepsilon>0$ for all $u,v\in[0,1]$.
\item[\textbf{(N)}] \emph{Nuisance accuracy.} (N1) The learners are consistent, with a product rate:
\begin{align}
  &\frac1{T_w}\sum_{\mathcal W}\lvert\Delta_{T,t}\rvert^2\convp0, \label{eq:N1a}\\
  &\frac1{T_w}\sum_{\mathcal W}\lvert\Delta_{Y,t}\rvert^2\convp0, \label{eq:N1b}\\
  &\sqrt{T_w}\,\Bigl(\frac1{T_w}\sum_{\mathcal W}\lvert\Delta_{T,t}\rvert^2\Bigr)^{1/2}\Bigl(\frac1{T_w}\sum_{\mathcal W}\lvert\Delta_{g,t}\rvert^2\Bigr)^{1/2}\convp0 . \label{eq:N1c}
\end{align}
(N2) The learner errors are asymptotically uncorrelated with the noise and with $\bV_t$:
\begin{align}
  &\frac1{\sqrt{T_w}}\sum_{\mathcal W}\Delta_{T,t}\beps_t^\top\convp0, \label{eq:N2a}\\
  &\frac1{\sqrt{T_w}}\sum_{\mathcal W}\bV_t\Delta_{g,t}^\top\convp0 . \label{eq:N2b}
\end{align}
\end{enumerate}

\emph{Why (S)(ii) is needed.} It says that no combination of the lags is a function of the confounders. Suppose instead that $x^\top\bT_t=h(\bC_t)$ for some $x\ne0$ and some function $h$. Then for any $b\in\R^n$, replacing $\bTheta_t$ by $\bTheta_t+xb^\top$ and $g_t$ by $g_t-b\,h$ leaves \eqref{eq:A-model} unchanged, since $(xb^\top)^\top\bT_t=b\,x^\top\bT_t=b\,h(\bC_t)$. So $\bTheta_t$ could not be recovered from the data. This is exactly the case $x^\top J_tx=\E(x^\top\bV_t)^2=0$. Requiring $\lambda_{\min}(J_t)\ge c_J>0$ also rules out the nearly collinear case, uniformly in $t$ and $T$.

\emph{What (W) does.} The window must be long enough to average out the noise, and short enough that the coefficients barely move inside it.
\begin{itemize}[nosep]
\item $T_w\to\infty$ gives enough terms for the laws of large numbers and the central limit theorem.
\item $T_w/T\to0$: the window is a vanishing fraction of the series, so it shrinks to the single point $u_0$ in rescaled time. This follows from (W) because $T_w/T=(T_w^{3/2}/T)\cdot T_w^{-1/2}$ is a bounded factor times a factor that tends to $0$.
\item $T_w^{3/2}/T=O(1)$ is where the exponent $2/3$ comes from. Across the window, $u=t/T$ ranges over an interval of length $T_w/T$, so by (S)(i) the coefficients drift by $O(T_w/T)$. The estimation noise is of order $T_w^{-1/2}$. The drift is at most of the order of the noise exactly when $(T_w/T)\big/T_w^{-1/2}=T_w^{3/2}/T$ stays bounded. With $T_w=o(T^{2/3})$ the drift is negligible, which is why Theorem~\ref{thm:dml-var} can then be centred at $\bTheta(u_0)$.
\end{itemize}

\emph{What (D)(ii) does.} It bounds the variance of window sums of the entries of $\bV_t\bV_t^\top$ and $\beps_t\beps_t^\top$. For independent terms the variance of a sum of $T_w$ terms grows like $T_w$ automatically; for dependent terms it need not, and (D)(ii) is what restores this. With it, Chebyshev's inequality gives a law of large numbers for the window averages (Lemma~\ref{lem:A-lln}), which in turn gives $\hat J\approx\bar J$ and $\hat\sigma_i^2\approx\bar\sigma_i^2$. Fix one entry, say $f_t:=(\bV_t)_1(\bV_t)_2$ for all $t$, so that \eqref{eq:A-shortmem} gives $\sum_{h\ge0}\lvert\Cov(f_t,f_{t+h})\rvert\le C_D$. The computation is one line: expanding the square and counting each pair $s\le t$ once,
\[
  \Var\Bigl(\sum_{t\in\mathcal W}f_t\Bigr)=\sum_{s,t\in\mathcal W}\Cov(f_s,f_t)\le2\sum_{s\in\mathcal W}\sum_{h\ge0}\lvert\Cov(f_s,f_{s+h})\rvert\le2C_DT_w ,
\]
so the window average has variance at most $2C_D/T_w$, and by Chebyshev $P\bigl(\lvert T_w^{-1}\sum_t(f_t-\E f_t)\rvert>\eta\bigr)\le2C_D/(\eta^2T_w)\to0$ for every $\eta>0$. It is called \emph{short memory} because the covariances are summable in the lag $h$: values far apart in time are nearly uncorrelated, so the series ``forgets'' its past quickly. (Under \emph{long memory} the sum is infinite, and averages converge more slowly or not at all.)

\emph{Notation in (H).} The subscript $\varepsilon$ in $\Sigma_\varepsilon$, $L_\varepsilon$, $c_\varepsilon$ is only a label, ``of the noise $\beps_t$''. The only argument of $\Sigma_\varepsilon$ is rescaled time $u$. (H) is the analogue of (S) for the noise: its covariance drifts slowly (Lipschitz in $u$) and stays away from singular, and it does not depend on the past.

\emph{What (N) does.} (N) is the only place where the machine-learning step enters. Recall that $\Delta_{T,t}$ and $\Delta_{Y,t}$ in \eqref{eq:A-Delta} are the learners' errors at the data point, and $\Delta_{g,t}=\bTheta_t^\top\Delta_{T,t}-\Delta_{Y,t}$ is the error they cause in the outcome equation. With perfect learners (all $\Delta=0$), the estimation error would be driven by the main term $S_0=T_w^{-1/2}\sum_t\bV_t\beps_t^\top$ alone, plus the drift of $\bTheta_t$. With learner errors, $\tilde\bT_t=\bV_t-\Delta_{T,t}$ and $\tilde\bY_t$ picks up $\Delta_{g,t}$ (Lemma~\ref{lem:A-identities}), and multiplying out creates three extra sums (Lemma~\ref{lem:B-decomp}):
\[
  \frac1{\sqrt{T_w}}\sum_t\Delta_{T,t}\beps_t^\top,\qquad
  \frac1{\sqrt{T_w}}\sum_t\bV_t\Delta_{g,t}^\top,\qquad
  \frac1{\sqrt{T_w}}\sum_t\Delta_{T,t}\Delta_{g,t}^\top .
\]
\begin{itemize}[nosep]
\item \eqref{eq:N2a} and \eqref{eq:N2b} in (N2) say that the first two tend to $0$. Each pairs a learner error with a mean-zero term ($\beps_t$ or $\bV_t$), so it is small when the errors are asymptotically uncorrelated with the noise and with $\bV_t$.
\item \eqref{eq:N1c} in (N1) handles the third sum, a product of two errors. By Cauchy--Schwarz its size is at most $\sqrt{T_w}$ times the root-mean-square of $\Delta_{T,\cdot}$ times that of $\Delta_{g,\cdot}$. This is the usual DML \emph{product rate}: it holds, for example, if both root-mean-square errors are $o(T_w^{-1/4})$, much slower than the parametric rate $T_w^{-1/2}$. Orthogonalization is what makes the learner errors enter only through such products.
\item \eqref{eq:N1a} and \eqref{eq:N1b} in (N1) say that each learner is consistent on average over the window: its mean squared error at the data points tends to $0$. It is used for $\hat J\approx\bar J$, for $\hat\sigma_i^2\approx\bar\sigma_i^2$, and for the drift terms.
\end{itemize}
(N) is a condition on the learners as they are actually run; we do not verify it for particular learners (Remark~\ref{rem:buffer}).

\emph{Constants.} $C,C',\ldots$ denote constants that depend only on the constants in the assumptions (namely $L$, $c_J$, $K_X$, $C_D$, $L_\varepsilon$, $c_\varepsilon$ and $\lVert\bTheta(0)\rVert$) and on $n,d,p$; never on $T$ or $t$.

\emph{Conditional expectations.} We use the standard rules (the tower property, taking out what is known, conditional Jensen) and name each one where it is used. For random vectors they apply entry by entry. Conditioning on a random vector $\bC$ means conditioning on $\sigma(\bC)$.

\begin{lemma}[Moments]\label{lem:A-moments}
Assume (M) and (D)(i). Recall from (D)(i) that $K_X=\sup_T\sup_t\E\lvert X_t\rvert^8<\infty$ bounds the eighth moment of the data $X_t=(\bY_t^\top,\bW_t^\top)^\top$. Put $K_8:=2^8p^8K_X$ and $K_J:=K_8^{1/4}=4p^2K_X^{1/4}$. For all $T$ and all $p<t\le T$,
\begin{equation*}
  \E\lvert\bT_t\rvert^8\le K_8,\qquad \E\lvert\bV_t\rvert^8\le K_8,\qquad \E\lvert\beps_t\rvert^8\le K_8,\qquad \lVert J_t\rVert\le K_J .
\end{equation*}
Consequently, for $Z\in\{\bT_t,\bV_t,\beps_t\}$ and $0<q\le8$, $\E\lvert Z\rvert^q\le K_8^{q/8}$; in particular $\E\lvert Z\rvert^2\le K_J$ and $\E\lvert Z\rvert^4\le K_8^{1/2}$. If (S)(ii) also holds, then every eigenvalue of $J_t$ lies in $[c_J,K_J]$: the lower bound is assumed and the upper bound comes from this lemma. In particular $c_J\le K_J$.
\end{lemma}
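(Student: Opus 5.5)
The plan is to bound everything through the eighth moments of the data $X_s$, using three tools: a power-mean inequality for stacked vectors, conditional Jensen for the conditional-expectation parts, and Minkowski's inequality in $L^8$ (the norm $M_8$). All three bounds will come out as constant multiples of $K_X$ that are at most $K_8=2^8p^8K_X$, so the stated constant has room to spare.

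\emph{Step 1: $\bT_t$.} Since $\bT_t$ stacks $\bY_{t-1},\ldots,\bY_{t-p}$, we have $\lvert\bT_t\rvert^2=\sum_{k=1}^p\lvert\bY_{t-k}\rvert^2\le\sum_{k=1}^p\lvert X_{t-k}\rvert^2$. The convexity of $x\mapsto x^4$ (power mean) gives $\bigl(\sum_{k}a_k\bigr)^4\le p^3\sum_k a_k^4$. Hence $\lvert\bT_t\rvert^8\le p^3\sum_k\lvert X_{t-k}\rvert^8$. Taking expectations and using (D)(i) gives $\E\lvert\bT_t\rvert^8\le p^4K_X\le K_8$.

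\emph{Step 2: $\bV_t$.} The map $x\mapsto\lvert x\rvert^8$ is convex on $\R^{np}$, so conditional Jensen gives $\lvert\E[\bT_t\mid\bC_t]\rvert^8\le\E[\lvert\bT_t\rvert^8\mid\bC_t]$. By the tower property, $M_8(g_{T,t}(\bC_t))\le M_8(\bT_t)$. Minkowski then gives $M_8(\bV_t)\le2M_8(\bT_t)$, so $\E\lvert\bV_t\rvert^8\le2^8p^4K_X\le K_8$.

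\emph{Step 3: $\beps_t$.} This is the only step needing care, because (S) is not assumed here, so we cannot bound $\bTheta_t^\top\bT_t+g_t(\bC_t)$ directly. I would use (M) instead. The term $m_t:=\bTheta_t^\top\bT_t+g_t(\bC_t)$ is $\mathcal F_{t-1}$-measurable. It is also integrable, since $m_t=\bY_t-\beps_t$ and both $\bY_t$ and $\beps_t$ are integrable. Taking out what is known and using $\E[\beps_t\mid\mathcal F_{t-1}]=\bzero$ gives $\E[\bY_t\mid\mathcal F_{t-1}]=m_t$, so $\beps_t=\bY_t-\E[\bY_t\mid\mathcal F_{t-1}]$. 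The same Jensen-plus-Minkowski argument as in Step 2 gives $M_8(\beps_t)\le2M_8(\bY_t)\le2K_X^{1/8}$. Hence $\E\lvert\beps_t\rvert^8\le2^8K_X\le K_8$.

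\emph{Step 4: $J_t$ and the consequences.} Jensen for the operator norm gives $\lVert J_t\rVert\le\E\lVert\bV_t\bV_t^\top\rVert=\E\lvert\bV_t\rvert^2$. Lyapunov's inequality then gives $\E\lvert\bV_t\rvert^2\le(\E\lvert\bV_t\rvert^8)^{1/4}\le K_8^{1/4}=K_J$, and one checks that $K_8^{1/4}=4p^2K_X^{1/4}$. The general bound $\E\lvert Z\rvert^q\le K_8^{q/8}$ for $0<q\le8$ is again Lyapunov's inequality, and the cases $q=2,4$ give the two stated special cases. Finally, $J_t$ is symmetric positive semidefinite, so its eigenvalues lie in $[0,\lVert J_t\rVert]\subseteq[0,K_J]$. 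Under (S)(ii) the lower bound becomes $c_J$, which forces $c_J\le K_J$.

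I expect no real obstacle. The one point to watch is Step 3: expressing $\beps_t$ as a conditional-mean residual of $\bY_t$ avoids any need for bounds on $\bTheta$ or $g_t$.
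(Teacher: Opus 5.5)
Your proof is correct and follows essentially the same route as the paper. You bound $\bT_t$ from the data, control $\bV_t$ by noting that conditional expectation does not increase the eighth moment and applying Minkowski in $L^8$, write $\beps_t=\bY_t-\E[\bY_t\mid\mathcal F_{t-1}]$ via (M) so that no bound on $\bTheta_t$ or $g_t$ is needed, and finish with Lyapunov and the positive semidefiniteness of $J_t$. The differences are minor: your power-mean step gives the sharper $\E\lvert\bT_t\rvert^8\le p^4K_X$ (the paper's Minkowski bound gives $p^8K_X$), and you invoke the multivariate conditional Jensen inequality for $x\mapsto\lvert x\rvert^8$ directly, whereas the paper derives it entrywise via $\lvert\E[Z\mid\mathcal G]\rvert^2\le\E[\lvert Z\rvert^2\mid\mathcal G]$ followed by the scalar Jensen inequality for $x^4$.
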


In words, every random vector in the model has a bounded eighth moment, with one constant for all $t$ and $T$.

\begin{proof}
Recall $M_8(Z):=(\E\lvert Z\rvert^8)^{1/8}$, the $L^8$ norm of $\lvert Z\rvert$. We first record two tools.
\begin{enumerate}[label=(\roman*),nosep]
\item \emph{Minkowski for random vectors.} For random vectors $Z_1,Z_2$, $M_8(Z_1+Z_2)\le M_8(Z_1)+M_8(Z_2)$.
\item \emph{Conditioning does not increase the eighth moment.} If $Z\in\R^m$ has $\E\lvert Z\rvert^8<\infty$ and $\mathcal G$ is a $\sigma$-field, then $\E\lvert\E[Z\mid\mathcal G]\rvert^8\le\E\lvert Z\rvert^8$.
\end{enumerate}

\emph{For (i).} The scalar Minkowski inequality says that for real random variables $U,W$, $(\E\lvert U+W\rvert^8)^{1/8}\le(\E\lvert U\rvert^8)^{1/8}+(\E\lvert W\rvert^8)^{1/8}$; this is the triangle inequality for the norm of $L^8$. For vectors,
\begin{equation*}
  M_8(Z_1+Z_2)\le M_8\bigl(\lvert Z_1\rvert+\lvert Z_2\rvert\bigr)\le M_8(Z_1)+M_8(Z_2).
\end{equation*}
The first step holds because $\lvert Z_1+Z_2\rvert\le\lvert Z_1\rvert+\lvert Z_2\rvert$ pointwise and $x\mapsto x^8$ is increasing on $[0,\infty)$; the second is the scalar inequality with $U=\lvert Z_1\rvert$ and $W=\lvert Z_2\rvert$.

\emph{For (ii).} For a scalar $Z$ this is conditional Jensen with $\phi(x)=x^8$, followed by the tower property. For a vector, $\lvert Z\rvert^8$ is not a function of a single entry, so we cannot apply the scalar rule directly. Instead we go through the squares, which we can handle entry by entry and then add up. Each entry $Z_j$ is square-integrable: $\E Z_j^2\le\E\lvert Z\rvert^2\le(\E\lvert Z\rvert^8)^{1/4}<\infty$, by Jensen's inequality for the concave map $x\mapsto x^{1/4}$. Conditional Jensen with $\phi(x)=x^2$, applied to the scalar $Z_j$, gives $(\E[Z_j\mid\mathcal G])^2\le\E[Z_j^2\mid\mathcal G]$, and summing over $j$,
\begin{equation*}
  \lvert\E[Z\mid\mathcal G]\rvert^2=\sum_{j=1}^m(\E[Z_j\mid\mathcal G])^2\le\sum_{j=1}^m\E[Z_j^2\mid\mathcal G]=\E[\lvert Z\rvert^2\mid\mathcal G].
\end{equation*}
Both sides are nonnegative (the conditional expectation of a nonnegative variable is a.s.\ nonnegative), and $x\mapsto x^4$ is increasing on $[0,\infty)$, so we may raise this to the fourth power. Then conditional Jensen with the convex map $\phi(x)=x^4$ on $[0,\infty)$, applied to the scalar $\lvert Z\rvert^2$, gives
\begin{equation*}
  \lvert\E[Z\mid\mathcal G]\rvert^8\le\bigl(\E[\lvert Z\rvert^2\mid\mathcal G]\bigr)^4\le\E[\lvert Z\rvert^8\mid\mathcal G].
\end{equation*}
Take expectations and use the tower property: $\E\lvert\E[Z\mid\mathcal G]\rvert^8\le\E\,\E[\lvert Z\rvert^8\mid\mathcal G]=\E\lvert Z\rvert^8$.

With these two tools, we now prove the lemma. Steps 1--3 bound the eighth moments of $\bT_t$, $\bV_t$ and $\beps_t$ in turn; Step 4 derives the lower moments; Step 5 bounds $J_t$.

\emph{Step 1: $\bT_t$.} Since $\lvert\bT_t\rvert^2=\sum_{k=1}^p\lvert\bY_{t-k}\rvert^2\le\bigl(\sum_{k=1}^p\lvert\bY_{t-k}\rvert\bigr)^2$ and $\lvert\bY_{t-k}\rvert\le\lvert X_{t-k}\rvert$ for each $k$ (since $\bY_{t-k}$ is part of $X_{t-k}=(\bY_{t-k}^\top,\bW_{t-k}^\top)^\top$),
\begin{equation*}
  \lvert\bT_t\rvert\le\sum_{k=1}^p\lvert X_{t-k}\rvert .
\end{equation*}
Here $t-k\ge1$ because $t>p$. Since $\lvert\bT_t\rvert$ is at most this sum, $M_8(\bT_t)\le M_8\bigl(\sum_k\lvert X_{t-k}\rvert\bigr)$. By (D)(i), $\E\lvert X_s\rvert^8\le K_X$ for every time $s$ (and every $T$), that is, $M_8(X_s)\le K_X^{1/8}$ for every $s$. Applying (i) $p-1$ times and then this bound with $s=t-1,\ldots,t-p$, we get $M_8(\bT_t)\le\sum_{k=1}^pM_8(X_{t-k})\le pK_X^{1/8}$. So $\E\lvert\bT_t\rvert^8\le p^8K_X\le K_8$.

\emph{Step 2: $\bV_t$.} By (ii) with $\mathcal G=\sigma(\bC_t)$, $\E\lvert g_{T,t}(\bC_t)\rvert^8=\E\lvert\E[\bT_t\mid\bC_t]\rvert^8\le\E\lvert\bT_t\rvert^8$. Recall from \eqref{eq:A-VJ} that $\bV_t=\bT_t-g_{T,t}(\bC_t)$. Apply (i) with $Z_1=\bT_t$ and $Z_2=-g_{T,t}(\bC_t)$ (note $M_8(-Z)=M_8(Z)$) for the first inequality below; the bound just shown for the second; and Step 1 for the third:
\begin{equation*}
  M_8(\bV_t)\le M_8(\bT_t)+M_8\bigl(g_{T,t}(\bC_t)\bigr)\le2M_8(\bT_t)\le2pK_X^{1/8},
\end{equation*}
so $\E\lvert\bV_t\rvert^8\le2^8p^8K_X=K_8$.

\emph{Step 3: $\beps_t$.} First, $g_t(\bC_t)=\bY_t-\bTheta_t^\top\bT_t-\beps_t$ by (M), and each term on the right is integrable ($\bY_t$ and $\bT_t$ by (D)(i) and Step 1, $\beps_t$ by (M)); so $g_t(\bC_t)$ is integrable. Now take $\E[\,\cdot\mid\mathcal F_{t-1}]$ of \eqref{eq:A-model}. The vector $\bTheta_t^\top\bT_t+g_t(\bC_t)$ is $\mathcal F_{t-1}$-measurable ($\bTheta_t$ is deterministic, $g_t$ is Borel and $\bT_t,\bC_t$ are $\mathcal F_{t-1}$-measurable), so by taking out what is known it is its own conditional expectation; and $\E[\beps_t\mid\mathcal F_{t-1}]=\bzero$ by (M). Hence
\begin{equation*}
  \E[\bY_t\mid\mathcal F_{t-1}]=\bTheta_t^\top\bT_t+g_t(\bC_t),
  \qquad\text{so}\qquad
  \beps_t=\bY_t-\E[\bY_t\mid\mathcal F_{t-1}]\quad\text{a.s.}
\end{equation*}
As in Step 2: apply (i) with $Z_1=\bY_t$ and $Z_2=-\E[\bY_t\mid\mathcal F_{t-1}]$ for the first inequality below; (ii) with $\mathcal G=\mathcal F_{t-1}$ for the second; and $\lvert\bY_t\rvert\le\lvert X_t\rvert$ with (D)(i) for the third:
\begin{equation*}
  M_8(\beps_t)\le M_8(\bY_t)+M_8\bigl(\E[\bY_t\mid\mathcal F_{t-1}]\bigr)\le2M_8(\bY_t)\le2K_X^{1/8},
\end{equation*}
so $\E\lvert\beps_t\rvert^8\le2^8K_X\le K_8$ (as $p\ge1$).

\emph{Step 4: lower moments.} For $0<q\le8$ the map $x\mapsto x^{q/8}$ is concave on $[0,\infty)$, so Jensen's inequality gives $\E\lvert Z\rvert^q=\E\bigl[(\lvert Z\rvert^8)^{q/8}\bigr]\le(\E\lvert Z\rvert^8)^{q/8}\le K_8^{q/8}$. With $q=2$ and $q=4$ this is $K_8^{1/4}=K_J$ and $K_8^{1/2}$.

\emph{Step 5: $J_t$.} Its entries are finite, since $\lvert\E[(\bV_t)_l(\bV_t)_b]\rvert\le\E\lvert\bV_t\rvert^2\le K_J$ by Step 4. $J_t$ is symmetric and positive semidefinite ($x^\top J_tx=\E(x^\top\bV_t)^2\ge0$). For such a matrix the operator norm is the largest eigenvalue, $\lVert J_t\rVert=\max_{\lvert x\rvert=1}x^\top J_tx$. For $\lvert x\rvert=1$, Cauchy--Schwarz gives $(x^\top\bV_t)^2\le\lvert x\rvert^2\lvert\bV_t\rvert^2=\lvert\bV_t\rvert^2$, so
\begin{equation*}
  x^\top J_tx=\E(x^\top\bV_t)^2\le\E\lvert\bV_t\rvert^2\le K_J .
\end{equation*}
Hence $\lVert J_t\rVert\le K_J$. Finally, by (S)(ii), $c_J\le\lambda_{\min}(J_t)\le\lVert J_t\rVert\le K_J$.
\end{proof}

\begin{lemma}[Residual identities and a martingale difference]\label{lem:A-identities}
Assume (M), (D)(i), and (W) (the last only so that the window lies in $\{p+1,\ldots,T\}$). For all large $T$, on one event of probability one, the following hold for every $t\in\mathcal W$:
\begin{align}
  \bY_t-g_{Y,t}(\bC_t)&=\bTheta_t^\top\bV_t+\beps_t, \label{eq:A-oracle}\\
  \tilde\bT_t&=\bV_t-\Delta_{T,t}, \label{eq:A-tildeT}\\
  \tilde\bY_t&=\bTheta_t^\top\tilde\bT_t+\Delta_{g,t}+\beps_t. \label{eq:A-tildeY}
\end{align}
Moreover:
\begin{enumerate}[label=(\roman*)]
\item $\bV_t$ is $\mathcal F_{t-1}$-measurable and $\beps_t$ is $\mathcal F_t$-measurable.
\item For any deterministic $a\in\R^{np}$ and any $i\in\{1,\ldots,n\}$, the scalars $\xi_t:=a^\top\bV_t\,(\beps_t)_i$ form a \emph{martingale difference sequence} with respect to $(\mathcal F_t)$: each $\xi_t$ is $\mathcal F_t$-measurable and integrable, and $\E[\xi_t\mid\mathcal F_{t-1}]=0$ a.s. (In words: given the past, the next term has mean zero.)
\item $\E\xi_t^2\le\lvert a\rvert^2K_8^{1/2}$, and $\E[\xi_s\xi_t]=0$ for $s\ne t$. So the $\xi_t$ are uncorrelated and $\E\bigl(\sum_{t\in\mathcal W}\xi_t\bigr)^2=\sum_{t\in\mathcal W}\E\xi_t^2$.
\end{enumerate}
\end{lemma}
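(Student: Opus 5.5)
The plan is to verify the three identities pathwise, using the definitions \eqref{eq:A-nuisance}--\eqref{eq:A-Delta} and the model \eqref{eq:A-model}. Then (i)--(iii) follow from measurability plus the standard martingale-difference argument.

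\emph{Identities.} Take $\E[\,\cdot\mid\bC_t]$ of \eqref{eq:A-model}. Write $\E[\beps_t\mid\bC_t]=\E[\E[\beps_t\mid\mathcal F_{t-1}]\mid\bC_t]=\bzero$ by the tower property, using $\sigma(\bC_t)\subseteq\mathcal F_{t-1}$ and (M). Since $g_t(\bC_t)$ is $\sigma(\bC_t)$-measurable and integrable (shown in Step 3 of Lemma~\ref{lem:A-moments}), this gives a.s.
\[
g_{Y,t}(\bC_t)=\bTheta_t^\top g_{T,t}(\bC_t)+g_t(\bC_t).
\]
Subtracting this from \eqref{eq:A-model} and using $\bV_t=\bT_t-g_{T,t}(\bC_t)$ yields \eqref{eq:A-oracle}.

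Identity \eqref{eq:A-tildeT} is immediate: $\tilde\bT_t=\bT_t-g_{T,t}(\bC_t)-\Delta_{T,t}$. For \eqref{eq:A-tildeY}, write $\tilde\bY_t=\bY_t-g_{Y,t}(\bC_t)-\Delta_{Y,t}$ and substitute \eqref{eq:A-oracle}. Then insert $\bV_t=\tilde\bT_t+\Delta_{T,t}$, which gives $\bTheta_t^\top\tilde\bT_t+\bTheta_t^\top\Delta_{T,t}-\Delta_{Y,t}+\beps_t$, i.e.\ \eqref{eq:A-tildeY} by definition of $\Delta_{g,t}$.

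Each identity holds off a null set for each $t$. Since $\mathcal W$ is finite (for fixed $T$), the union of these null sets is null, so all identities hold simultaneously on one event of probability one. For this we need the window to lie in $\{p+1,\ldots,T\}$, which holds for large $T$ by fact (a).

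\emph{(i) Measurability.} $\bT_t$ and $\bC_t$ are functions of $X_{t-1},\ldots,X_{t-p}$, and $g_{T,t}$ can be chosen Borel (Doob--Dynkin). Hence $\bV_t$ is $\mathcal F_{t-1}$-measurable. For $\beps_t$, write $\beps_t=\bY_t-\E[\bY_t\mid\mathcal F_{t-1}]$ (Step 3 of Lemma~\ref{lem:A-moments}), which is $\mathcal F_t$-measurable.

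\emph{(ii) Martingale difference.} Integrability follows from Cauchy--Schwarz:
\[
\E\lvert\xi_t\rvert\le\lvert a\rvert(\E\lvert\bV_t\rvert^2)^{1/2}(\E\lvert\beps_t\rvert^2)^{1/2}<\infty
\]
by Lemma~\ref{lem:A-moments}. Next, take out the $\mathcal F_{t-1}$-measurable factor $a^\top\bV_t$: this gives $\E[\xi_t\mid\mathcal F_{t-1}]=a^\top\bV_t\,\E[(\beps_t)_i\mid\mathcal F_{t-1}]=0$. The only delicate point is justifying this step when the product is integrable but the factor is unbounded. I would handle it by truncation: work on $\{\lvert a^\top\bV_t\rvert\le m\}$ and let $m\to\infty$. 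Alternatively, cite the standard version of the rule that requires only integrability of $\xi_t$ and of $(\beps_t)_i$.

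\emph{(iii) Second moments and orthogonality.} For the second moment, use $\xi_t^2\le\lvert a\rvert^2\lvert\bV_t\rvert^2\lvert\beps_t\rvert^2$ and Cauchy--Schwarz:
\[
\E\xi_t^2\le\lvert a\rvert^2(\E\lvert\bV_t\rvert^4\,\E\lvert\beps_t\rvert^4)^{1/2}\le\lvert a\rvert^2K_8^{1/2}.
\]
For orthogonality, take $s<t$. Then $\xi_s$ is $\mathcal F_{t-1}$-measurable and $\xi_s\xi_t$ is integrable (Cauchy--Schwarz with the finite second moments). So $\E[\xi_s\xi_t]=\E[\xi_s\E[\xi_t\mid\mathcal F_{t-1}]]=0$, again taking out what is known under integrability of the product. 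Expanding the square of the sum then gives the variance identity. The main, if modest, obstacle is exactly this careful use of ``taking out what is known'' with unbounded factors; everything else is algebra.
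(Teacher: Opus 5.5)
Your proposal is correct and follows the paper's proof essentially step for step: you condition the model on $\bC_t$ with the tower property to get \eqref{eq:A-oracle}, use pure algebra for \eqref{eq:A-tildeT}--\eqref{eq:A-tildeY}, take a finite union of null sets over $\mathcal W$, and then take out what is known for (ii)--(iii); the standard version of that rule, which needs only integrability of the product, is what the paper uses, so no truncation is needed. The one small difference is in (i): the paper reads $\beps_t=\bY_t-\bTheta_t^\top\bT_t-g_t(\bC_t)$ directly off the model as a Borel function of $X_t,\ldots,X_{t-p}$, which gives exact $\mathcal F_t$-measurability, whereas your identity $\beps_t=\bY_t-\E[\bY_t\mid\mathcal F_{t-1}]$ holds only almost surely, which is harmless for everything downstream.
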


In words, after partialling out the confounders, the model becomes a plain linear regression of residual on residual, and cross-fitting adds only the explicit error terms $\Delta_{T,t}$ and $\Delta_{g,t}$.

\begin{proof}
\emph{Step 1: identity \eqref{eq:A-oracle}.} All terms of \eqref{eq:A-model} are integrable (Lemma~\ref{lem:A-moments} and Step 3 of its proof). Take $\E[\,\cdot\mid\bC_t]$ of \eqref{eq:A-model} term by term.
\begin{itemize}[leftmargin=1.5em]
\item $\E[\bY_t\mid\bC_t]=g_{Y,t}(\bC_t)$ by definition. Since $\bTheta_t$ is a deterministic matrix, linearity gives $\E[\bTheta_t^\top\bT_t\mid\bC_t]=\bTheta_t^\top\E[\bT_t\mid\bC_t]=\bTheta_t^\top g_{T,t}(\bC_t)$.
\item $g_t(\bC_t)$ is a Borel function of $\bC_t$, so $\E[g_t(\bC_t)\mid\bC_t]=g_t(\bC_t)$ by taking out what is known.
\item $\sigma(\bC_t)\subseteq\mathcal F_{t-1}$ because $\bC_t$ is $\mathcal F_{t-1}$-measurable. By the tower property and (M), $\E[\beps_t\mid\bC_t]=\E\bigl[\E[\beps_t\mid\mathcal F_{t-1}]\bigm|\bC_t\bigr]=\E[\bzero\mid\bC_t]=\bzero$.
\end{itemize}
Therefore $g_{Y,t}(\bC_t)=\bTheta_t^\top g_{T,t}(\bC_t)+g_t(\bC_t)$ a.s. Subtract this from \eqref{eq:A-model}:
\begin{align*}
  \bY_t-g_{Y,t}(\bC_t)
  &=\bTheta_t^\top\bT_t+g_t(\bC_t)+\beps_t-\bTheta_t^\top g_{T,t}(\bC_t)-g_t(\bC_t)\\
  &=\bTheta_t^\top\bigl(\bT_t-g_{T,t}(\bC_t)\bigr)+\beps_t
  =\bTheta_t^\top\bV_t+\beps_t .
\end{align*}
Each $t$ gives one null set on which this may fail. $\mathcal W$ is finite, and a finite union of null sets is a null set, so \eqref{eq:A-oracle} holds for all $t\in\mathcal W$ on one event of probability one. We work on this event from now on.

\emph{Step 2: identity \eqref{eq:A-tildeT}.} Add and subtract $g_{T,t}(\bC_t)$:
\begin{equation*}
  \tilde\bT_t=\bT_t-\hat g_{T,(t)}(\bC_t)
  =\bigl(\bT_t-g_{T,t}(\bC_t)\bigr)-\bigl(\hat g_{T,(t)}(\bC_t)-g_{T,t}(\bC_t)\bigr)
  =\bV_t-\Delta_{T,t}.
\end{equation*}
This is pure algebra and holds everywhere.

\emph{Step 3: identity \eqref{eq:A-tildeY}.} Add and subtract $g_{Y,t}(\bC_t)$, then use \eqref{eq:A-oracle}, then $\bV_t=\tilde\bT_t+\Delta_{T,t}$ from \eqref{eq:A-tildeT}:
\begin{align*}
  \tilde\bY_t
  &=\bigl(\bY_t-g_{Y,t}(\bC_t)\bigr)-\bigl(\hat g_{Y,(t)}(\bC_t)-g_{Y,t}(\bC_t)\bigr)
   =\bTheta_t^\top\bV_t+\beps_t-\Delta_{Y,t}\\
  &=\bTheta_t^\top\bigl(\tilde\bT_t+\Delta_{T,t}\bigr)+\beps_t-\Delta_{Y,t}
   =\bTheta_t^\top\tilde\bT_t+\bigl(\bTheta_t^\top\Delta_{T,t}-\Delta_{Y,t}\bigr)+\beps_t\\
  &=\bTheta_t^\top\tilde\bT_t+\Delta_{g,t}+\beps_t .
\end{align*}

\emph{Step 4: part (i).} $\bT_t$ and $\bC_t$ are functions of $X_{t-1},\ldots,X_{t-p}$, and $g_{T,t}$ is Borel, so $\bV_t=\bT_t-g_{T,t}(\bC_t)$ is $\mathcal F_{t-1}$-measurable. By (M), read as an identity between random vectors, $\beps_t=\bY_t-\bTheta_t^\top\bT_t-g_t(\bC_t)$ is a Borel function of $X_t,\ldots,X_{t-p}$, hence $\mathcal F_t$-measurable.

\emph{Step 5: part (ii).} By part (i), $\xi_t$ is $\mathcal F_t$-measurable. Next we bound $\E\lvert\xi_t\rvert$. By Cauchy--Schwarz for vectors, $\lvert a^\top\bV_t\rvert\le\lvert a\rvert\lvert\bV_t\rvert$; also $\lvert(\beps_t)_i\rvert\le\lvert\beps_t\rvert$. Then Cauchy--Schwarz for expectations and Lemma~\ref{lem:A-moments} ($\E\lvert\bV_t\rvert^2\le K_J$, $\E\lvert\beps_t\rvert^2\le K_J$) give
\begin{equation*}
  \E\lvert\xi_t\rvert\le\lvert a\rvert\,\E\bigl[\lvert\bV_t\rvert\,\lvert\beps_t\rvert\bigr]
  \le\lvert a\rvert\bigl(\E\lvert\bV_t\rvert^2\bigr)^{1/2}\bigl(\E\lvert\beps_t\rvert^2\bigr)^{1/2}\le\lvert a\rvert K_J^{1/2}K_J^{1/2}=\lvert a\rvert K_J<\infty .
\end{equation*}
Hence $\xi_t$ is integrable. Since $a^\top\bV_t$ is $\mathcal F_{t-1}$-measurable (part (i)), taking out what is known and (M) give
\begin{equation*}
  \E[\xi_t\mid\mathcal F_{t-1}]=a^\top\bV_t\,\E[(\beps_t)_i\mid\mathcal F_{t-1}]=a^\top\bV_t\cdot0=0 .
\end{equation*}

\emph{Step 6: part (iii).} Squaring $\lvert\xi_t\rvert\le\lvert a\rvert\lvert\bV_t\rvert\lvert\beps_t\rvert$ from Step 5, then Cauchy--Schwarz for expectations and Lemma~\ref{lem:A-moments} ($\E\lvert Z\rvert^4\le K_8^{1/2}$),
\begin{equation*}
  \E\xi_t^2\le\lvert a\rvert^2\,\E\bigl[\lvert\bV_t\rvert^2\lvert\beps_t\rvert^2\bigr]
  \le\lvert a\rvert^2\bigl(\E\lvert\bV_t\rvert^4\bigr)^{1/2}\bigl(\E\lvert\beps_t\rvert^4\bigr)^{1/2}
  \le\lvert a\rvert^2\,K_8^{1/4}K_8^{1/4}=\lvert a\rvert^2K_8^{1/2}.
\end{equation*}
Now let $s<t$. The product $\xi_s\xi_t$ is integrable (Cauchy--Schwarz and the bound just shown), and $\xi_s$ is $\mathcal F_s$-measurable, hence $\mathcal F_{t-1}$-measurable. By the tower property, taking out what is known and part (ii),
\begin{equation*}
  \E[\xi_s\xi_t]=\E\bigl[\E[\xi_s\xi_t\mid\mathcal F_{t-1}]\bigr]=\E\bigl[\xi_s\,\E[\xi_t\mid\mathcal F_{t-1}]\bigr]=\E[\xi_s\cdot0]=0 .
\end{equation*}
Expanding the square, $\E\bigl(\sum_t\xi_t\bigr)^2=\sum_t\E\xi_t^2+\sum_{s\ne t}\E[\xi_s\xi_t]=\sum_t\E\xi_t^2$.
\end{proof}

\begin{lemma}[The target]\label{lem:A-target}
Assume (M), (D)(i), (S), and (W) (the last only so that the window lies in $\{p+1,\ldots,T\}$). For all large $T$:
\begin{enumerate}[label=(\roman*)]
\item $\bar J$ is symmetric, $\lambda_{\min}(\bar J)\ge c_J$ and $\lVert\bar J\rVert\le K_J$. Hence $\bar J$ is invertible and $\lVert\bar J^{-1}\rVert\le1/c_J$.
\item $\sum_{t\in\mathcal W}J_tD_t=0$.
\item $\max_{t\in\mathcal W}\lVert D_t\rVert\le w_T:=L\,\dfrac{K_J}{c_J}\,\dfrac{T_w}{T}$.
\item $\lVert\bar\bTheta-\bTheta(u_0)\rVert\le w_T$.
\end{enumerate}
\end{lemma}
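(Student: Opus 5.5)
We prove the four parts in order, using only Lemma~\ref{lem:A-moments}, (S), and the window fact \eqref{eq:A-window}.

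\emph{Part (i).} Each $J_t=\E[\bV_t\bV_t^\top]$ is symmetric, so the average $\bar J$ is symmetric. For a unit vector $x$, (S)(ii) gives $x^\top J_tx\ge c_J$, and Lemma~\ref{lem:A-moments} gives $x^\top J_tx\le K_J$. Averaging over $t\in\mathcal W$ yields $c_J\le x^\top\bar Jx\le K_J$. By the Rayleigh quotient characterization, $\lambda_{\min}(\bar J)\ge c_J$ and $\lVert\bar J\rVert\le K_J$. A symmetric matrix with smallest eigenvalue at least $c_J>0$ is invertible, and $\lVert\bar J^{-1}\rVert=1/\lambda_{\min}(\bar J)\le1/c_J$.

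\emph{Part (ii).} This is pure algebra from \eqref{eq:A-target}. Compute
\[
\sum_{t\in\mathcal W}J_tD_t=\sum_{t\in\mathcal W}J_t\bTheta_t-T_w\bar J\bar\bTheta .
\]
Since $T_w\bar J\bar\bTheta=\sum_{t\in\mathcal W}J_t\bTheta_t$ by definition of $\bar\bTheta$, the sum vanishes.

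\emph{Parts (iii) and (iv).} The key step is to write $\bar\bTheta$ as a $J$-weighted average of the $\bTheta_s$ and compare it with an arbitrary reference matrix $M$. Since $\bar J^{-1}T_w^{-1}\sum_sJ_s=I$, we have
\[
\bar\bTheta-M=\bar J^{-1}\frac1{T_w}\sum_{s\in\mathcal W}J_s(\bTheta_s-M).
\]
Bound the operator norm of the right-hand side by $\lVert\bar J^{-1}\rVert\cdot\max_s\lVert J_s\rVert\cdot\max_s\lVert\bTheta_s-M\rVert\le(K_J/c_J)\max_{s\in\mathcal W}\lVert\bTheta_s-M\rVert$, using part (i) and Lemma~\ref{lem:A-moments}.

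For (iv), take $M=\bTheta(u_0)$. By (S)(i) and \eqref{eq:A-window}, $\lVert\bTheta_s-\bTheta(u_0)\rVert\le L\lvert s/T-u_0\rvert<LT_w/T$, which gives the bound $w_T$.

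For (iii), fix $t\in\mathcal W$, take $M=\bTheta_t$, and note $D_t=-(\bar\bTheta-\bTheta_t)$. For window times $s,t$ we have $\lvert s-t\rvert\le T_w-1$, so (S)(i) gives $\lVert\bTheta_s-\bTheta_t\rVert\le L(T_w-1)/T\le LT_w/T$. Hence $\lVert D_t\rVert\le(K_J/c_J)LT_w/T=w_T$.

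\emph{Main obstacle.} Part (iii) is the only delicate point. The naive route is the triangle inequality through $\bTheta(u_0)$, splitting $D_t=(\bTheta_t-\bTheta(u_0))+(\bTheta(u_0)-\bar\bTheta)$. That yields $LT_w/T+w_T$, which exceeds $w_T$. To get exactly $w_T$, one must re-center the average at $\bTheta_t$ itself and use the within-window spread $\lvert s-t\rvert\le T_w-1$, as above.

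The phrase ``for all large $T$'' is needed only so that the window lies in $\{p+1,\ldots,T\}$, which makes every $J_t$ and $\bTheta_t$ defined.
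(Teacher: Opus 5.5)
Your proposal is correct and follows essentially the same route as the paper: the eigenvalue bounds for (i), the one-line cancellation for (ii), and the re-centering identity $\bar\bTheta-M=\bar J^{-1}T_w^{-1}\sum_{t\in\mathcal W}J_t(\bTheta_t-M)$ with $M=\bTheta_s$ for (iii) (via $\lvert t-s\rvert\le T_w-1$) and $M=\bTheta(u_0)$ for (iv). One nit: the strict inequality $L\lvert s/T-u_0\rvert<LT_w/T$ fails when $L=0$, so write $\le$ there; the conclusion is unaffected.
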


In words, $\bar\bTheta$ is a weighted average of the true coefficient matrices $\bTheta_t$ over the window, so it lies within $O(T_w/T)$ of each of them and of $\bTheta(u_0)$.

\noindent\emph{Example.} For $n=p=1$ all matrices are numbers, $J_t=\E\bV_t^2>0$ and $\bar\bTheta=\sum_tJ_t\bTheta_t/\sum_tJ_t$: a weighted average of $\bTheta_t$ with positive weights. A weighted average of numbers that all lie within $L(T_w-1)/T$ of each other is itself within that distance of each of them. The lemma is the matrix version of this, where the weights are matrices and the ratio $K_J/c_J$ pays for their not being scalars.

\begin{proof}
\emph{Part (i).} Each $J_t$ is symmetric, so $\bar J$ is. For any $x$, (S)(ii) gives $x^\top J_tx\ge c_J\lvert x\rvert^2$, so
\begin{equation*}
  x^\top\bar Jx=\frac1{T_w}\sum_{t\in\mathcal W}x^\top J_tx\ge\frac1{T_w}\sum_{t\in\mathcal W}c_J\lvert x\rvert^2=c_J\lvert x\rvert^2 ,
\end{equation*}
and hence $\lambda_{\min}(\bar J)=\min_{\lvert x\rvert=1}x^\top\bar Jx\ge c_J$. By the triangle inequality and Lemma~\ref{lem:A-moments}, $\lVert\bar J\rVert\le T_w^{-1}\sum_t\lVert J_t\rVert\le K_J$. A symmetric matrix with all eigenvalues at least $c_J>0$ is invertible, and the eigenvalues of its inverse are the reciprocals, all at most $1/c_J$; so $\lVert\bar J^{-1}\rVert\le1/c_J$.

\emph{A key identity.} Let $M\in\R^{np\times n}$ be any fixed matrix. Since $\bar J^{-1}\bar J=\bI$ and $\bar J=T_w^{-1}\sum_tJ_t$,
\begin{equation*}
  M=\bar J^{-1}\bar JM=\bar J^{-1}\frac1{T_w}\sum_{t\in\mathcal W}J_tM .
\end{equation*}
Subtracting this from the definition of $\bar\bTheta$,
\begin{equation}\label{eq:A-key}
  \bar\bTheta-M=\bar J^{-1}\frac1{T_w}\sum_{t\in\mathcal W}J_t\bTheta_t-\bar J^{-1}\frac1{T_w}\sum_{t\in\mathcal W}J_tM
  =\bar J^{-1}\frac1{T_w}\sum_{t\in\mathcal W}J_t(\bTheta_t-M).
\end{equation}

\emph{Part (ii).} Using $\sum_tJ_t=T_w\bar J$ and the definition \eqref{eq:A-target} of $\bar\bTheta$,
\begin{equation*}
  \sum_{t\in\mathcal W}J_tD_t=\sum_{t\in\mathcal W}J_t\bTheta_t-\Bigl(\sum_{t\in\mathcal W}J_t\Bigr)\bar\bTheta
  =\sum_{t\in\mathcal W}J_t\bTheta_t-T_w\bar J\,\bar J^{-1}\frac1{T_w}\sum_{t\in\mathcal W}J_t\bTheta_t=0 .
\end{equation*}

\emph{Part (iii).} Fix $s\in\mathcal W$ and take $M=\bTheta_s$ in \eqref{eq:A-key}. For $t\in\mathcal W$, (S)(i) and $\lvert t-s\rvert\le T_w-1$ give $\lVert\bTheta_t-\bTheta_s\rVert\le L\lvert t-s\rvert/T\le L(T_w-1)/T$. By $\lVert AB\rVert\le\lVert A\rVert\lVert B\rVert$, the triangle inequality, part (i) and Lemma~\ref{lem:A-moments},
\begin{align*}
  \lVert D_s\rVert=\lVert\bar\bTheta-\bTheta_s\rVert
  &\le\lVert\bar J^{-1}\rVert\,\frac1{T_w}\sum_{t\in\mathcal W}\lVert J_t\rVert\,\lVert\bTheta_t-\bTheta_s\rVert\\
  &\le\frac1{c_J}\cdot\frac1{T_w}\sum_{t\in\mathcal W}K_J\cdot L\frac{T_w-1}{T}
  =L\,\frac{K_J}{c_J}\,\frac{T_w-1}{T}\le w_T .
\end{align*}
Since $s\in\mathcal W$ was arbitrary, this proves (iii).

\emph{Part (iv).} Take $M=\bTheta(u_0)$ in \eqref{eq:A-key}. For $t\in\mathcal W$, both $t/T$ and $u_0$ lie in $[0,1]$, so (S)(i) and \eqref{eq:A-window} give $\lVert\bTheta_t-\bTheta(u_0)\rVert\le L\lvert t/T-u_0\rvert\le L\,T_w/T$. Exactly as in part (iii),
\begin{equation*}
  \lVert\bar\bTheta-\bTheta(u_0)\rVert
  \le\frac1{c_J}\cdot\frac1{T_w}\sum_{t\in\mathcal W}K_J\cdot L\frac{T_w}{T}
  =L\,\frac{K_J}{c_J}\,\frac{T_w}{T}=w_T. \qedhere
\end{equation*}
\end{proof}

\paragraph{Limits.}
All limits are as $T\to\infty$. Let $X_T$ be random (a scalar, or a vector or matrix measured in a norm).
\begin{itemize}[leftmargin=1.5em]
\item \emph{Convergence in probability}, $X_T\convp0$, also written $X_T=o_p(1)$: $P(\lvert X_T\rvert>\epsilon)\to0$ for every $\epsilon>0$. In words: large values become rare. For vectors and matrices this is the same as entry by entry.
\item \emph{Bounded in probability}, $X_T=O_p(1)$: for every $\eta>0$ there are $M$ and $T_0$ with $P(\lvert X_T\rvert>M)\le\eta$ for all $T\ge T_0$. In words: $X_T$ does not escape to infinity. For deterministic $a_T>0$, $X_T=o_p(a_T)$ means $X_T/a_T=o_p(1)$, and similarly for $O_p$.
\item \emph{Convergence in distribution}, $X_T\convd\N(0,1)$: $P(X_T\le x)\to\Phi(x)$ for every $x\in\R$, where $\Phi$ is the standard normal distribution function.
\end{itemize}
Standard matrix inequalities and the usual rules for these limits (sums and products, Markov's and Chebyshev's inequalities, Slutsky's lemma) are named where they are used.

\begin{lemma}[Law of large numbers on the window]\label{lem:A-lln}
Let $f_t$, $t\in\mathcal W$, be scalar random variables with $\E f_t^2<\infty$ and $\sum_{h\ge0}\lvert\Cov(f_t,f_{t+h})\rvert\le C_D$ for every $t$ (the sum over those $h$ with $t+h\in\mathcal W$ suffices). Let $b_t$ be deterministic with $\lvert b_t\rvert\le B$, where $B$ does not depend on $T$ or $t$. Then
\begin{equation}\label{eq:A-lln-sum}
  \E\Bigl(\sum_{t\in\mathcal W}b_t\,(f_t-\E f_t)\Bigr)^2\le2B^2C_D\,T_w ,
  \qquad\text{equivalently}\qquad
  \E\Bigl(\frac1{T_w}\sum_{t\in\mathcal W}b_t\,(f_t-\E f_t)\Bigr)^2\le\frac{2B^2C_D}{T_w}.
\end{equation}
Consequently, under (W), $T_w^{-1}\sum_{t\in\mathcal W}b_t(f_t-\E f_t)\convp0$. By (D)(ii) and Lemma~\ref{lem:A-moments}, every entry of $\bV_t\bV_t^\top$ and every entry of $\beps_t\beps_t^\top$ satisfies the hypotheses on $f_t$.
\end{lemma}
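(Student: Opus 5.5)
The statement to prove is Lemma~\ref{lem:A-lln}; the plan is a direct second-moment computation followed by Chebyshev's inequality, exactly as sketched in ``What (D)(ii) does''. First, set $\bar f_t:=f_t-\E f_t$ and $S:=\sum_{t\in\mathcal W}b_t\bar f_t$. Since each $\E f_t^2<\infty$ and $\mathcal W$ is finite, $S$ is square integrable. Expanding the square gives
\[
  \E S^2=\sum_{s,t\in\mathcal W}b_sb_t\Cov(f_s,f_t).
\]
I would split this double sum into the diagonal-and-above part $s\le t$ and the part $s>t$. By symmetry of covariance, each off-diagonal pair appears twice, and the diagonal appears once. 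Using $\lvert b_sb_t\rvert\le B^2$, this gives
\[
  \E S^2\le 2B^2\sum_{s\in\mathcal W}\sum_{h\ge0,\;s+h\in\mathcal W}\lvert\Cov(f_s,f_{s+h})\rvert .
\]
This bound is slightly loose on the diagonal, but harmlessly so. By hypothesis the inner sum is at most $C_D$, so $\E S^2\le2B^2C_DT_w$. Dividing by $T_w^2$ gives the second form of \eqref{eq:A-lln-sum}.

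For the consequence, Chebyshev's inequality gives $P(\lvert S/T_w\rvert>\eta)\le2B^2C_D/(\eta^2T_w)$ for every $\eta>0$. This tends to $0$ because $T_w\to\infty$ by (W).

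For the final sentence, take $f_t=(\bV_t)_a(\bV_t)_b$ or $f_t=(\beps_t)_a(\beps_t)_b$. The square integrability $\E f_t^2\le\E\lvert\bV_t\rvert^4\le K_8^{1/2}$ (and likewise for $\beps_t$) follows from $\lvert(\bV_t)_a(\bV_t)_b\rvert\le\lvert\bV_t\rvert^2$ and Lemma~\ref{lem:A-moments}. The summable-covariance bound is exactly \eqref{eq:A-shortmem} in (D)(ii). Its supremum is over all $t$, so restricting to $h$ with $t+h\in\mathcal W$ only shrinks the sum.

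There is no real obstacle here. The only point needing care is the bookkeeping when folding the double sum into a one-sided sum over lags $h\ge0$. Each unordered pair $\{s,t\}$ with $s\ne t$ must be counted once as $(s,s+h)$ with $h>0$, and the factor $2$ must cover both orderings.
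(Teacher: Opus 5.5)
Your proposal is correct and matches the paper's proof step for step. You expand $\E S^2$ into a double sum of covariances, bound it by $B^2$ times the sum of absolute covariances, and fold that into two one-sided lag sums, each at most $C_D T_w$. You then apply Chebyshev with $T_w\to\infty$, and verify the hypotheses for entries of $\bV_t\bV_t^\top$ and $\beps_t\beps_t^\top$ via Lemma~\ref{lem:A-moments} and (D)(ii).
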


In words, a weighted window average of a short-memory sequence stays within $O(T_w^{-1/2})$ of its mean.

\begin{proof}
Write $\bar f_t:=f_t-\E f_t$, so that $\E[\bar f_s\bar f_t]=\Cov(f_s,f_t)$, which is finite because $\E f_t^2<\infty$. Expanding the square,
\begin{align*}
  \E\Bigl(\sum_{t\in\mathcal W}b_t\bar f_t\Bigr)^2
  =\sum_{s\in\mathcal W}\sum_{t\in\mathcal W}b_sb_t\,\Cov(f_s,f_t)
  \le\sum_{s\in\mathcal W}\sum_{t\in\mathcal W}\lvert b_s\rvert\lvert b_t\rvert\,\lvert\Cov(f_s,f_t)\rvert
  \le B^2\sum_{s\in\mathcal W}\sum_{t\in\mathcal W}\lvert\Cov(f_s,f_t)\rvert .
\end{align*}
Split the double sum into the pairs with $t\ge s$ and the pairs with $t<s$. In the first group write $t=s+h$ with $h\ge0$; in the second, swap the names and write $s=t+h$ with $h\ge1$. Using $\Cov(f_s,f_t)=\Cov(f_t,f_s)$,
\begin{align*}
  \sum_{s\in\mathcal W}\sum_{t\in\mathcal W}\lvert\Cov(f_s,f_t)\rvert
  &=\sum_{s\in\mathcal W}\sum_{\substack{h\ge0\\ s+h\in\mathcal W}}\lvert\Cov(f_s,f_{s+h})\rvert
   +\sum_{t\in\mathcal W}\sum_{\substack{h\ge1\\ t+h\in\mathcal W}}\lvert\Cov(f_t,f_{t+h})\rvert\\
  &\le\sum_{s\in\mathcal W}C_D+\sum_{t\in\mathcal W}C_D=2C_DT_w .
\end{align*}
Combining the two displays gives the first inequality in \eqref{eq:A-lln-sum}; dividing by $T_w^2$ gives the second.

For the consequence, fix $\eta>0$. By Chebyshev's inequality and \eqref{eq:A-lln-sum},
\begin{equation*}
  P\Bigl(\Bigl\lvert\frac1{T_w}\sum_{t\in\mathcal W}b_t\bar f_t\Bigr\rvert>\eta\Bigr)
  \le\frac1{\eta^2}\,\E\Bigl(\frac1{T_w}\sum_{t\in\mathcal W}b_t\bar f_t\Bigr)^2
  \le\frac{2B^2C_D}{\eta^2\,T_w}\longrightarrow0,
\end{equation*}
because $T_w\to\infty$ by (W).

Finally, for $f_t=(\bV_t)_l(\bV_t)_b$ we have $\E f_t^2\le\E\lvert\bV_t\rvert^4\le K_8^{1/2}$, and for $f_t=(\beps_t)_i(\beps_t)_j$ we have $\E f_t^2\le\E\lvert\beps_t\rvert^4\le K_8^{1/2}$ (Lemma~\ref{lem:A-moments}, using $\lvert (\bV_t)_l(\bV_t)_b\rvert\le\lvert\bV_t\rvert^2$ and likewise for $\beps_t$); the covariance bound is (D)(ii).
\end{proof}

\subsection{Proof of Theorem~\ref{thm:dml-var}: asymptotic normality}\label{app:thm1-proof}

We restate the theorem, with the quantities of Appendix~\ref{app:assumptions}.

\medskip\noindent\textbf{Theorem~\ref{thm:dml-var} (restated).} \emph{Assume (M), (S), (W), (D), (H), (N) of Appendix~\ref{app:assumptions}. Let $\hbTheta$ be the estimator \eqref{eq:A-thetahat}, $\bar\bTheta$ the window target \eqref{eq:A-target}, and $\hat s_{ri}$ the standard error \eqref{eq:A-se}. Then for every entry $(r,i)$,}
\begin{equation}\label{eq:A-thm1}
  \frac{\hbTheta_{ri}-\bar\bTheta_{ri}}{\hat s_{ri}}\convd\N(0,1).
\end{equation}
\emph{If moreover $T_w=o(T^{2/3})$, then $\bar\bTheta_{ri}$ may be replaced by $\bTheta_{ri}(u_0)$.}
\medskip

\paragraph{Roadmap.}
This section proves that each entry of $\sqrt{T_w}(\hbTheta-\bar\bTheta)$, divided by a deterministic scale $\sqrt{v_{T,ri}}$, is asymptotically $\N(0,1)$ (Proposition~\ref{prop:B-main}). The steps, in order:
\begin{itemize}[nosep]
\item Lemma~\ref{lem:B-decomp} (pure algebra): write $\sqrt{T_w}(\hbTheta-\bar\bTheta)$ exactly as $\hat J^{-1}$ times the sum of one main term $S_0$ and six remainders.
\item Lemma~\ref{lem:B-gram}: $\hat J$ is close to the deterministic matrix $\bar J$.
\item Lemma~\ref{lem:B-remainders}: the six remainders vanish.
\item Theorem~\ref{thm:B-mclt} and Lemma~\ref{lem:B-clt}: $S_0$ satisfies a central limit theorem. The first is the cited martingale CLT; the second checks its conditions.
\item Proposition~\ref{prop:B-main}: combine the above.
\end{itemize}
Each lemma uses only the ones before it, so the final proof is a short chain of citations. The standard error, which replaces $v_{T,ri}$ by an estimate, is treated in Appendix~\ref{app:se}.

\paragraph{Conventions for this section.}
Throughout, $\sum_t$ means $\sum_{t\in\mathcal W}$, a sum of exactly $T_w$ terms. Both $\bT_t$ and $\bV_t$ have length $np$.
All statements are for $T$ large enough that the window lies in $\{p+1,\ldots,T\}$; this is the range in which the lemmas of Appendix~\ref{app:assumptions} hold, and it is all that a limit as $T\to\infty$ needs.
Recall from Lemma~\ref{lem:A-moments} that, with $K_8:=2^8p^8K_X$ and $K_J:=K_8^{1/4}$, for every $t\in\mathcal W$ and $0<s\le8$,
\[
\E\lvert\bV_t\rvert^s\le K_8^{s/8},\qquad\E\lvert\beps_t\rvert^s\le K_8^{s/8},\qquad\lVert J_t\rVert\le K_J ;
\]
we use $s=2$ ($K_8^{1/4}=K_J$), $s=4$ ($K_8^{1/2}$) and $s=8$.
The matrices $D_t=\bTheta_t-\bar\bTheta$ are deterministic, because $\bTheta_t=\bTheta(t/T)$ is a fixed matrix and $\bar\bTheta$ is built from the deterministic $J_t$ and $\bTheta_t$.
By Lemma~\ref{lem:A-target}(iii), $\max_t\lVert D_t\rVert\le w_T=L(K_J/c_J)\,T_w/T$. Two consequences of (W) are used repeatedly:
\begin{equation}\label{eq:B-wT}
w_T=L\frac{K_J}{c_J}\cdot\frac{T_w^{3/2}}{T}\cdot\frac{1}{\sqrt{T_w}}\longrightarrow0,
\qquad
\sqrt{T_w}\,w_T=L\frac{K_J}{c_J}\cdot\frac{T_w^{3/2}}{T}\le C_w\quad\text{for all large }T,
\end{equation}
where $C_w<\infty$ is a constant: the first because $T_w^{3/2}/T=O(1)$ and $T_w\to\infty$, the second because $T_w^{3/2}/T=O(1)$.

\subsubsection*{Step 1: an exact decomposition}

\begin{lemma}[Error decomposition]\label{lem:B-decomp}
Assume (M), (S), (W) and (D)(i), so that the identities of Lemma~\ref{lem:A-identities} hold and $\bar J$ is invertible (Lemma~\ref{lem:A-target}(i)). For all large $T$, almost surely on the event $\{\hat J\text{ invertible}\}$,
\begin{equation}\label{eq:B-decomp}
\sqrt{T_w}\,(\hbTheta-\bar\bTheta)=\hat J^{-1}\bigl(S_0+R_1+R_2+R_3+R_{4a}+R_{4b}+R_{4c}\bigr),
\end{equation}
where all terms are $np\times n$ matrices:
\begin{align*}
S_0&=\frac1{\sqrt{T_w}}\sum_t\bV_t\beps_t^\top, &
R_1&=-\frac1{\sqrt{T_w}}\sum_t\Delta_{T,t}\beps_t^\top, &
R_2&=\frac1{\sqrt{T_w}}\sum_t\bV_t\Delta_{g,t}^\top,\\
R_3&=-\frac1{\sqrt{T_w}}\sum_t\Delta_{T,t}\Delta_{g,t}^\top, &
R_{4a}&=\frac1{\sqrt{T_w}}\sum_t(\bV_t\bV_t^\top-J_t)D_t, &
R_{4b}&=-\frac1{\sqrt{T_w}}\sum_t(\bV_t\Delta_{T,t}^\top+\Delta_{T,t}\bV_t^\top)D_t,\\
&& R_{4c}&=\frac1{\sqrt{T_w}}\sum_t\Delta_{T,t}\Delta_{T,t}^\top D_t. &&
\end{align*}
\end{lemma}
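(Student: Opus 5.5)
The proof is pure algebra, carried out on the probability-one event of Lemma~\ref{lem:A-identities} intersected with $\{\hat J\text{ invertible}\}$. I would first take $\bar\bTheta$ inside the inverse. Since $\hat J^{-1}\hat J=\bI$ and $\hat J=T_w^{-1}\sum_t\tilde\bT_t\tilde\bT_t^\top$, the definition \eqref{eq:A-thetahat} gives
\[
\hbTheta-\bar\bTheta=\hat J^{-1}\frac1{T_w}\sum_t\tilde\bT_t\bigl(\tilde\bY_t^\top-\tilde\bT_t^\top\bar\bTheta\bigr).
\]
Multiplying by $\sqrt{T_w}$ turns the prefactor $T_w^{-1}$ into $T_w^{-1/2}$.

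Next I would substitute the residual identity \eqref{eq:A-tildeY}. It gives $\tilde\bY_t^\top=\tilde\bT_t^\top\bTheta_t+\Delta_{g,t}^\top+\beps_t^\top$. Using $\bTheta_t-\bar\bTheta=D_t$, the bracket becomes $\tilde\bT_t^\top D_t+\Delta_{g,t}^\top+\beps_t^\top$. So $\sqrt{T_w}(\hbTheta-\bar\bTheta)$ equals $\hat J^{-1}$ times
\[
\frac1{\sqrt{T_w}}\sum_t\tilde\bT_t\beps_t^\top+\frac1{\sqrt{T_w}}\sum_t\tilde\bT_t\Delta_{g,t}^\top+\frac1{\sqrt{T_w}}\sum_t\tilde\bT_t\tilde\bT_t^\top D_t .
\]

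I would then expand each of the three sums using \eqref{eq:A-tildeT}, $\tilde\bT_t=\bV_t-\Delta_{T,t}$.
\begin{itemize}
\item The first sum splits into $S_0+R_1$.
\item The second sum splits into $R_2+R_3$.
\item In the third sum, $\tilde\bT_t\tilde\bT_t^\top=\bV_t\bV_t^\top-\bV_t\Delta_{T,t}^\top-\Delta_{T,t}\bV_t^\top+\Delta_{T,t}\Delta_{T,t}^\top$. The cross terms give exactly $R_{4b}$ and the last term gives $R_{4c}$.
\end{itemize}

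The only non-mechanical point is turning $T_w^{-1/2}\sum_t\bV_t\bV_t^\top D_t$ into the centred term $R_{4a}$. For this I would subtract $T_w^{-1/2}\sum_tJ_tD_t$, which is $0$ by Lemma~\ref{lem:A-target}(ii). This is where the particular weighting in the target $\bar\bTheta$ is used: it makes the drift enter only through the mean-zero fluctuation $\bV_t\bV_t^\top-J_t$.

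Collecting all terms yields \eqref{eq:B-decomp}. The exceptional null set is the finite union from Lemma~\ref{lem:A-identities}, so the identity holds almost surely on $\{\hat J\text{ invertible}\}$. I expect no real obstacle: the only care needed is keeping the transposes and signs straight, and that step is checking that the $-\Delta_{T,t}$ pieces reproduce the stated signs of $R_1$, $R_3$ and $R_{4b}$.
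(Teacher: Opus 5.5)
Your proposal is correct and follows the paper's proof. It uses the same substitution of the residual identities, the same expansion via $\tilde\bT_t=\bV_t-\Delta_{T,t}$, and the same cancellation $\sum_t J_tD_t=0$ to centre the drift term; the only difference is that you absorb $\bar\bTheta$ through $\hat J^{-1}\hat J\bar\bTheta=\bar\bTheta$ at the start, where the paper solves for $\hbTheta$ at the end.
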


In words, the estimation error splits exactly, with no approximation, into one main term and six remainders, all multiplied by the same matrix $\hat J^{-1}$.

\begin{proof}
\emph{A scalar example first.} Take $n=p=1$, so every matrix is a number; write $\tilde v_t=\tilde\bT_t$, $v_t=\bV_t$, $\theta_t=\bTheta_t$. Then $\hat\theta=\sum_t\tilde v_t\tilde y_t/\sum_t\tilde v_t^2$ and $\tilde y_t=\theta_t\tilde v_t+\Delta_{g,t}+\varepsilon_t$, so
\[
\hat\theta-\bar\theta=\frac{\sum_t\tilde v_t^2(\theta_t-\bar\theta)+\sum_t\tilde v_t\Delta_{g,t}+\sum_t\tilde v_t\varepsilon_t}{\sum_t\tilde v_t^2}.
\]
Substituting $\tilde v_t=v_t-\Delta_{T,t}$, the last sum gives $S_0$ and $R_1$, the middle sum gives $R_2$ and $R_3$, and the first sum (the drift of the coefficient inside the window) gives $R_{4a},R_{4b},R_{4c}$ after its mean part cancels. The matrix computation below is the same.

\emph{Where the identities hold.} By Lemma~\ref{lem:A-identities}, for all large $T$ there is one event of probability one on which the identities
\begin{equation}\label{eq:B-ident}
\tilde\bT_t=\bV_t-\Delta_{T,t},\qquad \tilde\bY_t=\bTheta_t^\top\tilde\bT_t+\Delta_{g,t}+\beps_t
\end{equation}
hold for all $t\in\mathcal W$ at once. We fix an outcome in this event at which $\hat J$ is invertible. Everything below is algebra at this outcome.

\emph{Step (a): the cross product.} Transposing the second identity in \eqref{eq:B-ident} gives $\tilde\bY_t^\top=\tilde\bT_t^\top\bTheta_t+\Delta_{g,t}^\top+\beps_t^\top$. Multiplying on the left by the column $\tilde\bT_t$,
\[
\tilde\bT_t\tilde\bY_t^\top=\tilde\bT_t\tilde\bT_t^\top\bTheta_t+\tilde\bT_t\Delta_{g,t}^\top+\tilde\bT_t\beps_t^\top .
\]

\emph{Step (b): the last two products.} Insert $\tilde\bT_t=\bV_t-\Delta_{T,t}$:
\[
\tilde\bT_t\Delta_{g,t}^\top=\bV_t\Delta_{g,t}^\top-\Delta_{T,t}\Delta_{g,t}^\top,
\qquad
\tilde\bT_t\beps_t^\top=\bV_t\beps_t^\top-\Delta_{T,t}\beps_t^\top .
\]

\emph{Step (c): the first product.} Write $\bTheta_t=\bar\bTheta+D_t$, so $\tilde\bT_t\tilde\bT_t^\top\bTheta_t=\tilde\bT_t\tilde\bT_t^\top\bar\bTheta+\tilde\bT_t\tilde\bT_t^\top D_t$. Expanding the Gram product with $\tilde\bT_t=\bV_t-\Delta_{T,t}$,
\[
\tilde\bT_t\tilde\bT_t^\top=(\bV_t-\Delta_{T,t})(\bV_t-\Delta_{T,t})^\top
=\bV_t\bV_t^\top-\bV_t\Delta_{T,t}^\top-\Delta_{T,t}\bV_t^\top+\Delta_{T,t}\Delta_{T,t}^\top,
\]
and adding and subtracting $J_t$ in the first term,
\[
\tilde\bT_t\tilde\bT_t^\top D_t
=(\bV_t\bV_t^\top-J_t)D_t+J_tD_t-(\bV_t\Delta_{T,t}^\top+\Delta_{T,t}\bV_t^\top)D_t+\Delta_{T,t}\Delta_{T,t}^\top D_t .
\]

\emph{Step (d): the exact cancellation.} $\sum_tJ_tD_t=0$ by Lemma~\ref{lem:A-target}(ii).
The term that appears in Step (c) is $J_tD_t$, with $J_t=\E[\bV_t\bV_t^\top]$; it cancels because $\bar\bTheta$ is built with these same weights. Any other weights would leave a deterministic bias.

\emph{Step (e): collect.} Average Step (a) over $t\in\mathcal W$ and insert Steps (b) to (d). Since $\bar\bTheta$ does not depend on $t$, $T_w^{-1}\sum_t\tilde\bT_t\tilde\bT_t^\top\bar\bTheta=\hat J\bar\bTheta$. Writing $\frac1{T_w}=\frac1{\sqrt{T_w}}\cdot\frac1{\sqrt{T_w}}$ and matching each sum with its definition,
\begin{align*}
\frac1{T_w}\sum_t\tilde\bT_t\tilde\bY_t^\top
&=\hat J\bar\bTheta
+\frac1{T_w}\sum_t(\bV_t\bV_t^\top-J_t)D_t
+\frac1{T_w}\underbrace{\sum_tJ_tD_t}_{=0}\\
&\qquad-\frac1{T_w}\sum_t(\bV_t\Delta_{T,t}^\top+\Delta_{T,t}\bV_t^\top)D_t
+\frac1{T_w}\sum_t\Delta_{T,t}\Delta_{T,t}^\top D_t\\
&\qquad+\frac1{T_w}\sum_t\bV_t\Delta_{g,t}^\top-\frac1{T_w}\sum_t\Delta_{T,t}\Delta_{g,t}^\top
+\frac1{T_w}\sum_t\bV_t\beps_t^\top-\frac1{T_w}\sum_t\Delta_{T,t}\beps_t^\top\\
&=\hat J\bar\bTheta+\frac1{\sqrt{T_w}}\bigl(R_{4a}+R_{4b}+R_{4c}+R_2+R_3+S_0+R_1\bigr).
\end{align*}

\emph{Step (f): solve.} Since $\hat J$ is invertible, $\hbTheta=\hat J^{-1}T_w^{-1}\sum_t\tilde\bT_t\tilde\bY_t^\top$. Multiplying Step (e) on the left by $\hat J^{-1}$ and using $\hat J^{-1}\hat J\bar\bTheta=\bar\bTheta$,
\[
\hbTheta=\bar\bTheta+\frac1{\sqrt{T_w}}\hat J^{-1}\bigl(S_0+R_1+R_2+R_3+R_{4a}+R_{4b}+R_{4c}\bigr).
\]
Subtracting $\bar\bTheta$ and multiplying by $\sqrt{T_w}$ gives \eqref{eq:B-decomp}.
\end{proof}

\subsubsection*{Step 2: the Gram matrix}

\begin{lemma}[Gram matrix]\label{lem:B-gram}
Assume (M), (S), (W), (D) and \eqref{eq:N1a} in (N1). Let $E_T:=\{\lambda_{\min}(\hat J)\ge c_J/2\}$ ($\hat J$ is symmetric, so $\lambda_{\min}(\hat J)$ is defined). Then:
\begin{enumerate}[label=(\roman*)]
\item $\lVert\hat J-\bar J\rVert\convp0$;
\item $P(E_T)\to1$, and on $E_T$ the matrix $\hat J$ is invertible with $\lVert\hat J^{-1}\rVert\le2/c_J$;
\item on $E_T$, $\lVert\hat J^{-1}-\bar J^{-1}\rVert\le(2/c_J^2)\lVert\hat J-\bar J\rVert$. Hence for every $\epsilon>0$, $P\bigl(E_T\cap\{\lVert\hat J^{-1}-\bar J^{-1}\rVert\le\epsilon\}\bigr)\to1$. This is what ``$\hat J^{-1}-\bar J^{-1}\convp0$'' means here; the value of $\hat J^{-1}$ off $E_T$ is irrelevant because $P(E_T^c)\to0$.
\end{enumerate}
\end{lemma}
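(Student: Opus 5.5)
The plan is to split $\hat J-\bar J$ into an oracle part and a nuisance part using the identity $\tilde\bT_t=\bV_t-\Delta_{T,t}$ of Lemma~\ref{lem:A-identities}. Expanding the Gram product as in Step (c) of Lemma~\ref{lem:B-decomp} gives
\[
\hat J-\bar J=\frac1{T_w}\sum_t(\bV_t\bV_t^\top-J_t)-\frac1{T_w}\sum_t(\bV_t\Delta_{T,t}^\top+\Delta_{T,t}\bV_t^\top)+\frac1{T_w}\sum_t\Delta_{T,t}\Delta_{T,t}^\top .
\]
Since the dimension $np$ is fixed, the operator norm is bounded by the Frobenius norm. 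It therefore suffices to show that each of the three matrices tends to $0$ in probability, entry by entry or in Frobenius norm.

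\emph{First term.} Each entry is a window average $T_w^{-1}\sum_t(f_t-\E f_t)$ with $f_t=(\bV_t)_a(\bV_t)_b$. Lemma~\ref{lem:A-lln} applies with $b_t\equiv1$, using (D)(ii) and the moment bounds of Lemma~\ref{lem:A-moments}, so each entry is $o_p(1)$.

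\emph{Third term.} Its Frobenius norm is at most $T_w^{-1}\sum_t\lvert\Delta_{T,t}\rvert^2$, which tends to $0$ by \eqref{eq:N1a}.

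\emph{Cross term.} By the triangle inequality and Cauchy--Schwarz over $t$, its Frobenius norm is at most
\[
2\Bigl(\frac1{T_w}\sum_t\lvert\bV_t\rvert^2\Bigr)^{1/2}\Bigl(\frac1{T_w}\sum_t\lvert\Delta_{T,t}\rvert^2\Bigr)^{1/2}.
\]
The first factor is $O_p(1)$ by Markov's inequality, because $\E T_w^{-1}\sum_t\lvert\bV_t\rvert^2\le K_J$ by Lemma~\ref{lem:A-moments}. The second factor is $o_p(1)$ by \eqref{eq:N1a}. Hence the product is $o_p(1)$, and this proves (i).

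For (ii), I will use Weyl's inequality: $\lambda_{\min}(\hat J)\ge\lambda_{\min}(\bar J)-\lVert\hat J-\bar J\rVert$. Lemma~\ref{lem:A-target}(i) gives $\lambda_{\min}(\bar J)\ge c_J$. Therefore $E_T\supseteq\{\lVert\hat J-\bar J\rVert\le c_J/2\}$, and the probability of this event tends to $1$ by (i). On $E_T$ the symmetric matrix $\hat J$ has all eigenvalues at least $c_J/2>0$. So it is invertible, and $\lVert\hat J^{-1}\rVert=1/\lambda_{\min}(\hat J)\le2/c_J$.

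For (iii), on $E_T$ I will use the resolvent identity $\hat J^{-1}-\bar J^{-1}=\hat J^{-1}(\bar J-\hat J)\bar J^{-1}$. Combined with $\lVert\hat J^{-1}\rVert\le2/c_J$ and $\lVert\bar J^{-1}\rVert\le1/c_J$, it gives the stated bound $(2/c_J^2)\lVert\hat J-\bar J\rVert$. The probability statement then follows by intersecting $E_T$ with $\{\lVert\hat J-\bar J\rVert\le c_J^2\epsilon/2\}$: both events have probability tending to $1$, so their intersection does too.

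The only step that needs care is the cross term. It cannot be handled by a law of large numbers because $\Delta_{T,t}$ is random and possibly dependent on $\bV_t$. Cauchy--Schwarz avoids this difficulty, at the cost of requiring only (N1a) and a crude $O_p(1)$ bound for $\bV_t$, so I do not expect it to become a real obstacle.
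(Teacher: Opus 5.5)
Your proposal is correct and follows the paper's proof essentially step for step. It uses the same three-way split of $\hat J-\bar J$ via $\tilde\bT_t=\bV_t-\Delta_{T,t}$, Lemma~\ref{lem:A-lln} for the oracle part, and Cauchy--Schwarz with Markov and \eqref{eq:N1a} for the cross and quadratic terms. It then uses Weyl's inequality for (ii) and the resolvent identity, intersected with $\{\lVert\hat J-\bar J\rVert\le c_J^2\epsilon/2\}$, for (iii).
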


In words, the empirical Gram matrix $\hat J$ of the cross-fitted residuals is close to the average true Gram matrix $\bar J$, so it is invertible with a bounded inverse, and its inverse is close to $\bar J^{-1}$.

\begin{proof}
\emph{Step (a): split $\hat J$.} By \eqref{eq:A-tildeT}, $\tilde\bT_t=\bV_t-\Delta_{T,t}$, and by the expansion in Step (c) of Lemma~\ref{lem:B-decomp} (pure algebra from \eqref{eq:A-tildeT}, valid at every outcome),
\[
\hat J=\frac1{T_w}\sum_t\tilde\bT_t\tilde\bT_t^\top
=\frac1{T_w}\sum_t\bV_t\bV_t^\top-G-G^\top+Q,
\qquad G:=\frac1{T_w}\sum_t\bV_t\Delta_{T,t}^\top,\quad Q:=\frac1{T_w}\sum_t\Delta_{T,t}\Delta_{T,t}^\top,
\]
using $\Delta_{T,t}\bV_t^\top=(\bV_t\Delta_{T,t}^\top)^\top$. Subtracting $\bar J$,
\[
\hat J-\bar J=\Bigl(\frac1{T_w}\sum_t\bV_t\bV_t^\top-\bar J\Bigr)-G-G^\top+Q .
\]
We show that each of the four pieces is $o_p(1)$.

\emph{Step (b): the oracle part.} Fix $l,b\in\{1,\dots,np\}$ and let $f_t:=(\bV_t)_l(\bV_t)_b$, an entry of $\bV_t\bV_t^\top$. Since $J_t=\E[\bV_t\bV_t^\top]$, $(J_t)_{lb}=\E f_t$, and since $\bar J=T_w^{-1}\sum_tJ_t$,
\[
\Bigl(\frac1{T_w}\sum_t\bV_t\bV_t^\top-\bar J\Bigr)_{lb}=\frac1{T_w}\sum_t\bigl(f_t-\E f_t\bigr).
\]
By (D)(ii) and Lemma~\ref{lem:A-moments}, $f_t$ satisfies the hypotheses of Lemma~\ref{lem:A-lln} (as noted at the end of that lemma), so Lemma~\ref{lem:A-lln} with all weights equal to $1$ (so $B=1$) gives
$\E\bigl(T_w^{-1}\sum_t(f_t-\E f_t)\bigr)^2\le2C_D/T_w$. Summing over the $(np)^2$ entries,
\[
\E\Bigl\lvert\frac1{T_w}\sum_t\bV_t\bV_t^\top-\bar J\Bigr\rvert_F^2\le\frac{2(np)^2C_D}{T_w},
\qquad\text{so}\qquad
P\Bigl(\Bigl\lvert\frac1{T_w}\sum_t\bV_t\bV_t^\top-\bar J\Bigr\rvert_F>\epsilon\Bigr)\le\frac{2(np)^2C_D}{\epsilon^2T_w}\longrightarrow0
\]
by Chebyshev's inequality and $T_w\to\infty$ (W).

\emph{Step (c): the cross term $G$.} By the triangle inequality for $\lvert\cdot\rvert_F$, then $\lvert ab^\top\rvert_F=\lvert a\rvert\,\lvert b\rvert$, then Cauchy--Schwarz for averages,
\[
\lvert G\rvert_F\le\frac1{T_w}\sum_t\lvert\bV_t\Delta_{T,t}^\top\rvert_F=\frac1{T_w}\sum_t\lvert\bV_t\rvert\,\lvert\Delta_{T,t}\rvert
\le\Bigl(\frac1{T_w}\sum_t\lvert\bV_t\rvert^2\Bigr)^{1/2}\Bigl(\frac1{T_w}\sum_t\lvert\Delta_{T,t}\rvert^2\Bigr)^{1/2}.
\]
First factor: $\E\,T_w^{-1}\sum_t\lvert\bV_t\rvert^2\le\max_t\E\lvert\bV_t\rvert^2\le K_8^{1/4}$ (conventions above), so $T_w^{-1}\sum_t\lvert\bV_t\rvert^2=O_p(1)$ by Markov's inequality, and so is its square root. Second factor: $T_w^{-1}\sum_t\lvert\Delta_{T,t}\rvert^2\convp0$ by \eqref{eq:N1a} in (N1), so is its square root. Since $O_p(1)\cdot o_p(1)=o_p(1)$, $\lvert G\rvert_F=o_p(1)$, and $\lvert G^\top\rvert_F=\lvert G\rvert_F$.

\emph{Step (d): the quadratic term $Q$.} By the triangle inequality and $\lvert ab^\top\rvert_F=\lvert a\rvert\,\lvert b\rvert$, $\lvert Q\rvert_F\le T_w^{-1}\sum_t\lvert\Delta_{T,t}\rvert^2\convp0$ by \eqref{eq:N1a} in (N1).

\emph{Step (e): proof of (i).} Since $\lVert M\rVert\le\lvert M\rvert_F$, $\lVert\hat J-\bar J\rVert\le\lvert\hat J-\bar J\rvert_F$, and by the triangle inequality this is at most the sum of the Frobenius norms of the four pieces. Each is $o_p(1)$ by Steps (b) to (d), so the sum is $o_p(1)$ too.

\emph{Step (f): proof of (ii).} Recall $E_T=\{\lambda_{\min}(\hat J)\ge c_J/2\}$ from the statement. We use Weyl's inequality, in the form for symmetric $A,B$, $\lambda_{\min}(A)\ge\lambda_{\min}(B)-\lVert A-B\rVert$: moving a symmetric matrix by $\lVert A-B\rVert$ in operator norm moves its smallest eigenvalue by at most that much.
Both $\hat J$ and $\bar J$ are symmetric (averages of symmetric matrices), and $\lambda_{\min}(\bar J)\ge c_J$ by Lemma~\ref{lem:A-target}(i). So
\[
\lambda_{\min}(\hat J)\ge c_J-\lVert\hat J-\bar J\rVert.
\]
In particular, at every outcome with $\lVert\hat J-\bar J\rVert\le c_J/2$ we get $\lambda_{\min}(\hat J)\ge c_J/2$, that is, $\{\lVert\hat J-\bar J\rVert\le c_J/2\}\subseteq E_T$. Taking complements, $P(E_T^c)\le P(\lVert\hat J-\bar J\rVert>c_J/2)\to0$ by (i). On $E_T$, since $\hat J$ is symmetric with $\lambda_{\min}(\hat J)>0$, $\hat J$ is invertible with $\lVert\hat J^{-1}\rVert=1/\lambda_{\min}(\hat J)\le2/c_J$.

\emph{Step (g): proof of (iii).} On $E_T$ both inverses exist: $\hat J$ is invertible on $E_T$ by (ii), and $\bar J$ is invertible at every outcome because $\lambda_{\min}(\bar J)\ge c_J>0$ (Lemma~\ref{lem:A-target}(i)), so all its eigenvalues are positive. Then
\[
\hat J^{-1}-\bar J^{-1}=\hat J^{-1}\bar J\bar J^{-1}-\hat J^{-1}\hat J\bar J^{-1}=\hat J^{-1}(\bar J-\hat J)\bar J^{-1}.
\]
By $\lVert AB\rVert\le\lVert A\rVert\,\lVert B\rVert$, $\lVert A^{-1}\rVert=1/\lambda_{\min}(A)$ for a symmetric positive definite $A$, and $\lambda_{\min}(\bar J)\ge c_J$,
\[
\lVert\hat J^{-1}-\bar J^{-1}\rVert\le\lVert\hat J^{-1}\rVert\,\lVert\bar J-\hat J\rVert\,\lVert\bar J^{-1}\rVert\le\frac2{c_J}\cdot\lVert\hat J-\bar J\rVert\cdot\frac1{c_J}.
\]
Therefore $E_T\cap\{\lVert\hat J-\bar J\rVert\le c_J^2\epsilon/2\}\subseteq E_T\cap\{\lVert\hat J^{-1}-\bar J^{-1}\rVert\le\epsilon\}$, and the probability of the left event is at least $1-P(E_T^c)-P(\lVert\hat J-\bar J\rVert>c_J^2\epsilon/2)$, which tends to $1$ by (ii) and (i).
\end{proof}

\subsubsection*{Step 3: the remainders}

\begin{lemma}[Remainders vanish]\label{lem:B-remainders}
Assume (M), (S), (W), (D), (N). Then $\lvert R_1\rvert_F$, $\lvert R_2\rvert_F$, $\lvert R_3\rvert_F$, $\lvert R_{4a}\rvert_F$, $\lvert R_{4b}\rvert_F$ and $\lvert R_{4c}\rvert_F$ (the remainders defined in \eqref{eq:B-decomp} of Lemma~\ref{lem:B-decomp}) all tend to $0$ in probability.
\end{lemma}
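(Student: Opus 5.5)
I will treat the six remainders one at a time, since each is controlled by a different assumption. $R_1$ and $R_2$ need no work: they are exactly $-1$ times the sum in \eqref{eq:N2a} and the sum in \eqref{eq:N2b}, so $\lvert R_1\rvert_F\convp0$ and $\lvert R_2\rvert_F\convp0$ follow directly from (N2).

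For $R_3$ I will use the triangle inequality for $\lvert\cdot\rvert_F$ and $\lvert ab^\top\rvert_F=\lvert a\rvert\lvert b\rvert$, followed by Cauchy--Schwarz for averages. This gives
\[
\lvert R_3\rvert_F\le\sqrt{T_w}\Bigl(\frac1{T_w}\sum_t\lvert\Delta_{T,t}\rvert^2\Bigr)^{1/2}\Bigl(\frac1{T_w}\sum_t\lvert\Delta_{g,t}\rvert^2\Bigr)^{1/2},
\]
which tends to $0$ in probability by \eqref{eq:N1c}.

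The drift terms are handled through $\max_t\lVert D_t\rVert\le w_T$ (Lemma~\ref{lem:A-target}(iii)) together with \eqref{eq:B-wT}. For $R_{4c}$, using $\lvert MD\rvert_F\le\lvert M\rvert_F\lVert D\rVert$,
\[
\lvert R_{4c}\rvert_F\le\sqrt{T_w}\,w_T\cdot\frac1{T_w}\sum_t\lvert\Delta_{T,t}\rvert^2\le C_w\,o_p(1),
\]
by \eqref{eq:N1a}. For $R_{4b}$ the same bound, combined with the Cauchy--Schwarz argument of Step (c) in Lemma~\ref{lem:B-gram}, gives
\[
\lvert R_{4b}\rvert_F\le2\sqrt{T_w}\,w_T\Bigl(\frac1{T_w}\sum_t\lvert\bV_t\rvert^2\Bigr)^{1/2}\Bigl(\frac1{T_w}\sum_t\lvert\Delta_{T,t}\rvert^2\Bigr)^{1/2}=C_w\,O_p(1)\,o_p(1)=o_p(1).
\]

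The term $R_{4a}$ needs more care, because a crude norm bound yields only $\sqrt{T_w}\,w_T\cdot O_p(1)=O_p(1)$, which is not small. I will instead use that the $D_t$ are deterministic. Entry $(l,c)$ of $R_{4a}$ equals $T_w^{-1/2}\sum_t\sum_b(D_t)_{bc}(f^{lb}_t-\E f^{lb}_t)$, where $f^{lb}_t=(\bV_t)_l(\bV_t)_b$. For each fixed $(l,b,c)$ I will apply Lemma~\ref{lem:A-lln} with weights $b_t=(D_t)_{bc}/w_T$, which satisfy $\lvert b_t\rvert\le1$. This gives
\[
\E\Bigl(T_w^{-1/2}\sum_t(D_t)_{bc}(f^{lb}_t-\E f^{lb}_t)\Bigr)^2\le T_w^{-1}w_T^2\cdot2C_DT_w=2C_Dw_T^2\to0,
\]
because $w_T\to0$ by \eqref{eq:B-wT}. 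Summing over the finitely many indices and applying Chebyshev's inequality then gives $\lvert R_{4a}\rvert_F\convp0$. A degenerate case $w_T=0$ is trivial. This step is the main obstacle: it is the only term where the bound must exploit the centering of $\bV_t\bV_t^\top-J_t$ and the short-memory condition (D)(ii), rather than simply norm-bounding the summands, so I expect it to be the delicate point.
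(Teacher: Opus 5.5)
Your proposal is correct and follows the paper's proof essentially step for step. You get $R_1,R_2$ directly from (N2) and $R_3$ from Cauchy--Schwarz with the product rate \eqref{eq:N1c}. You get $R_{4b},R_{4c}$ from $\max_t\lVert D_t\rVert\le w_T$ together with $\sqrt{T_w}\,w_T\le C_w$, and $R_{4a}$ from Lemma~\ref{lem:A-lln} with the rescaled deterministic weights $(D_t)_{bc}/w_T$. The one step you leave implicit is combining the per-$(l,b,c)$ bounds into a bound on an entry of $R_{4a}$. The paper does this with $\bigl(\sum_{b}x_b\bigr)^2\le np\sum_b x_b^2$, which gives $\E\lvert R_{4a}\rvert_F^2\le 2n^4p^3C_Dw_T^2$; equivalently, a finite sum of $o_p(1)$ terms is $o_p(1)$.
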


In words, the nuisance-error terms and the coefficient-drift terms are all negligible after the $\sqrt{T_w}$ scaling.

\begin{proof}
\emph{$R_1$ and $R_2$.} $R_1=-T_w^{-1/2}\sum_t\Delta_{T,t}\beps_t^\top$ and $R_2=T_w^{-1/2}\sum_t\bV_t\Delta_{g,t}^\top$ tend to $0$ in probability directly by \eqref{eq:N2a} and \eqref{eq:N2b} in (N2).

\emph{$R_3$.} By the triangle inequality, $\lvert ab^\top\rvert_F=\lvert a\rvert\,\lvert b\rvert$, and Cauchy--Schwarz for sums,
\begin{align*}
\lvert R_3\rvert_F
&\le\frac1{\sqrt{T_w}}\sum_t\lvert\Delta_{T,t}\Delta_{g,t}^\top\rvert_F
=\frac1{\sqrt{T_w}}\sum_t\lvert\Delta_{T,t}\rvert\,\lvert\Delta_{g,t}\rvert
\le\frac1{\sqrt{T_w}}\Bigl(\sum_t\lvert\Delta_{T,t}\rvert^2\Bigr)^{1/2}\Bigl(\sum_t\lvert\Delta_{g,t}\rvert^2\Bigr)^{1/2}\\
&=\frac{T_w}{\sqrt{T_w}}\Bigl(\frac1{T_w}\sum_t\lvert\Delta_{T,t}\rvert^2\Bigr)^{1/2}\Bigl(\frac1{T_w}\sum_t\lvert\Delta_{g,t}\rvert^2\Bigr)^{1/2}
=\sqrt{T_w}\Bigl(\frac1{T_w}\sum_t\lvert\Delta_{T,t}\rvert^2\Bigr)^{1/2}\Bigl(\frac1{T_w}\sum_t\lvert\Delta_{g,t}\rvert^2\Bigr)^{1/2}.
\end{align*}
The last expression tends to $0$ in probability by the product rate \eqref{eq:N1c} in (N1).

\emph{$R_{4a}$.} Fix an entry $(l,i)$, $l\in\{1,\dots,np\}$, $i\in\{1,\dots,n\}$. Writing out the matrix product,
\[
(R_{4a})_{li}=\frac1{\sqrt{T_w}}\sum_t\sum_{b=1}^{np}\bigl((\bV_t)_l(\bV_t)_b-(J_t)_{lb}\bigr)(D_t)_{bi}
=\sqrt{T_w}\sum_{b=1}^{np}\chi_{lbi},
\qquad
\chi_{lbi}:=\frac1{T_w}\sum_t(D_t)_{bi}\bigl(f^{lb}_t-\E f^{lb}_t\bigr),
\]
with $f^{lb}_t:=(\bV_t)_l(\bV_t)_b$ and $\E f^{lb}_t=(J_t)_{lb}$. The weights $(D_t)_{bi}$ are deterministic, and $\lvert(D_t)_{bi}\rvert\le\lVert D_t\rVert\le w_T$. The first inequality holds because an entry is at most the operator norm: by Cauchy--Schwarz, $\lvert(D_t)_{bi}\rvert=\lvert e_b^\top(D_te_i)\rvert\le\lvert e_b\rvert\,\lvert D_te_i\rvert\le\lVert D_t\rVert$. The second is Lemma~\ref{lem:A-target}(iii). The variable $f^{lb}_t$ is an entry of $\bV_t\bV_t^\top$, so it satisfies the hypotheses of Lemma~\ref{lem:A-lln}. That lemma asks for a weight bound $B$ that does not depend on $T$, while $w_T$ does, so we rescale the weights.
If $L=0$, then $w_T=0$, so every $D_t=0$ and $R_{4a}=0$; there is nothing to prove. If $L>0$, then $w_T>0$, and the rescaled weights $\beta_t:=(D_t)_{bi}/w_T$ (for the fixed $b,i$) are deterministic with $\lvert\beta_t\rvert\le1$. Since $\chi_{lbi}=w_T\cdot T_w^{-1}\sum_t\beta_t(f^{lb}_t-\E f^{lb}_t)$, Lemma~\ref{lem:A-lln} with weights $\beta_t$ and $B=1$ gives
\[
\E \chi_{lbi}^2=w_T^2\,\E\Bigl(\frac1{T_w}\sum_t\beta_t\bigl(f^{lb}_t-\E f^{lb}_t\bigr)\Bigr)^2\le w_T^2\cdot\frac{2C_D}{T_w}=\frac{2w_T^2C_D}{T_w}.
\]
By Cauchy--Schwarz with the all-ones vector, $\bigl(\sum_{b=1}^{np}\chi_{lbi}\bigr)^2\le np\sum_{b=1}^{np}\chi_{lbi}^2$, so
\[
\E(R_{4a})_{li}^2=T_w\,\E\Bigl(\sum_{b=1}^{np}\chi_{lbi}\Bigr)^2\le T_w\cdot np\cdot np\cdot\frac{2w_T^2C_D}{T_w}=2(np)^2C_Dw_T^2 .
\]
Summing over the $np\cdot n$ entries, $\E\lvert R_{4a}\rvert_F^2\le np\cdot n\cdot2(np)^2C_Dw_T^2=2n^4p^3C_Dw_T^2$. Since $w_T\to0$ by \eqref{eq:B-wT}, Chebyshev's inequality gives $P(\lvert R_{4a}\rvert_F>\epsilon)\le2n^4p^3C_Dw_T^2/\epsilon^2\to0$.

\emph{$R_{4b}$.} For each $t$, by $\lvert AB\rvert_F\le\lvert A\rvert_F\,\lVert B\rVert$ and $\lvert ab^\top\rvert_F=\lvert a\rvert\,\lvert b\rvert$,
\[
\lvert\bV_t\Delta_{T,t}^\top D_t\rvert_F\le\lvert\bV_t\Delta_{T,t}^\top\rvert_F\,\lVert D_t\rVert=\lvert\bV_t\rvert\,\lvert\Delta_{T,t}\rvert\,\lVert D_t\rVert\le w_T\lvert\bV_t\rvert\,\lvert\Delta_{T,t}\rvert,
\]
and in the same way $\lvert\Delta_{T,t}\bV_t^\top D_t\rvert_F\le w_T\lvert\bV_t\rvert\,\lvert\Delta_{T,t}\rvert$. By the triangle inequality and then Cauchy--Schwarz for averages,
\begin{align*}
\lvert R_{4b}\rvert_F
&\le\frac1{\sqrt{T_w}}\sum_t2w_T\lvert\bV_t\rvert\,\lvert\Delta_{T,t}\rvert
=2\sqrt{T_w}\,w_T\cdot\frac1{T_w}\sum_t\lvert\bV_t\rvert\,\lvert\Delta_{T,t}\rvert\\
&\le2\,\underbrace{\sqrt{T_w}\,w_T}_{\le C_w}\cdot\underbrace{\Bigl(\frac1{T_w}\sum_t\lvert\bV_t\rvert^2\Bigr)^{1/2}}_{O_p(1)}\cdot\underbrace{\Bigl(\frac1{T_w}\sum_t\lvert\Delta_{T,t}\rvert^2\Bigr)^{1/2}}_{o_p(1)}.
\end{align*}
The first factor is deterministic and bounded for large $T$ by \eqref{eq:B-wT}, which uses (W), so it is $O_p(1)$. The second is $O_p(1)$ and the third is $o_p(1)$, exactly as in Step (c) of Lemma~\ref{lem:B-gram} (Markov and (N1)). Applying $O_p(1)\cdot o_p(1)=o_p(1)$ twice, first to the product of the second and third factors and then to the first factor times that product, $\lvert R_{4b}\rvert_F=o_p(1)$.

\emph{$R_{4c}$.} For each $t$, by $\lvert AB\rvert_F\le\lvert A\rvert_F\,\lVert B\rVert$ and $\lvert ab^\top\rvert_F=\lvert a\rvert\,\lvert b\rvert$, $\lvert\Delta_{T,t}\Delta_{T,t}^\top D_t\rvert_F\le\lvert\Delta_{T,t}\rvert^2\lVert D_t\rVert\le w_T\lvert\Delta_{T,t}\rvert^2$. So
\[
\lvert R_{4c}\rvert_F\le\frac1{\sqrt{T_w}}\sum_tw_T\lvert\Delta_{T,t}\rvert^2
=\sqrt{T_w}\,w_T\cdot\frac1{T_w}\sum_t\lvert\Delta_{T,t}\rvert^2\le C_w\cdot\frac1{T_w}\sum_t\lvert\Delta_{T,t}\rvert^2
\]
for large $T$, by \eqref{eq:B-wT}. The average tends to $0$ in probability by \eqref{eq:N1a} in (N1), so $\lvert R_{4c}\rvert_F=o_p(1)$ (a bounded number times an $o_p(1)$ term).
\end{proof}

\subsubsection*{Step 4: the central limit theorem for the main term}

We use the following martingale central limit theorem. A \emph{martingale difference sequence} with respect to increasing $\sigma$-fields $\mathcal G_0\subseteq\mathcal G_1\subseteq\cdots$ is a sequence $x_1,x_2,\ldots$ with $x_k$ $\mathcal G_k$-measurable, integrable, and $\E[x_k\mid\mathcal G_{k-1}]=0$: a fair-game increment, whose expected value given everything seen so far is zero.

\begin{theorem}[Martingale central limit theorem]\label{thm:B-mclt}
This is \citet{billingsley1995probability}, Theorem~35.12 (p.~476), with $\sigma=1$.
For each $T$ let $x_{T,1},\ldots,x_{T,k_T}$ be real random variables and $\mathcal G_{T,0}\subseteq\mathcal G_{T,1}\subseteq\cdots\subseteq\mathcal G_{T,k_T}$ be $\sigma$-fields, with $\mathcal G_{T,0}=\{\emptyset,\Omega\}$ trivial. No relation between the $\sigma$-fields of different $T$ is required. Assume:
\begin{enumerate}[label=(\alph*)]
\item each $x_{T,k}$ is $\mathcal G_{T,k}$-measurable, $\E x_{T,k}^2<\infty$, and $\E[x_{T,k}\mid\mathcal G_{T,k-1}]=0$ almost surely;
\item $\sum_{k=1}^{k_T}\E[x_{T,k}^2\mid\mathcal G_{T,k-1}]\convp1$;
\item for every $\eta>0$, $\sum_{k=1}^{k_T}\E\bigl[x_{T,k}^2\mathbf 1\{\lvert x_{T,k}\rvert\ge\eta\}\bigr]\to0$ (Lindeberg's condition: no single term carries a non-negligible share of the variance).
\end{enumerate}
Then $\sum_{k=1}^{k_T}x_{T,k}\convd\N(0,1)$.
\end{theorem}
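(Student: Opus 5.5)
Since the theorem is cited verbatim from \citet{billingsley1995probability}, the proof in the paper is simply that citation. If I had to reprove it, I would use characteristic functions together with a truncation that keeps the conditional variances bounded.

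Write $\sigma_{T,k}^2:=\E[x_{T,k}^2\mid\mathcal G_{T,k-1}]$ and $V_{T,m}:=\sum_{k\le m}\sigma_{T,k}^2$. Each $\sigma_{T,k}^2$ is $\mathcal G_{T,k-1}$-measurable, so the events $\{V_{T,k}\le2\}$ lie in $\mathcal G_{T,k-1}$ and are nonincreasing in $k$. Set $x'_{T,k}:=x_{T,k}\mathbf 1\{V_{T,k}\le2\}$.
\begin{itemize}
\item Each $x'_{T,k}$ is again a martingale difference, because the indicator is predictable.
\item Its conditional variances sum to at most $2$.
\item $x'_{T,k}$ differs from $x_{T,k}$ only on $\{V_{T,k_T}>2\}$, whose probability tends to $0$ by (b).
\item Condition (c) still holds because $\lvert x'_{T,k}\rvert\le\lvert x_{T,k}\rvert$.
\item For (b), $V'_{T,k_T}:=\sum_k\sigma_{T,k}^2\mathbf 1\{V_{T,k}\le2\}$ equals $V_{T,k_T}$ on $\{V_{T,k_T}\le2\}$, so it still tends to $1$ in probability.
\end{itemize}
By Slutsky it therefore suffices to prove the theorem when $V_{T,k_T}\le2$ almost surely.

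Fix $u\in\R$. The key quantity is $\E\exp(iuS_m+\tfrac{u^2}{2}V_m)$, where $S_m:=\sum_{k\le m}x_{T,k}$. I would telescope from $m=0$ to $m=k_T$. The factor $\exp(iuS_{k-1}+\tfrac{u^2}2V_k)$ is $\mathcal G_{T,k-1}$-measurable and bounded by $e^{u^2}$. Hence the $k$-th increment equals the expectation of that factor times $e^{u^2\sigma_k^2/2}\bigl(\E[e^{iux_k}\mid\mathcal G_{k-1}]-e^{-u^2\sigma_k^2/2}\bigr)$. The bracket is controlled by comparing both terms to $1-u^2\sigma_k^2/2$:
\begin{itemize}
\item The Taylor bound $\lvert e^{iy}-1-iy+y^2/2\rvert\le\min(\lvert y\rvert^3,y^2)$, together with $\E[x_k\mid\mathcal G_{k-1}]=0$, gives $\E\min(\lvert ux_k\rvert^3,u^2x_k^2)$.
\item $\lvert e^{-a}-1+a\rvert\le a^2/2$ gives $u^4\sigma_k^4/8$.
\end{itemize}
Summing over $k$, the first contribution is at most $\lvert u\rvert^3\eta\sum_k\E x_k^2+u^2\sum_k\E[x_k^2\mathbf 1\{\lvert x_k\rvert\ge\eta\}]$. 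Here $\sum_k\E x_k^2=\E V_{k_T}\le2$ and the last sum vanishes by (c), so the contribution is small once $\eta$ is chosen small. The second contribution is at most $\tfrac{u^4}{8}\,\E\bigl[\max_k\sigma_k^2\cdot V_{k_T}\bigr]\le\tfrac{u^4}{4}\,\E\max_k\sigma_k^2$. Since $\max_k\sigma_k^2\le\eta^2+\sum_k\E[x_k^2\mathbf 1\{\lvert x_k\rvert\ge\eta\}\mid\mathcal G_{k-1}]$, its expectation tends to $0$ by (c). This yields $\E\exp(iuS_{k_T}+\tfrac{u^2}2V_{k_T})\to1$.

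The final step, which I expect to need the most care, is to remove the random $V_{k_T}$. I would bound $\bigl\lvert\E e^{iuS}e^{u^2/2}-\E e^{iuS}e^{u^2V/2}\bigr\rvert\le\E\bigl\lvert e^{u^2/2}-e^{u^2V/2}\bigr\rvert$, writing $S=S_{k_T}$ and $V=V_{k_T}$. The right side tends to $0$ by bounded convergence, using $V\convp1$ and $V\le2$. Hence $\E e^{iuS}\to e^{-u^2/2}$ for every $u$, and L\'evy's continuity theorem gives $S\convd\N(0,1)$. The delicate points are the predictability of the truncation, which preserves the martingale-difference property, and the uniform bound $e^{u^2}$, which is exactly what the truncation buys.
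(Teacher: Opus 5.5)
Your proof is correct, and it does more than the paper, which contains no argument for this statement. The paper cites Billingsley's Theorem~35.12 and only checks that the statement fits the book's version: a finite row is padded with $x_{T,k}=0$ and $\mathcal G_{T,k}=\mathcal G_{T,k_T}$ for $k>k_T$, so the book's almost-sure convergence side condition is automatic, and $\sigma=1$. Your route has four steps, and each holds:
\begin{itemize}
\item predictable truncation at conditional-variance level $2$, where the indicators are nonincreasing in $k$, which is why the truncated conditional variances sum to at most $2$;
\item the telescoping identity for $\E\exp(iuS_m+\frac{u^2}{2}V_m)$;
\item Lindeberg's condition, which controls both the cubic Taylor remainder and $\E\max_k\sigma_k^2$;
\item bounded convergence to replace $V$ by $1$.
\end{itemize}
This is essentially the classical characteristic-function proof of the cited theorem.

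What the self-contained version buys is that it works with finite rows directly, so the padding and the side condition never come up. It also never uses that $\mathcal G_{T,0}$ is trivial. With that hypothesis dropped, one could take $\mathcal G_{T,0}=\mathcal F_{t_1-1}$ in Lemma~\ref{lem:B-clt}, and the separate leftover term for $k=1$ in Step~(b3) there would disappear. The citation, in turn, buys brevity and a statement checked against the printed source.

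There is one slip to fix. With the prefactor $\exp(iuS_{k-1}+\frac{u^2}{2}V_k)$, the $k$-th increment is $\E\bigl[\exp(iuS_{k-1}+\frac{u^2}{2}V_k)\bigl(\E[e^{iux_k}\mid\mathcal G_{k-1}]-e^{-u^2\sigma_k^2/2}\bigr)\bigr]$, without the extra factor $e^{u^2\sigma_k^2/2}$. That factor belongs there only if the prefactor carries $V_{k-1}$ instead of $V_k$. Your summed bounds should also keep the constant $e^{u^2}$ in front. Neither correction changes the conclusion.
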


Two remarks match this statement to the book's. The book uses infinite rows and pads a finite row itself with $x_{T,k}:=0$, $\mathcal G_{T,k}:=\mathcal G_{T,k_T}$ for $k>k_T$; its side condition that the row sums converge almost surely then holds trivially. The book also allows any limit $\sigma^2>0$ in (b), with conclusion $\sum_kx_{T,k}\convd\sigma\,\N(0,1)$; we take $\sigma=1$. Its Lindeberg condition is exactly our (c).

\begin{lemma}[CLT for the main term]\label{lem:B-clt}
Assume (M), (S), (W), (D), (H). Fix an entry $(r,i)$, $r\in\{1,\dots,np\}$, $i\in\{1,\dots,n\}$. Let $a:=\bar J^{-1}e_r\in\R^{np}$ and
\[
v_{T,ri}:=\frac1{T_w}\sum_t\Sigma_{\varepsilon,ii}(t/T)\,a^\top J_ta .
\]
\begin{enumerate}[label=(\alph*)]
\item $c_v\le v_{T,ri}\le C_v$, with $c_v:=c_\varepsilon c_J/K_J^2$ and $C_v:=K_8^{1/4}K_J/c_J^2$.
\item $(\bar J^{-1}S_0)_{ri}/\sqrt{v_{T,ri}}\convd\N(0,1)$.
\item $\E\lvert S_0\rvert_F^2\le n^2p\,K_8^{1/2}$, so $S_0=O_p(1)$.
\end{enumerate}
\end{lemma}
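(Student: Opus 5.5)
The plan is to write $(\bar J^{-1}S_0)_{ri}=e_r^\top\bar J^{-1}S_0e_i=T_w^{-1/2}\sum_t a^\top\bV_t(\beps_t)_i$, using the symmetry of $\bar J^{-1}$ so that $e_r^\top\bar J^{-1}=a^\top$. This is a normalized sum of the scalars $\xi_t=a^\top\bV_t(\beps_t)_i$. Lemma~\ref{lem:A-identities}(ii) makes them a martingale difference sequence with respect to $(\mathcal F_t)$, and $a$ is deterministic. For part~(a), first compute $\E\xi_t^2$. Condition on $\mathcal F_{t-1}$, use (H) and the $\mathcal F_{t-1}$-measurability of $\bV_t$, and get $\E[\xi_t^2\mid\mathcal F_{t-1}]=\Sigma_{\varepsilon,ii}(t/T)(a^\top\bV_t)^2$. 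Hence $\E\xi_t^2=\Sigma_{\varepsilon,ii}(t/T)a^\top J_ta$, and $v_{T,ri}$ is exactly the average variance. For the bounds I will use $\Sigma_{\varepsilon,ii}\ge c_\varepsilon$ and $a^\top J_ta\ge c_J\lvert a\rvert^2$. I also need $\lvert a\rvert\ge 1/\lVert\bar J\rVert\ge1/K_J$, which follows from $\lvert e_r\rvert=\lvert\bar J a\rvert\le K_J\lvert a\rvert$. This gives the lower bound $c_\varepsilon c_J/K_J^2$. For the upper bound I will use $\Sigma_{\varepsilon,ii}(u)=\E(\beps_t)_i^2\le K_J$ (this identity holds by taking expectations in (H)), $a^\top J_ta\le K_J\lvert a\rvert^2$, and $\lvert a\rvert\le 1/c_J$ from Lemma~\ref{lem:A-target}(i).

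For part~(b), I will apply Theorem~\ref{thm:B-mclt} with $x_{T,k}=\xi_{t}/\sqrt{T_wv_{T,ri}}$, indexing $k$ along the window. The filtration will be $\mathcal G_{T,k}=\mathcal F_t$ for $k\ge1$, and $\mathcal G_{T,0}$ trivial. The trivial first $\sigma$-field is a technicality: the first conditional mean should be taken given $\mathcal F_{t_0-T_w}$, not the trivial field. I will handle it by noting that $\E[\xi\mid\text{trivial}]=\E\xi=0$ by the tower property, and by showing that the first term's conditional variance is negligible: one term is $O_p(1/T_w)$. A cleaner fix is to insert a dummy zero term. Condition~(a) then follows from Lemma~\ref{lem:A-identities}(ii)--(iii).

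Condition~(b) is the main step. By the computation above, the sum of conditional variances is
\[
\frac{1}{v_{T,ri}}\frac1{T_w}\sum_t\Sigma_{\varepsilon,ii}(t/T)\,a^\top\bV_t\bV_t^\top a .
\]
Subtracting its mean, the deviation is a weighted sum $\sum_{l,b}a_la_b\,T_w^{-1}\sum_t\Sigma_{\varepsilon,ii}(t/T)\bigl(f^{lb}_t-\E f^{lb}_t\bigr)$ with $f^{lb}_t=(\bV_t)_l(\bV_t)_b$. The weights $\Sigma_{\varepsilon,ii}(t/T)$ are deterministic and bounded by $K_J$. Lemma~\ref{lem:A-lln} then makes each entry $o_p(1)$, and since $\lvert a\rvert\le1/c_J$ and $v_{T,ri}\ge c_v$, the ratio tends to $1$ in probability. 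Here (H) is essential: without conditional homoskedasticity one would need a law of large numbers for $\bV_t\bV_t^\top(\beps_t)_i^2$.

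Condition~(c), the Lindeberg condition, I will get from a Lyapunov bound. We have $\E x_{T,k}^4=\E\xi_t^4/(T_w^2v^2)$, and $\E\xi_t^4\le\lvert a\rvert^4(\E\lvert\bV_t\rvert^8)^{1/2}(\E\lvert\beps_t\rvert^8)^{1/2}\le\lvert a\rvert^4K_8$. Using $x^2\mathbf 1\{\lvert x\rvert\ge\eta\}\le x^4/\eta^2$, the Lindeberg sum is at most $T_w\cdot K_8/(c_J^4c_v^2T_w^2\eta^2)\to0$. This is where the eighth moments are needed.

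For part~(c), the entries of $S_0$ are $T_w^{-1/2}\sum_t(\bV_t)_l(\beps_t)_i$, each of the form $\sum\xi_t$ with $a=e_l$. Lemma~\ref{lem:A-identities}(iii) gives second moment at most $K_8^{1/2}$ per entry. Summing over the $np\cdot n$ entries gives $\E\lvert S_0\rvert_F^2\le n^2pK_8^{1/2}$, and Markov's inequality gives $O_p(1)$. I expect the only delicate point to be the verification of condition~(b), specifically matching the weighted law of large numbers; the rest is bookkeeping.
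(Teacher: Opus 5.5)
Your proposal is correct and follows essentially the same route as the paper. You use the same martingale-difference representation through $a=\bar J^{-1}e_r$ and check the three conditions of the martingale CLT in the same way: the trivial $\mathcal G_{T,0}$ is handled by the tower property with an $O_p(1/T_w)$ leftover, the conditional variance by the weighted law of large numbers under (H), and Lindeberg by fourth moments. Part~(c) is the same uncorrelated-increments variance count, and your alternative of padding with a dummy zero term is also valid.
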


In words, after multiplication by the fixed matrix $\bar J^{-1}$, each entry of the main term $S_0$ is asymptotically normal with an explicit variance, and $S_0$ as a whole stays bounded in probability. Part (b) is the result; parts (a) and (c) are bounds that later proofs also use (Proposition~\ref{prop:B-main} and Lemma~\ref{lem:C-se}).

\begin{proof}
\emph{Plan.} Part (b) is Theorem~\ref{thm:B-mclt} applied to the normalised summands $x_{T,k}=\xi_{t_k}/\sqrt{T_wv_{T,ri}}$ of Step (b1), where $\xi_t=a^\top\bV_t(\beps_t)_i$. We check the theorem's three conditions: condition (a), each summand is a fair-game increment, follows from (M) (Step (b2)); condition (b), the conditional variances add up to $1$, follows from the law of large numbers, Lemma~\ref{lem:A-lln} (Step (b3)); condition (c), Lindeberg's condition, follows from fourth moments (Step (b4)). Part (a) of the lemma comes first, so that we may divide by $v_{T,ri}$. Part (c) is a direct variance count (Step (c)).

\emph{Step (a): the bounds on $v_{T,ri}$.}
First, the size of $a$. Since $\lambda_{\min}(\bar J)\ge c_J$ (Lemma~\ref{lem:A-target}(i)), we have $\lVert\bar J^{-1}\rVert\le1/c_J$, so $\lvert a\rvert\le\lVert\bar J^{-1}\rVert\,\lvert e_r\rvert\le1/c_J$. Since $e_r=\bar Ja$ and $\lVert\bar J\rVert\le K_J$ (Lemma~\ref{lem:A-target}(i)), $1=\lvert e_r\rvert\le\lVert\bar J\rVert\,\lvert a\rvert\le K_J\lvert a\rvert$, so $\lvert a\rvert\ge1/K_J$.
Second, the quadratic form. By (S)(ii) and the spectral theorem, $a^\top J_ta\ge c_J\lvert a\rvert^2$, and by Lemma~\ref{lem:A-moments}, $a^\top J_ta\le\lVert J_t\rVert\,\lvert a\rvert^2\le K_J\lvert a\rvert^2$.
Third, the noise variance. By (H), $\Sigma_{\varepsilon,ii}(t/T)=e_i^\top\Sigma_\varepsilon(t/T)e_i\ge\lambda_{\min}(\Sigma_\varepsilon(t/T))\ge c_\varepsilon$. For the upper bound, take the $(i,i)$ entry in (H): $\E[(\beps_t)_i^2\mid\mathcal F_{t-1}]=\Sigma_{\varepsilon,ii}(t/T)$ a.s., and the right side is a deterministic number. Taking expectations (the expectation of a constant is the constant) gives the first equality below, the tower property $\E\bigl[\E[X\mid\mathcal F_{t-1}]\bigr]=\E X$ gives the second, and $(\beps_t)_i^2\le\lvert\beps_t\rvert^2$ and Lemma~\ref{lem:A-moments} give the two inequalities:
\begin{equation}\label{eq:B-sigma-bound}
\Sigma_{\varepsilon,ii}(t/T)=\E\bigl[\E[(\beps_t)_i^2\mid\mathcal F_{t-1}]\bigr]=\E(\beps_t)_i^2\le\E\lvert\beps_t\rvert^2\le K_8^{1/4}.
\end{equation}
Combining the three,
\[
v_{T,ri}\ge\frac1{T_w}\sum_tc_\varepsilon\,c_J\lvert a\rvert^2\ge c_\varepsilon c_J\frac1{K_J^2}=c_v,
\qquad
v_{T,ri}\le\frac1{T_w}\sum_tK_8^{1/4}K_J\lvert a\rvert^2\le K_8^{1/4}K_J\frac1{c_J^2}=C_v .
\]

\emph{Step (b1): the sum as a martingale.} Put $\xi_t:=a^\top\bV_t\,(\beps_t)_i$. The $(r,i)$ entry of a matrix $M$ is $M_{ri}=e_r^\top Me_i$: $e_r^\top$ picks row $r$ and $e_i$ picks column $i$. We split the product this way because each factor simplifies. Since $\bar J$ is symmetric, so is $\bar J^{-1}$, and $e_r^\top\bar J^{-1}=(\bar J^{-1}e_r)^\top=a^\top$. Also $S_0e_i=T_w^{-1/2}\sum_t\bV_t(\beps_t)_i$ (column $i$ of $S_0$). Hence
\[
\frac{(\bar J^{-1}S_0)_{ri}}{\sqrt{v_{T,ri}}}=\frac{e_r^\top\bar J^{-1}S_0e_i}{\sqrt{v_{T,ri}}}=\frac1{\sqrt{T_wv_{T,ri}}}\sum_t\xi_t .
\]
The vector $a$ depends on $T$ but is deterministic for each $T$, so by Lemma~\ref{lem:A-identities}(ii) $\{\xi_t\}$ is a martingale difference sequence with respect to $\{\mathcal F_t\}$: $\xi_t$ is $\mathcal F_t$-measurable, integrable, and $\E[\xi_t\mid\mathcal F_{t-1}]=0$. List the window as $t_k:=t_0-T_w+k$, $k=1,\ldots,T_w$, and set
\[
x_{T,k}:=\frac{\xi_{t_k}}{\sqrt{T_wv_{T,ri}}},\qquad \mathcal G_{T,0}:=\{\emptyset,\Omega\},\qquad\mathcal G_{T,k}:=\mathcal F_{t_k}\ (k\ge1).
\]
Then $\sum_{k=1}^{T_w}x_{T,k}=(\bar J^{-1}S_0)_{ri}/\sqrt{v_{T,ri}}$, and the $\sigma$-fields increase: the trivial $\sigma$-field is contained in every $\sigma$-field, and $\mathcal F_{t_{k-1}}\subseteq\mathcal F_{t_k}$. The index $k$ and the $\sigma$-fields $\mathcal G_{T,k}$ only relabel the window to match the notation of Theorem~\ref{thm:B-mclt}. We check conditions (a), (b), (c) of Theorem~\ref{thm:B-mclt}, with $k_T=T_w$.

Two moment bounds are needed. By Cauchy--Schwarz in $\R^{np}$, $\lvert a^\top\bV_t\rvert\le\lvert a\rvert\,\lvert\bV_t\rvert$, and $\lvert(\beps_t)_i\rvert\le\lvert\beps_t\rvert$. So, by Cauchy--Schwarz for expectations ($\E[UW]\le(\E U^2)^{1/2}(\E W^2)^{1/2}$ for $U,W\ge0$), the lower moments above, and $\lvert a\rvert\le1/c_J$,
\begin{align}
\E\xi_t^2&\le\lvert a\rvert^2\,\E\bigl[\lvert\bV_t\rvert^2\lvert\beps_t\rvert^2\bigr]\le\lvert a\rvert^2\bigl(\E\lvert\bV_t\rvert^4\bigr)^{1/2}\bigl(\E\lvert\beps_t\rvert^4\bigr)^{1/2}\le\frac{K_8^{1/4}\,K_8^{1/4}}{c_J^2}=\frac{K_8^{1/2}}{c_J^2},\label{eq:B-xi2}\\
\E\xi_t^4&\le\lvert a\rvert^4\,\E\bigl[\lvert\bV_t\rvert^4\lvert\beps_t\rvert^4\bigr]\le\lvert a\rvert^4\bigl(\E\lvert\bV_t\rvert^8\bigr)^{1/2}\bigl(\E\lvert\beps_t\rvert^8\bigr)^{1/2}\le\frac{K_8}{c_J^4}.\label{eq:B-xi4}
\end{align}

\emph{Step (b2): condition (a).} $x_{T,k}$ is a constant times $\xi_{t_k}$, hence $\mathcal F_{t_k}=\mathcal G_{T,k}$-measurable, and $\E x_{T,k}^2<\infty$ by \eqref{eq:B-xi2}. For $k\ge2$, $\mathcal G_{T,k-1}=\mathcal F_{t_k-1}$ (because $t_{k-1}=t_k-1$), so $\E[x_{T,k}\mid\mathcal G_{T,k-1}]=\E[\xi_{t_k}\mid\mathcal F_{t_k-1}]/\sqrt{T_wv_{T,ri}}=0$. For $k=1$, conditioning on the trivial $\sigma$-field is plain expectation, and by the tower property $\E x_{T,1}=\E\bigl[\E[\xi_{t_1}\mid\mathcal F_{t_1-1}]\bigr]/\sqrt{T_wv_{T,ri}}=0$.

\emph{Step (b3): condition (b), the conditional variance.} Let $q_t:=(a^\top\bV_t)^2\,\Sigma_{\varepsilon,ii}(t/T)$. Since $\bV_t$ is $\mathcal F_{t-1}$-measurable (Lemma~\ref{lem:A-identities}(i)), so is $(a^\top\bV_t)^2$; taking it out of the conditional expectation (the product $\xi_t^2$ is integrable by \eqref{eq:B-xi2}) and using (H),
\[
\E[\xi_t^2\mid\mathcal F_{t-1}]=(a^\top\bV_t)^2\,\E[(\beps_t)_i^2\mid\mathcal F_{t-1}]=(a^\top\bV_t)^2\,\Sigma_{\varepsilon,ii}(t/T)=q_t .
\]
So for $k\ge2$, $\E[x_{T,k}^2\mid\mathcal G_{T,k-1}]=q_{t_k}/(T_wv_{T,ri})$. For $k=1$, $\E[x_{T,1}^2\mid\mathcal G_{T,0}]=\E\xi_{t_1}^2/(T_wv_{T,ri})=\E q_{t_1}/(T_wv_{T,ri})$ by the tower property. (Theorem~\ref{thm:B-mclt}, like the book, fixes $\mathcal G_{T,0}$ trivial, so the first conditional variance is the number $\E q_{t_1}$, while the others are the random $q_{t_k}$.) Summing, and then adding and subtracting $q_{t_1}$ so that every $t\in\mathcal W$ appears in one sum,
\begin{equation}\label{eq:B-condvar}
\begin{aligned}
\sum_{k=1}^{T_w}\E[x_{T,k}^2\mid\mathcal G_{T,k-1}]
&=\frac{1}{T_wv_{T,ri}}\Bigl(\E q_{t_1}+\sum_{k=2}^{T_w}q_{t_k}\Bigr)\\
&=\underbrace{\frac{1}{T_wv_{T,ri}}\sum_tq_t}_{\text{first term}}+\underbrace{\frac{\E q_{t_1}-q_{t_1}}{T_wv_{T,ri}}}_{\text{second term}} .
\end{aligned}
\end{equation}
We show that the first term tends to $1$ (this is where the law of large numbers enters) and the second, a single leftover term, tends to $0$.
\emph{The first term.} Let
\[
\Gamma_T:=\frac1{T_w}\sum_t\Sigma_{\varepsilon,ii}(t/T)\bigl(\bV_t\bV_t^\top-J_t\bigr)\in\R^{np\times np}.
\]
Since $q_t=\Sigma_{\varepsilon,ii}(t/T)\,a^\top\bV_t\bV_t^\top a$ and $v_{T,ri}=T_w^{-1}\sum_t\Sigma_{\varepsilon,ii}(t/T)\,a^\top J_ta$,
\[
\frac1{T_w}\sum_tq_t-v_{T,ri}=a^\top\Gamma_Ta .
\]
Since $\E[\bV_t\bV_t^\top]=J_t$, each entry $(\Gamma_T)_{bc}=T_w^{-1}\sum_t\Sigma_{\varepsilon,ii}(t/T)\bigl((\bV_t)_b(\bV_t)_c-\E[(\bV_t)_b(\bV_t)_c]\bigr)$ is a weighted window average of a centred entry of $\bV_t\bV_t^\top$. The weights $\Sigma_{\varepsilon,ii}(t/T)$ are deterministic and lie in $(0,K_8^{1/4}]$ by \eqref{eq:B-sigma-bound}, so (D)(ii) and Lemma~\ref{lem:A-lln} with $B=K_8^{1/4}$ give $\E(\Gamma_T)_{bc}^2\le2K_8^{1/2}C_D/T_w$. Summing over the $(np)^2$ entries,
\[
\E\lvert\Gamma_T\rvert_F^2\le(np)^2\,\frac{2K_8^{1/2}C_D}{T_w}\longrightarrow0
\]
by (W), so $\lvert\Gamma_T\rvert_F\convp0$ by Chebyshev's inequality. By Cauchy--Schwarz, $\lVert M\rVert\le\lvert M\rvert_F$ and $\lvert a\rvert\le1/c_J$ (Step (a)), $\lvert a^\top\Gamma_Ta\rvert\le\lvert a\rvert\,\lvert\Gamma_Ta\rvert\le\lvert a\rvert^2\lvert\Gamma_T\rvert_F\le\lvert\Gamma_T\rvert_F/c_J^2$. Hence the first term of \eqref{eq:B-condvar} is
\[
\frac{v_{T,ri}+a^\top\Gamma_Ta}{v_{T,ri}}=1+\frac{a^\top\Gamma_Ta}{v_{T,ri}},\qquad
\Bigl\lvert\frac{a^\top\Gamma_Ta}{v_{T,ri}}\Bigr\rvert\le\frac{\lvert\Gamma_T\rvert_F}{c_J^2\,c_v}\convp0
\]
by part (a).
\emph{The second term.} $q_{t_1}\ge0$ and $\E q_{t_1}=\Sigma_{\varepsilon,ii}(t_1/T)\,a^\top J_{t_1}a\le C_v$ by the bounds of Step (a). So
\[
\E\Bigl\lvert\frac{\E q_{t_1}-q_{t_1}}{T_wv_{T,ri}}\Bigr\rvert\le\frac{\E q_{t_1}+\E q_{t_1}}{T_wc_v}\le\frac{2C_v}{c_vT_w}\longrightarrow0,
\]
and the second term is $o_p(1)$ by Markov's inequality. Adding, the sum \eqref{eq:B-condvar} tends to $1$ in probability.

\emph{Step (b4): condition (c), Lindeberg via fourth moments.} For real $x$ and $\eta>0$, $x^2\mathbf 1\{\lvert x\rvert\ge\eta\}\le x^4/\eta^2$: if $\lvert x\rvert<\eta$ the left side is $0$, and if $\lvert x\rvert\ge\eta$ then $1\le x^2/\eta^2$. Taking expectations and using \eqref{eq:B-xi4} and $v_{T,ri}\ge c_v$,
\[
\sum_{k=1}^{T_w}\E\bigl[x_{T,k}^2\mathbf 1\{\lvert x_{T,k}\rvert\ge\eta\}\bigr]
\le\frac1{\eta^2}\sum_{k=1}^{T_w}\frac{\E\xi_{t_k}^4}{T_w^2v_{T,ri}^2}
\le\frac1{\eta^2}\cdot T_w\cdot\frac{K_8}{c_J^4\,T_w^2\,c_v^2}
=\frac{K_8}{\eta^2c_J^4c_v^2}\cdot\frac1{T_w}\longrightarrow0
\]
by (W). Theorem~\ref{thm:B-mclt} now gives $\sum_kx_{T,k}\convd\N(0,1)$, which is (b).

\emph{Step (c): the second moment of $S_0$.} Fix an entry $(b,i)$. Then $(S_0)_{bi}=T_w^{-1/2}\sum_t(\bV_t)_b(\beps_t)_i$, and the summands $(\bV_t)_b(\beps_t)_i$ are the $\xi_t$ of Lemma~\ref{lem:A-identities}(ii) with $a=e_b$ (so $\lvert a\rvert=1$). By Lemma~\ref{lem:A-identities}(iii) they are uncorrelated (fair-game increments at different times are uncorrelated) with second moments at most $K_8^{1/2}$. Hence
\[
\E(S_0)_{bi}^2=\frac1{T_w}\,\E\Bigl(\sum_t(\bV_t)_b(\beps_t)_i\Bigr)^2=\frac1{T_w}\sum_t\E\bigl[(\bV_t)_b^2(\beps_t)_i^2\bigr]\le\frac1{T_w}\cdot T_w\cdot K_8^{1/2}=K_8^{1/2}.
\]
Summing over the $np\cdot n=n^2p$ entries, $\E\lvert S_0\rvert_F^2\le n^2p\,K_8^{1/2}$. By Chebyshev, $P(\lvert S_0\rvert_F>M)\le n^2pK_8^{1/2}/M^2$ for every $M>0$ and every $T$, so $S_0=O_p(1)$.
\end{proof}

\subsubsection*{Step 5: assembly}

The last tool is \emph{Slutsky's lemma} in its simplest form: if $N_T\convd Z$ and $U_T\convp0$, then $N_T+U_T\convd Z$. In words, adding a term that vanishes in probability does not change a limit in distribution.

\begin{proposition}[Asymptotic normality at the window target]\label{prop:B-main}
Assume (M), (S), (W), (D), (H), (N). For every entry $(r,i)$,
\[
\frac{\sqrt{T_w}\,(\hbTheta_{ri}-\bar\bTheta_{ri})}{\sqrt{v_{T,ri}}}\convd\N(0,1),
\qquad\text{and}\qquad
\sqrt{T_w}\,\lvert\hbTheta-\bar\bTheta\rvert_F=O_p(1).
\]
\end{proposition}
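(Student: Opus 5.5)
The plan is to combine Lemmas~\ref{lem:B-decomp}, \ref{lem:B-gram}, \ref{lem:B-remainders} and~\ref{lem:B-clt}, exactly as laid out in the roadmap.

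Work on the event $E_T$ of Lemma~\ref{lem:B-gram}. Since $P(E_T)\to1$, anything that differs from our target quantity only off $E_T$ differs by a term that is $o_p(1)$, so the placeholder value of $\hbTheta$ off $E_T$ is irrelevant. On $E_T$, Lemma~\ref{lem:B-decomp} gives
\[
\sqrt{T_w}(\hbTheta-\bar\bTheta)=\hat J^{-1}(S_0+R),\qquad R:=R_1+R_2+R_3+R_{4a}+R_{4b}+R_{4c}.
\]
Split this as
\[
\bar J^{-1}S_0+(\hat J^{-1}-\bar J^{-1})S_0+\hat J^{-1}R .
\]
For the second term, Lemma~\ref{lem:B-gram}(iii) gives $\lVert\hat J^{-1}-\bar J^{-1}\rVert=o_p(1)$ on $E_T$, and Lemma~\ref{lem:B-clt}(c) gives $S_0=O_p(1)$. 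Using $\lvert AB\rvert_F\le\lVert A\rVert\lvert B\rvert_F$, the second term is $o_p(1)$. For the third term, $\lVert\hat J^{-1}\rVert\le2/c_J$ on $E_T$ and $\lvert R\rvert_F=o_p(1)$ by Lemma~\ref{lem:B-remainders}, so it is also $o_p(1)$.

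Hence $\sqrt{T_w}(\hbTheta-\bar\bTheta)=\bar J^{-1}S_0+U_T$ with $\lvert U_T\rvert_F\convp0$. Take entry $(r,i)$ and divide by $\sqrt{v_{T,ri}}$. Lemma~\ref{lem:B-clt}(a) gives $v_{T,ri}\ge c_v>0$, so $(U_T)_{ri}/\sqrt{v_{T,ri}}\convp0$. Lemma~\ref{lem:B-clt}(b) gives $(\bar J^{-1}S_0)_{ri}/\sqrt{v_{T,ri}}\convd\N(0,1)$. Slutsky's lemma then yields the first claim.

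For the second claim, $\lvert\bar J^{-1}S_0\rvert_F\le c_J^{-1}\lvert S_0\rvert_F=O_p(1)$, and adding the $o_p(1)$ term $U_T$ keeps it $O_p(1)$.

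The only delicate point is the event bookkeeping for $\hat J^{-1}$, since it is defined only on $E_T$. The cleanest way to handle it is to write the exact identity as $\mathbf 1_{E_T}\sqrt{T_w}(\hbTheta-\bar\bTheta)=\mathbf 1_{E_T}(\cdots)$. The remaining piece $\mathbf 1_{E_T^c}\sqrt{T_w}(\hbTheta-\bar\bTheta)$ is nonzero only on an event of vanishing probability, so it is $o_p(1)$ regardless of its size, and the rest of the argument is routine.
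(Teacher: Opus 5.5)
Your proposal is correct and follows the paper's proof essentially step for step: the same split $\bar J^{-1}S_0+(\hat J^{-1}-\bar J^{-1})S_0+\hat J^{-1}R$ on $E_T$, the same lemmas for each piece, and Slutsky's lemma. You also handle the off-$E_T$ bookkeeping as the paper does, bounding the difference on $E_T$ and absorbing the vanishing $P(E_T^c)$. The only difference is cosmetic and lies in the second claim: you reuse $\bar J^{-1}S_0+U_T$ with $\lVert\bar J^{-1}\rVert\le 1/c_J$, whereas the paper bounds $\sqrt{T_w}\,\lvert\hbTheta-\bar\bTheta\rvert_F\le(2/c_J)(\lvert S_0\rvert_F+\lvert R\rvert_F)$ directly on $E_T$; both work.
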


In words, each entry of the estimation error, scaled by $\sqrt{T_w}$ and by the deterministic standard deviation $\sqrt{v_{T,ri}}$, is asymptotically standard normal.

\begin{proof}
Let $E_T=\{\lambda_{\min}(\hat J)\ge c_J/2\}$ as in Lemma~\ref{lem:B-gram}, and $R:=R_1+R_2+R_3+R_{4a}+R_{4b}+R_{4c}$. By the triangle inequality, $\lvert R\rvert_F\le\lvert R_1\rvert_F+\lvert R_2\rvert_F+\lvert R_3\rvert_F+\lvert R_{4a}\rvert_F+\lvert R_{4b}\rvert_F+\lvert R_{4c}\rvert_F$; each term is $o_p(1)$ by Lemma~\ref{lem:B-remainders}, so $\lvert R\rvert_F\convp0$.

\emph{Where we work.} Recall from \eqref{eq:A-thetahat} that the formula for $\hbTheta$ needs $\hat J^{-1}$; where $\hat J$ is singular we set $\hbTheta:=0$ only so that $\hbTheta$ is defined at every outcome (any value would do). On $E_T$, $\hat J$ is invertible (Lemma~\ref{lem:B-gram}(ii)), so $\hbTheta$ is given by the formula and Lemma~\ref{lem:B-decomp} applies. Off $E_T$ we use no bound at all: $P(E_T^c)\to0$, so what happens there cannot affect a limit. The argument below therefore needs only bounds on $E_T$, plus the vanishing term $P(E_T^c)$.

\emph{Split the error.} On $E_T$, by Lemma~\ref{lem:B-decomp}, adding and subtracting $\bar J^{-1}S_0$,
\[
\sqrt{T_w}(\hbTheta-\bar\bTheta)=\hat J^{-1}(S_0+R)=\bar J^{-1}S_0+(\hat J^{-1}-\bar J^{-1})S_0+\hat J^{-1}R .
\]
Take entry $(r,i)$ and divide by $\sqrt{v_{T,ri}}$. Define, on the whole sample space,
\[
Z_T:=\frac{\sqrt{T_w}(\hbTheta_{ri}-\bar\bTheta_{ri})}{\sqrt{v_{T,ri}}},\qquad
 N_T:=\frac{(\bar J^{-1}S_0)_{ri}}{\sqrt{v_{T,ri}}},\qquad
 U_T:=Z_T-N_T .
\]
On $E_T$, $U_T=\bigl[(\hat J^{-1}-\bar J^{-1})S_0+\hat J^{-1}R\bigr]_{ri}/\sqrt{v_{T,ri}}$.

\emph{Bound $U_T$ on $E_T$.} By $\lvert M_{ri}\rvert\le\lVert M\rVert$ (an entry is at most the operator norm), then $\lVert AB\rVert\le\lVert A\rVert\,\lVert B\rVert$ and $\lVert M\rVert\le\lvert M\rvert_F$, then Lemma~\ref{lem:B-gram}(ii)--(iii) and $v_{T,ri}\ge c_v$ (Lemma~\ref{lem:B-clt}(a)),
\begin{align*}
\lvert U_T\rvert
&\le\frac{\lVert(\hat J^{-1}-\bar J^{-1})S_0\rVert+\lVert\hat J^{-1}R\rVert}{\sqrt{c_v}}
\le\frac{\lVert\hat J^{-1}-\bar J^{-1}\rVert\,\lvert S_0\rvert_F+\lVert\hat J^{-1}\rVert\,\lvert R\rvert_F}{\sqrt{c_v}}\\
&\le\frac1{\sqrt{c_v}}\Bigl(\frac2{c_J^2}\lVert\hat J-\bar J\rVert\,\lvert S_0\rvert_F+\frac2{c_J}\lvert R\rvert_F\Bigr)=:\bar\rho_T .
\end{align*}
The bound $\bar\rho_T$ is defined on the whole sample space and $\bar\rho_T\convp0$: $\lVert\hat J-\bar J\rVert=o_p(1)$ (Lemma~\ref{lem:B-gram}(i)) times $\lvert S_0\rvert_F=O_p(1)$ (Lemma~\ref{lem:B-clt}(c)) is $o_p(1)$ (as $O_p(1)\cdot o_p(1)=o_p(1)$); $\lvert R\rvert_F=o_p(1)$; and so the sum is $o_p(1)$. Therefore, for every $\epsilon>0$,
\[
P(\lvert U_T\rvert>\epsilon)\le P(E_T^c)+P(E_T\cap\{\lvert U_T\rvert>\epsilon\})\le P(E_T^c)+P(\bar\rho_T>\epsilon)\longrightarrow0,
\]
so $U_T\convp0$.

\emph{Conclude the first claim.} $N_T\convd\N(0,1)$ by Lemma~\ref{lem:B-clt}(b), and $U_T\convp0$. By Slutsky's lemma, $Z_T=N_T+U_T\convd\N(0,1)$.

\emph{The second claim.} On $E_T$, by Lemma~\ref{lem:B-decomp}, $\lvert AB\rvert_F\le\lVert A\rVert\,\lvert B\rvert_F$, the triangle inequality and Lemma~\ref{lem:B-gram}(ii),
\[
\sqrt{T_w}\,\lvert\hbTheta-\bar\bTheta\rvert_F=\lvert\hat J^{-1}(S_0+R)\rvert_F\le\lVert\hat J^{-1}\rVert\bigl(\lvert S_0\rvert_F+\lvert R\rvert_F\bigr)\le\frac2{c_J}\bigl(\lvert S_0\rvert_F+\lvert R\rvert_F\bigr).
\]
Given $\delta>0$, choose $M$ with $16\,n^2pK_8^{1/2}/(c_J^2M^2)\le\delta$. If the right side exceeds $M$, then $\lvert S_0\rvert_F>c_JM/4$ or $\lvert R\rvert_F>c_JM/4$. So, using Chebyshev with Lemma~\ref{lem:B-clt}(c) for $S_0$,
\[
P\bigl(\sqrt{T_w}\lvert\hbTheta-\bar\bTheta\rvert_F>M\bigr)
\le P(E_T^c)+\frac{16\,n^2pK_8^{1/2}}{c_J^2M^2}+P\bigl(\lvert R\rvert_F>c_JM/4\bigr)
\le\delta+o(1),
\]
since $P(E_T^c)\to0$ and $\lvert R\rvert_F\convp0$. Hence the probability is at most $2\delta$ for all large $T$, which is the definition of $O_p(1)$.
\end{proof}

Proposition~\ref{prop:B-main} divides by the deterministic scale $\sqrt{v_{T,ri}}$, which is unknown in practice. Appendix~\ref{app:se} shows that $T_w\,\hat s_{ri}^2-v_{T,ri}\convp0$, which completes the proof of Theorem~\ref{thm:dml-var}.

\subsection{Standard errors, and completion of the proof of Theorem~\ref{thm:dml-var}}\label{app:se}

\paragraph{Roadmap.}
Appendix~\ref{app:thm1-proof} showed that $\sqrt{T_w}(\hbTheta_{ri}-\bar\bTheta_{ri})$ is approximately $\N(0,v_{T,ri})$ (Proposition~\ref{prop:B-main}). The code does not know $v_{T,ri}$; it divides by its own standard error $\hat s_{ri}$ from \eqref{eq:A-se}. This section shows that $T_w\,\hat s_{ri}^2$ is close to $v_{T,ri}$, so the division gives the right scale. The steps, in order:
\begin{itemize}[nosep]
\item Lemma~\ref{lem:C-sigma}: the residual variance $\hat\sigma_i^2$ is close to the average true noise variance $\bar\sigma_i^2$.
\item Lemma~\ref{lem:C-se}: $\hat\sigma_i^2[\hat J^{-1}]_{rr}$ is then close to $v_{T,ri}$, because under (H) the noise variance changes only slowly over the window.
\item Completion of the proof of Theorem~\ref{thm:dml-var}: Slutsky's lemma turns Proposition~\ref{prop:B-main} into the theorem.
\end{itemize}

\paragraph{Conventions.}
All statements are for $T$ large enough that $T_w>np$, which holds eventually because $T_w\to\infty$ by (W). The formula \eqref{eq:A-se} for $\hat s_{ri}$ (the main-text formula \eqref{eq:ols-se}) needs $\hat J$ to be invertible, and it gives a positive number only if $\hat\sigma_i^2>0$. We use the convention of Appendix~\ref{app:assumptions}: if $\hat J$ is singular or the formula gives $\hat s_{ri}=0$, then $\hat s_{ri}:=1$ and $z_{ri}:=0$. So $\hat s_{ri}>0$ always, and every ratio with $\hat s_{ri}$ in the denominator is defined. The convention never matters for a limit: it can only act off the event $\tilde E_T$ of Lemma~\ref{lem:C-se}, whose probability tends to $1$ and on which the formula holds and is positive, and what happens off an event whose probability tends to one never affects a limit in probability. For a random matrix, ``$=O_p(1)$'' and ``$\convp0$'' are meant entry by entry, as in \emph{Limits} in Appendix~\ref{app:assumptions}; this is the same as saying that the Frobenius norm is $O_p(1)$, or $\convp0$.

Standard matrix and probability rules are named where they are used.

\begin{lemma}[The residual variance is consistent]\label{lem:C-sigma}

Let $\bar\sigma_i^2:=T_w^{-1}\sum_{t\in\mathcal W}\Sigma_{\varepsilon,ii}(t/T)$ and $K_J$ the constant of Lemma~\ref{lem:A-moments}. Under Assumption~\ref{asm:main}, for every $i\in\{1,\ldots,n\}$,
\[
c_\varepsilon\le\bar\sigma_i^2\le K_J\qquad\text{and}\qquad\hat\sigma_i^2-\bar\sigma_i^2\convp0 .
\]
\end{lemma}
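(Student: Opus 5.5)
The bounds on $\bar\sigma_i^2$ come straight from (H) and \eqref{eq:B-sigma-bound}. Each $\Sigma_{\varepsilon,ii}(t/T)\ge\lambda_{\min}(\Sigma_\varepsilon(t/T))\ge c_\varepsilon$, and each is at most $\E\lvert\beps_t\rvert^2\le K_8^{1/4}=K_J$. Averaging over the window preserves both bounds.

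For consistency I will expand the OLS residual. On the event $E_T$ of Lemma~\ref{lem:B-gram}, identity \eqref{eq:A-tildeY} gives
\[
\hat e_t=\tilde\bY_t-\hbTheta^\top\tilde\bT_t=\beps_t+\Delta_{g,t}+(\bTheta_t-\hbTheta)^\top\tilde\bT_t .
\]
Write $\delta_t:=\bTheta_t-\hbTheta=D_t-(\hbTheta-\bar\bTheta)$ and take coordinate $i$. Since $T_w/(T_w-np)\to1$, it suffices to show $T_w^{-1}\sum_t(\hat e_t)_i^2-\bar\sigma_i^2\convp0$. Expanding the square gives three pieces.

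The first piece is the main term $T_w^{-1}\sum_t(\beps_t)_i^2-\bar\sigma_i^2$. Here $\E(\beps_t)_i^2=\Sigma_{\varepsilon,ii}(t/T)$ by (H) and the tower property, so this is a centred window average of an entry of $\beps_t\beps_t^\top$. It tends to zero by Lemma~\ref{lem:A-lln} with $b_t\equiv1$, using (D)(ii).

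The second piece is the square terms $T_w^{-1}\sum_t(\Delta_{g,t})_i^2$ and $T_w^{-1}\sum_t((\delta_t^\top\tilde\bT_t)_i)^2$. The first is bounded by $T_w^{-1}\sum_t\lvert\Delta_{g,t}\rvert^2$. Since $\lvert\Delta_{g,t}\rvert\le\lVert\bTheta_t\rVert\lvert\Delta_{T,t}\rvert+\lvert\Delta_{Y,t}\rvert$ and $\lVert\bTheta_t\rVert\le\lVert\bTheta(0)\rVert+L$ by (S)(i), it is $o_p(1)$ by \eqref{eq:N1a}--\eqref{eq:N1b}. The second is at most $\max_t\lVert\delta_t\rVert^2$ times $\operatorname{tr}\hat J$, where $\max_t\lVert\delta_t\rVert\le w_T+\lvert\hbTheta-\bar\bTheta\rvert_F$. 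Here $w_T\to0$ by \eqref{eq:B-wT}, and $\lvert\hbTheta-\bar\bTheta\rvert_F=O_p(T_w^{-1/2})$ by Proposition~\ref{prop:B-main}. Also $\operatorname{tr}\hat J=O_p(1)$ because $\hat J\to\bar J$ by Lemma~\ref{lem:B-gram}(i) and $\lVert\bar J\rVert\le K_J$. So this term is $o_p(1)$.

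The third piece is the cross terms. By Cauchy--Schwarz for averages, each cross term is at most the root of a square term times the root of another square term. The square terms involving $\Delta_{g,t}$ and $\delta_t^\top\tilde\bT_t$ are $o_p(1)$ by the second piece, and $T_w^{-1}\sum_t(\beps_t)_i^2=O_p(1)$ by Markov's inequality. Hence every cross term is $o_p(1)$.

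Off $E_T$ nothing is needed, since $P(E_T^c)\to0$.

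I expect the only delicate point to be the order of the argument. The decomposition of $\hat e_t$ uses $\hbTheta$ itself, so I must invoke Proposition~\ref{prop:B-main}, which does not depend on this lemma, for the $O_p(T_w^{-1/2})$ rate. I also need the drift term $D_t$ to be handled uniformly through $w_T$ rather than through averaging. Both are covered by the bounds above.
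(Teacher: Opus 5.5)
Your proposal is correct and follows essentially the paper's argument. It uses the same residual identity $\hat e_t=\beps_t+\Delta_{g,t}+(\bTheta_t-\hbTheta)^\top\tilde\bT_t$, the same uniform bound on $\max_t\lVert\bTheta_t-\hbTheta\rVert$ via $w_T$ and Proposition~\ref{prop:B-main}, Lemma~\ref{lem:A-lln} for the noise average, and Cauchy--Schwarz for the cross terms. The only minor deviation is that you control $T_w^{-1}\sum_t\lvert\tilde\bT_t\rvert^2$ as $\operatorname{tr}\hat J=O_p(1)$ via Lemma~\ref{lem:B-gram}(i), whereas the paper splits $\tilde\bT_t=\bV_t-\Delta_{T,t}$ and bounds the two pieces directly.
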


In words, the code's residual variance for equation $i$ settles down to the average true noise variance of equation $i$ over the window.

\begin{proof}
Fix $i$. We write $\langle x\rangle:=T_w^{-1}\sum_{t\in\mathcal W}x_t$ for the window average of a sequence $x_t$. The plan: the residual is the true noise plus a remainder (Step 1); the remainder is small because $\hbTheta$ is close to every $\bTheta_t$ in the window and the nuisance errors are small (Steps 2--3); the squared true noise averages to $\bar\sigma_i^2$ (Step 4); and the cross term and the degrees-of-freedom factor are then handled (Steps 5--6). Step 0 records bounds on $\bar\sigma_i^2$ used later.

\emph{Step 0 (bounds on $\bar\sigma_i^2$).} Fix $t\in\mathcal W$ and put $u:=t/T\in[0,1]$. By (H), $\Sigma_\varepsilon(u)$ equals the symmetric matrix $\E[\beps_t\beps_t^\top\mid\mathcal F_{t-1}]$ almost surely, so $\Sigma_\varepsilon(u)$ is symmetric. By (H) and $e_i^\top Ae_i\ge\lambda_{\min}(A)$ for a symmetric $A$, $\Sigma_{\varepsilon,ii}(u)=e_i^\top\Sigma_\varepsilon(u)e_i\ge\lambda_{\min}(\Sigma_\varepsilon(u))\ge c_\varepsilon$. Taking expectations in (H) (tower property), then Lemma~\ref{lem:A-moments},
\[
\Sigma_{\varepsilon,ii}(u)=\E(\beps_t)_i^2\le\E\lvert\beps_t\rvert^2\le K_J .
\]
Since $\bar\sigma_i^2$ is the average of these numbers over $t\in\mathcal W$, $c_\varepsilon\le\bar\sigma_i^2\le K_J$.

\emph{Step 1 (residual identity).} Recall \eqref{eq:A-tildeY}: $\tilde\bY_t=\bTheta_t^\top\tilde\bT_t+\Delta_{g,t}+\beps_t$. Substituting this into the residual $\hat e_t=\tilde\bY_t-\hbTheta^\top\tilde\bT_t$ of \eqref{eq:A-sigmahat} gives
\begin{align*}
\hat e_t&=\bTheta_t^\top\tilde\bT_t+\Delta_{g,t}+\beps_t-\hbTheta^\top\tilde\bT_t
=\beps_t+\rho_t,\qquad\text{where}\quad\rho_t:=\Delta_{g,t}+(\bTheta_t-\hbTheta)^\top\tilde\bT_t .
\end{align*}
This is pure algebra on the probability-one event of Lemma~\ref{lem:A-identities} and holds for every value of $\hbTheta$, including the convention $\hbTheta=0$ when $\hat J$ is singular. Taking coordinate $i$: $(\hat e_t)_i=(\beps_t)_i+(\rho_t)_i$.

\emph{Step 2 (the coefficient gap is uniformly small).} Let $\delta_T:=\max_{t\in\mathcal W}\lVert\bTheta_t-\hbTheta\rVert$. For each $t\in\mathcal W$, by the triangle inequality, the definition $D_t=\bTheta_t-\bar\bTheta$, and Lemma~\ref{lem:A-target}(iii),
\[
\lVert\bTheta_t-\hbTheta\rVert\le\lVert\bTheta_t-\bar\bTheta\rVert+\lVert\bar\bTheta-\hbTheta\rVert=\lVert D_t\rVert+\lVert\hbTheta-\bar\bTheta\rVert\le w_T+\lVert\hbTheta-\bar\bTheta\rVert .
\]
The right side does not depend on $t$, so it also bounds $\delta_T$. Since $\lVert M\rVert\le\lvert M\rvert_F$,
\[
\lVert\hbTheta-\bar\bTheta\rVert\le\lvert\hbTheta-\bar\bTheta\rvert_F=T_w^{-1/2}\,Z_T,\qquad Z_T:=\bigl\lvert\sqrt{T_w}\,(\hbTheta-\bar\bTheta)\bigr\rvert_F .
\]
Hence $0\le\delta_T\le w_T+T_w^{-1/2}Z_T$. Here $Z_T=O_p(1)$ by Proposition~\ref{prop:B-main}. Both $w_T=L(K_J/c_J)\,T_w/T$ and $T_w^{-1/2}$ are deterministic and tend to $0$ by (W) (which gives $T_w/T\to0$ and $T_w\to\infty$). So $T_w^{-1/2}Z_T\convp0$ (a vanishing number times an $O_p(1)$ term), and hence $\delta_T\convp0$.

\emph{Step 3 (the remainder is small on average).} Since $(\rho_t)_i$ is a coordinate of $\rho_t$, $(\rho_t)_i^2\le\lvert\rho_t\rvert^2$. Expanding $\rho_t$ and using $\lvert x+y\rvert^2\le2\lvert x\rvert^2+2\lvert y\rvert^2$, then $\lvert M^\top x\rvert\le\lVert M\rVert\,\lvert x\rvert$ with the definition of $\delta_T$, then $\tilde\bT_t=\bV_t-\Delta_{T,t}$ from \eqref{eq:A-tildeT} and the same inequality again:
\begin{align*}
\lvert\rho_t\rvert^2&\le2\lvert\Delta_{g,t}\rvert^2+2\bigl\lvert(\bTheta_t-\hbTheta)^\top\tilde\bT_t\bigr\rvert^2\\
&\le2\lvert\Delta_{g,t}\rvert^2+2\delta_T^2\lvert\tilde\bT_t\rvert^2\\
&=2\lvert\Delta_{g,t}\rvert^2+2\delta_T^2\lvert\bV_t-\Delta_{T,t}\rvert^2\\
&\le2\lvert\Delta_{g,t}\rvert^2+4\delta_T^2\lvert\bV_t\rvert^2+4\delta_T^2\lvert\Delta_{T,t}\rvert^2 .
\end{align*}
Averaging over $t\in\mathcal W$ ($\delta_T$ does not depend on $t$):
\begin{equation}\label{eq:C-rho}
\langle\rho_{\cdot,i}^2\rangle\le2\langle\lvert\Delta_{g,\cdot}\rvert^2\rangle+4\delta_T^2\langle\lvert\bV_\cdot\rvert^2\rangle+4\delta_T^2\langle\lvert\Delta_{T,\cdot}\rvert^2\rangle .
\end{equation}
We treat the three averages in turn.
\begin{enumerate}[label=(\roman*)]
\item $\langle\lvert\Delta_{T,\cdot}\rvert^2\rangle\convp0$ by \eqref{eq:N1a} in (N1).
\item $\langle\lvert\Delta_{g,\cdot}\rvert^2\rangle\convp0$. In words: the outcome-equation error $\Delta_{g,t}=\bTheta_t^\top\Delta_{T,t}-\Delta_{Y,t}$ is built from the two learner errors with bounded coefficients, so it is small when they are. In detail, let $\lVert\bTheta\rVert_\infty:=\sup_{u\in[0,1]}\lVert\bTheta(u)\rVert$; by (S)(i), $\lVert\bTheta(u)\rVert\le\lVert\bTheta(0)\rVert+Lu$, so $\lVert\bTheta\rVert_\infty\le\lVert\bTheta(0)\rVert+L<\infty$. By the triangle inequality, $\lvert M^\top x\rvert\le\lVert M\rVert\,\lvert x\rvert$ and $(x+y)^2\le2x^2+2y^2$,
\[
  \lvert\Delta_{g,t}\rvert^2\le\bigl(\lVert\bTheta\rVert_\infty\lvert\Delta_{T,t}\rvert+\lvert\Delta_{Y,t}\rvert\bigr)^2\le C_\Theta\bigl(\lvert\Delta_{T,t}\rvert^2+\lvert\Delta_{Y,t}\rvert^2\bigr),\qquad C_\Theta:=2\max(\lVert\bTheta\rVert_\infty^2,1).
\]
Averaging, $0\le\langle\lvert\Delta_{g,\cdot}\rvert^2\rangle\le C_\Theta\bigl(\langle\lvert\Delta_{T,\cdot}\rvert^2\rangle+\langle\lvert\Delta_{Y,\cdot}\rvert^2\rangle\bigr)$. Both averages on the right tend to $0$ in probability, by \eqref{eq:N1a} and \eqref{eq:N1b} in (N1), so the right side does too, and hence so does the left side.
\item $\langle\lvert\bV_\cdot\rvert^2\rangle=O_p(1)$. By Lemma~\ref{lem:A-moments}, $\E\lvert\bV_t\rvert^2\le K_J$ for every $t$, so by linearity $\E\langle\lvert\bV_\cdot\rvert^2\rangle=\langle\E\lvert\bV_\cdot\rvert^2\rangle\le K_J$ for every $T$. So it is $O_p(1)$ by Markov's inequality.
\end{enumerate}
Since $\delta_T\convp0$ (Step 2), also $\delta_T^2\convp0$. Since $o_p(1)\cdot O_p(1)=o_p(1)$, $\delta_T^2\langle\lvert\bV_\cdot\rvert^2\rangle\convp0$ (using (iii)) and $\delta_T^2\langle\lvert\Delta_{T,\cdot}\rvert^2\rangle\convp0$ (using (i); a term tending to $0$ in probability is $O_p(1)$). With (ii), the right side of \eqref{eq:C-rho} tends to $0$ in probability, hence so does $\langle\rho_{\cdot,i}^2\rangle$.

\emph{Step 4 (the true noise averages to $\bar\sigma_i^2$).} First, the mean of each term. The variable $(\beps_t)_i^2\le\lvert\beps_t\rvert^2$ is integrable, since $\E\lvert\beps_t\rvert^2\le K_J$ by Lemma~\ref{lem:A-moments}. By the tower property and (H) (the conditional expectation of the $(i,i)$ entry is the $(i,i)$ entry of the conditional expectation),
\[
\E\,(\beps_t)_i^2=\E\bigl[\E[(\beps_t)_i^2\mid\mathcal F_{t-1}]\bigr]=\E\bigl[\Sigma_{\varepsilon,ii}(t/T)\bigr]=\Sigma_{\varepsilon,ii}(t/T),
\]
the last step because $\Sigma_\varepsilon$ is deterministic. Averaging, $\langle\E\,\varepsilon_{\cdot,i}^2\rangle=\bar\sigma_i^2$. Next, $f_t:=(\beps_t)_i^2$ is the $(i,i)$ entry of $\beps_t\beps_t^\top$, so it satisfies the hypotheses of Lemma~\ref{lem:A-lln} (finite second moment by Lemma~\ref{lem:A-moments}, covariance bound by (D)(ii); see the last sentence of Lemma~\ref{lem:A-lln}). Lemma~\ref{lem:A-lln} with weights $b_t\equiv1$ (so $B=1$) gives
\[
\E\bigl(\langle\varepsilon_{\cdot,i}^2\rangle-\bar\sigma_i^2\bigr)^2=\E\Bigl(\frac1{T_w}\sum_{t\in\mathcal W}\bigl((\beps_t)_i^2-\E\,(\beps_t)_i^2\bigr)\Bigr)^2\le\frac{2C_D}{T_w}.
\]
By Chebyshev's inequality, for every $\epsilon>0$,
\[
P\bigl(\lvert\langle\varepsilon_{\cdot,i}^2\rangle-\bar\sigma_i^2\rvert>\epsilon\bigr)\le\frac{\E\bigl(\langle\varepsilon_{\cdot,i}^2\rangle-\bar\sigma_i^2\bigr)^2}{\epsilon^2}\le\frac{2C_D}{T_w\,\epsilon^2}\longrightarrow0
\]
by (W). So $\langle\varepsilon_{\cdot,i}^2\rangle-\bar\sigma_i^2\convp0$. Since $0\le\bar\sigma_i^2\le K_J$ (Step 0), also $\langle\varepsilon_{\cdot,i}^2\rangle=O_p(1)$ (a bounded number plus an $o_p(1)$ term).

\emph{Step 5 (the residual average).} Square the identity of Step 1, $(\hat e_t)_i^2=((\beps_t)_i+(\rho_t)_i)^2=(\beps_t)_i^2+2(\beps_t)_i(\rho_t)_i+(\rho_t)_i^2$, and average:
\[
\langle\hat e_{\cdot,i}^2\rangle-\bar\sigma_i^2=\bigl(\langle\varepsilon_{\cdot,i}^2\rangle-\bar\sigma_i^2\bigr)+2\langle\varepsilon_{\cdot,i}\rho_{\cdot,i}\rangle+\langle\rho_{\cdot,i}^2\rangle .
\]
The first term $\convp0$ by Step 4 and the last $\convp0$ by Step 3. For the middle term, Cauchy--Schwarz for averages gives
\[
\bigl\lvert\langle\varepsilon_{\cdot,i}\rho_{\cdot,i}\rangle\bigr\rvert\le\langle\varepsilon_{\cdot,i}^2\rangle^{1/2}\,\langle\rho_{\cdot,i}^2\rangle^{1/2},
\]
where the first factor is $O_p(1)$ (Step 4, taking square roots) and the second $\convp0$ (Step 3, taking square roots). So the product $\convp0$ ($O_p(1)\cdot o_p(1)=o_p(1)$), and so does the middle term. Adding up: $\langle\hat e_{\cdot,i}^2\rangle-\bar\sigma_i^2\convp0$. As in Step 4, it follows that $\langle\hat e_{\cdot,i}^2\rangle=O_p(1)$.

\emph{Step 6 (degrees of freedom).} By definition $\hat\sigma_i^2=(T_w-np)^{-1}\sum_{t\in\mathcal W}(\hat e_t)_i^2=\frac{T_w}{T_w-np}\langle\hat e_{\cdot,i}^2\rangle$. The $-np$ is the usual OLS degrees-of-freedom correction ($np$ regressors per equation), and it is what the code uses; for the limit any fixed shift $c$ would do, since $T_w/(T_w-c)\to1$. Hence
\begin{align*}
\hat\sigma_i^2-\bar\sigma_i^2&=\Bigl(\frac{T_w}{T_w-np}-1\Bigr)\langle\hat e_{\cdot,i}^2\rangle+\bigl(\langle\hat e_{\cdot,i}^2\rangle-\bar\sigma_i^2\bigr)
=\frac{np}{T_w-np}\,\langle\hat e_{\cdot,i}^2\rangle+\bigl(\langle\hat e_{\cdot,i}^2\rangle-\bar\sigma_i^2\bigr).
\end{align*}
The deterministic factor $np/(T_w-np)\to0$ by (W), and $\langle\hat e_{\cdot,i}^2\rangle=O_p(1)$, so the first term $\convp0$. The second $\convp0$ by Step 5. Adding, $\hat\sigma_i^2-\bar\sigma_i^2\convp0$.
\end{proof}

\begin{lemma}[The standard error has the right scale]\label{lem:C-se}

Assume Assumption~\ref{asm:main}. Fix an entry $(r,i)$, let $a=\bar J^{-1}e_r$ and $v_{T,ri}=T_w^{-1}\sum_{t\in\mathcal W}\Sigma_{\varepsilon,ii}(t/T)\,a^\top J_ta$ as in Lemma~\ref{lem:B-clt}, and let $c_v\le v_{T,ri}\le C_v$ be the bounds of Lemma~\ref{lem:B-clt}(a). With $E_T=\{\lambda_{\min}(\hat J)\ge c_J/2\}$ the event of Lemma~\ref{lem:B-gram}, let
\[
\tilde E_T:=E_T\cap\bigl\{\hat\sigma_i^2\ge c_\varepsilon/2\bigr\},\qquad
G_T:=\tilde E_T\cap\bigl\{\lvert T_w\,\hat s_{ri}^2-v_{T,ri}\rvert\le c_v/2\bigr\}.
\]
\begin{enumerate}[label=(\alph*)]
\item $T_w\,\hat s_{ri}^2-v_{T,ri}\convp0$.
\item $P(\tilde E_T)\to1$ and $P(G_T)\to1$. On $\tilde E_T$ (so also on $G_T\subseteq \tilde E_T$), $\hat J$ is invertible and $T_w\,\hat s_{ri}^2=\hat\sigma_i^2[\hat J^{-1}]_{rr}>0$, so the convention for $\hat s_{ri}$ and $z_{ri}$ is not used. On $G_T$, moreover, $\sqrt{T_w}\,\hat s_{ri}\ge\sqrt{c_v/2}$.
\item $\mathbf 1_{G_T}\Bigl\lvert\dfrac{\sqrt{v_{T,ri}}}{\sqrt{T_w}\,\hat s_{ri}}-1\Bigr\rvert\convp0$.
\end{enumerate}
\end{lemma}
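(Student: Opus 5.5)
The plan is to reduce everything to the two consistency results already available: $\hat\sigma_i^2-\bar\sigma_i^2\convp0$ (Lemma~\ref{lem:C-sigma}) and $\hat J^{-1}-\bar J^{-1}\convp0$ on $E_T$ (Lemma~\ref{lem:B-gram}(iii)). On $E_T$ the formula \eqref{eq:A-se} gives $T_w\hat s_{ri}^2=\hat\sigma_i^2[\hat J^{-1}]_{rr}$, so the natural comparison quantity is $\bar\sigma_i^2[\bar J^{-1}]_{rr}$. The one nonroutine point is that $v_{T,ri}$ is not literally this product, because $v_{T,ri}$ weights $a^\top J_ta$ by the time-varying $\Sigma_{\varepsilon,ii}(t/T)$.

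\emph{Step 1 (deterministic: drift of the noise variance).} We have $a=\bar J^{-1}e_r$, hence
\[
a^\top\bar Ja=e_r^\top\bar J^{-1}e_r=[\bar J^{-1}]_{rr}.
\]
We write $\Sigma_{\varepsilon,ii}(t/T)=\Sigma_{\varepsilon,ii}(u_0)+\omega_t$, where $|\omega_t|\le L_\varepsilon T_w/T$ by (H) and \eqref{eq:A-window}. This gives
\[
v_{T,ri}=\Sigma_{\varepsilon,ii}(u_0)[\bar J^{-1}]_{rr}+O(T_w/T),
\qquad
\bar\sigma_i^2=\Sigma_{\varepsilon,ii}(u_0)+O(T_w/T).
\]
The error terms are controlled using $|a|\le1/c_J$ and $\lVert J_t\rVert\le K_J$. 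Consequently
\[
\bigl|v_{T,ri}-\bar\sigma_i^2[\bar J^{-1}]_{rr}\bigr|\le C\,T_w/T\to0
\]
by (W). This is the main obstacle, and it is mild: it is exactly where the Lipschitz part of (H) enters.

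\emph{Step 2 (part (a)).} On $E_T$ we decompose
\[
T_w\hat s_{ri}^2-\bar\sigma_i^2[\bar J^{-1}]_{rr}
=(\hat\sigma_i^2-\bar\sigma_i^2)[\hat J^{-1}]_{rr}
+\bar\sigma_i^2\bigl([\hat J^{-1}]_{rr}-[\bar J^{-1}]_{rr}\bigr).
\]
The first term is $o_p(1)\cdot(2/c_J)$ by Lemma~\ref{lem:C-sigma} and Lemma~\ref{lem:B-gram}(ii). The second is at most $K_J\cdot(2/c_J^2)\lVert\hat J-\bar J\rVert$, which is $o_p(1)$. Off $E_T$ nothing is needed, since $P(E_T^c)\to0$; the same event-splitting argument as in Proposition~\ref{prop:B-main} handles this. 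Combining with Step 1 gives (a).

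\emph{Step 3 (parts (b) and (c)).} Since $\bar\sigma_i^2\ge c_\varepsilon$, Lemma~\ref{lem:C-sigma} gives $P(\hat\sigma_i^2<c_\varepsilon/2)\to0$, and therefore $P(\tilde E_T)\to1$. On $\tilde E_T$ the matrix $\hat J$ is invertible and positive definite, so $[\hat J^{-1}]_{rr}\ge1/\lambda_{\max}(\hat J)>0$. Hence $T_w\hat s_{ri}^2=\hat\sigma_i^2[\hat J^{-1}]_{rr}>0$ and the convention is not triggered. Part (a) then gives $P(G_T)\to1$. On $G_T$ we have $T_w\hat s_{ri}^2\ge v_{T,ri}-c_v/2\ge c_v/2$.

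For (c), on $G_T$ we use $|\sqrt x-\sqrt y|\le|x-y|/(\sqrt x+\sqrt y)$. This bounds $\bigl|\sqrt{v_{T,ri}}/(\sqrt{T_w}\hat s_{ri})-1\bigr|$ by
\[
\frac{|T_w\hat s_{ri}^2-v_{T,ri}|}{\sqrt{c_v/2}\,\bigl(\sqrt{c_v/2}+\sqrt{c_v}\bigr)},
\]
which is $o_p(1)$ by (a).
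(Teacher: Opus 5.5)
Your proposal is correct and follows essentially the paper's proof: you compare through $\bar\sigma_i^2[\bar J^{-1}]_{rr}$, split the random gap into the same two terms, and use the same square-root bound for (c). The only variation is in the deterministic drift gap, which you bound by centring at $\Sigma_{\varepsilon,ii}(u_0)$, whereas the paper compares each $\Sigma_{\varepsilon,ii}(t/T)$ with the window average $\bar\sigma_i^2$ directly and uses $a^\top J_ta\ge0$.

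One small fix is needed: the identity $T_w\hat s_{ri}^2=\hat\sigma_i^2[\hat J^{-1}]_{rr}$ holds on $\tilde E_T$, not on all of $E_T$, since if $\hat\sigma_i^2=0$ the convention sets $\hat s_{ri}:=1$. So run Step 2 on $\tilde E_T$; its probability tends to one by your Step 3 argument, which does not use (a).
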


In words, $\sqrt{T_w}$ times the code's standard error is close to $\sqrt{v_{T,ri}}$, the true standard deviation from Lemma~\ref{lem:B-clt}, and it stays bounded away from zero.

\begin{proof}
Throughout the proof we write
\begin{equation}\label{eq:C-notation}
v:=v_{T,ri},\qquad
h:=[\hat J^{-1}]_{rr}=e_r^\top\hat J^{-1}e_r\ \ (\text{on }\tilde E_T),\qquad
\bar h:=[\bar J^{-1}]_{rr}=e_r^\top\bar J^{-1}e_r ,
\end{equation}
so $h$ and $\bar h$ are the $(r,r)$ diagonal entries of the two inverses, written also as quadratic forms in the unit vector $e_r$ (as in Lemma~\ref{lem:B-clt}); we use whichever form is shorter. The plan: $T_w\,\hat s_{ri}^2=\hat\sigma_i^2h$ on the good event $\tilde E_T$ (Step 0). We compare $\hat\sigma_i^2h$ with $v$ through the middle term $\bar\sigma_i^2\bar h$: the gap $v-\bar\sigma_i^2\bar h$ is deterministic and small because the noise variance changes slowly (Steps 1--2), and the gap $\hat\sigma_i^2h-\bar\sigma_i^2\bar h$ is small in probability by Lemmas~\ref{lem:C-sigma} and~\ref{lem:B-gram} (Step 3). Steps 4--6 then read off (a), (b) and (c).

\emph{Step 0 (the event $\tilde E_T$).} First we show that $P(\tilde E_T)\to1$, by showing that the complement $\tilde E_T^c=E_T^c\cup\{\hat\sigma_i^2<c_\varepsilon/2\}$ has probability tending to $0$. Lemma~\ref{lem:C-sigma} gives $\bar\sigma_i^2\ge c_\varepsilon$, so if $\hat\sigma_i^2<c_\varepsilon/2$ then $\bar\sigma_i^2-\hat\sigma_i^2>c_\varepsilon-c_\varepsilon/2=c_\varepsilon/2$. Thus $\{\hat\sigma_i^2<c_\varepsilon/2\}\subseteq\{\lvert\hat\sigma_i^2-\bar\sigma_i^2\rvert>c_\varepsilon/2\}$, and by a union bound
\[
P(\tilde E_T^c)\le P(E_T^c)+P\bigl(\lvert\hat\sigma_i^2-\bar\sigma_i^2\rvert>c_\varepsilon/2\bigr)\longrightarrow0,
\]
by Lemma~\ref{lem:B-gram}(ii) for the first term and Lemma~\ref{lem:C-sigma} for the second. Second, what holds on $\tilde E_T$. The matrix $\hat J=T_w^{-1}\sum_{t\in\mathcal W}\tilde\bT_t\tilde\bT_t^\top$ is symmetric with $\lambda_{\min}(\hat J)\ge c_J/2$, so (by its spectral decomposition) it is invertible, $\lVert\hat J^{-1}\rVert\le2/c_J$, and $h=e_r^\top\hat J^{-1}e_r\ge\lambda_{\min}(\hat J^{-1})>0$ (by \eqref{eq:C-notation}, and since a quadratic form in a unit vector is at least the smallest eigenvalue). Hence the formula \eqref{eq:A-se} applies and gives $\hat\sigma_i^2h/T_w\ge(c_\varepsilon/2)h/T_w>0$, so the convention is not used and
\begin{equation}\label{eq:C-seformula}
T_w\,\hat s_{ri}^2=\hat\sigma_i^2\,h\qquad\text{on }\tilde E_T .
\end{equation}

\emph{Step 1 (rewrite $\bar\sigma_i^2\bar h$ as a window average).} Since $\bar J$ is symmetric, so is $\bar J^{-1}$, and $a^\top=e_r^\top\bar J^{-1}$. Hence
\[
a^\top\bar Ja=e_r^\top\bar J^{-1}\bar J\bar J^{-1}e_r=e_r^\top\bar J^{-1}e_r=\bar h \quad\text{by \eqref{eq:C-notation}}.
\]
Since $\bar J=T_w^{-1}\sum_{t\in\mathcal W}J_t$, linearity gives $a^\top\bar Ja=T_w^{-1}\sum_{t\in\mathcal W}a^\top J_ta$. Therefore
\[
\bar\sigma_i^2\,\bar h=\frac1{T_w}\sum_{t\in\mathcal W}\bar\sigma_i^2\,a^\top J_ta .
\]

\emph{Step 2 (the deterministic gap $v-\bar\sigma_i^2\bar h$).} Subtracting Step 1 from the definition of $v$,
\[
v-\bar\sigma_i^2\bar h=\frac1{T_w}\sum_{t\in\mathcal W}\bigl(\Sigma_{\varepsilon,ii}(t/T)-\bar\sigma_i^2\bigr)\,a^\top J_ta .
\]
We bound the two factors. For $t\in\mathcal W$, since $\bar\sigma_i^2$ is the average of $\Sigma_{\varepsilon,ii}(t'/T)$ over $t'\in\mathcal W$,
\[
\Sigma_{\varepsilon,ii}(t/T)-\bar\sigma_i^2=\frac1{T_w}\sum_{t'\in\mathcal W}\bigl(\Sigma_{\varepsilon,ii}(t/T)-\Sigma_{\varepsilon,ii}(t'/T)\bigr).
\]
By $\lvert M_{ii}\rvert\le\lVert M\rVert$ and the Lipschitz bound in (H), each summand satisfies
\[
\lvert\Sigma_{\varepsilon,ii}(t/T)-\Sigma_{\varepsilon,ii}(t'/T)\rvert\le\lVert\Sigma_\varepsilon(t/T)-\Sigma_\varepsilon(t'/T)\rVert\le L_\varepsilon\frac{\lvert t-t'\rvert}{T}\le L_\varepsilon\frac{T_w-1}{T},
\]
since two indices in $\mathcal W$ differ by at most $T_w-1$. An average of $T_w$ numbers of absolute value at most $L_\varepsilon(T_w-1)/T$ has absolute value at most $L_\varepsilon(T_w-1)/T$, so
\[
\lvert\Sigma_{\varepsilon,ii}(t/T)-\bar\sigma_i^2\rvert\le L_\varepsilon\frac{T_w-1}{T}\le L_\varepsilon\frac{T_w}{T}\qquad(t\in\mathcal W).
\]
The second factor is nonnegative: $a$ is deterministic and $J_t=\E[\bV_t\bV_t^\top]$, so $a^\top J_ta=\E[(a^\top\bV_t)^2]\ge0$. Hence
\begin{align*}
\lvert v-\bar\sigma_i^2\bar h\rvert
&\le\frac1{T_w}\sum_{t\in\mathcal W}\lvert\Sigma_{\varepsilon,ii}(t/T)-\bar\sigma_i^2\rvert\,a^\top J_ta
\le L_\varepsilon\frac{T_w}{T}\cdot\frac1{T_w}\sum_{t\in\mathcal W}a^\top J_ta
=L_\varepsilon\frac{T_w}{T}\,\bar h,
\end{align*}
using Step 1 in the last equality. Finally $\bar h\le\lVert\bar J^{-1}\rVert\le1/c_J$ by $\lvert M_{rr}\rvert\le\lVert M\rVert$ and Lemma~\ref{lem:A-target}(i). So
\begin{equation}\label{eq:C-drift}
\lvert v-\bar\sigma_i^2\bar h\rvert\le\frac{L_\varepsilon}{c_J}\cdot\frac{T_w}{T}\longrightarrow0,
\end{equation}
because $T_w/T\to0$ by (W).

\emph{Step 3 (the random gap, on $\tilde E_T$).} On $\tilde E_T$, by \eqref{eq:C-seformula}, $T_w\,\hat s_{ri}^2=\hat\sigma_i^2h$. Add and subtract $\bar\sigma_i^2h$:
\[
\hat\sigma_i^2h-\bar\sigma_i^2\bar h=\hat\sigma_i^2h-\bar\sigma_i^2h+\bar\sigma_i^2h-\bar\sigma_i^2\bar h=(\hat\sigma_i^2-\bar\sigma_i^2)\,h+\bar\sigma_i^2\,(h-\bar h).
\]
On $\tilde E_T$, $\lvert h\rvert\le\lVert\hat J^{-1}\rVert\le2/c_J$ (an entry is at most the operator norm; Step 0). Also $\tilde E_T\subseteq E_T$, so Lemma~\ref{lem:B-gram}(iii) applies on $\tilde E_T$ and, with the same fact,
\[
\lvert h-\bar h\rvert=\bigl\lvert[\hat J^{-1}-\bar J^{-1}]_{rr}\bigr\rvert\le\lVert\hat J^{-1}-\bar J^{-1}\rVert\le\frac2{c_J^2}\,\lVert\hat J-\bar J\rVert .
\]
With $0<\bar\sigma_i^2\le K_J$ (Lemma~\ref{lem:C-sigma}) and the triangle inequality:
\[
\mathbf 1_{\tilde E_T}\bigl\lvert T_w\,\hat s_{ri}^2-\bar\sigma_i^2\bar h\bigr\rvert\le\frac2{c_J}\,\lvert\hat\sigma_i^2-\bar\sigma_i^2\rvert+\frac{2K_J}{c_J^2}\,\lVert\hat J-\bar J\rVert .
\]
The first term $\convp0$ by Lemma~\ref{lem:C-sigma} (times the constant $2/c_J$). The second term $\convp0$ by Lemma~\ref{lem:B-gram}(i) (times the constant $2K_J/c_J^2$). Adding, $\mathbf 1_{\tilde E_T}\lvert T_w\,\hat s_{ri}^2-\bar\sigma_i^2\bar h\rvert\convp0$.

\emph{Step 4 (proof of (a)).} By the triangle inequality,
\[
\mathbf 1_{\tilde E_T}\lvert T_w\,\hat s_{ri}^2-v\rvert\le\mathbf 1_{\tilde E_T}\lvert T_w\,\hat s_{ri}^2-\bar\sigma_i^2\bar h\rvert+\lvert\bar\sigma_i^2\bar h-v\rvert .
\]
The first term $\convp0$ by Step 3; the second is deterministic and $\to0$ by \eqref{eq:C-drift}. So the left side $\convp0$. Since $P(\tilde E_T)\to1$ by Step 0, and what happens off $\tilde E_T$ does not affect the limit, we get $T_w\,\hat s_{ri}^2-v\convp0$.

\emph{Step 5 (proof of (b)).} $P(\tilde E_T)\to1$ and the statement about $\tilde E_T$ were shown in Step 0. By a union bound,
\[
P(G_T^c)\le P(\tilde E_T^c)+P\bigl(\lvert T_w\,\hat s_{ri}^2-v\rvert>c_v/2\bigr)\longrightarrow0,
\]
by Step 0 and part (a). On $G_T$, using $v\ge c_v$, $T_w\,\hat s_{ri}^2\ge v-c_v/2\ge c_v-c_v/2=c_v/2$; taking square roots gives $\sqrt{T_w}\,\hat s_{ri}\ge\sqrt{c_v/2}$.

\emph{Step 6 (proof of (c)).} On $G_T$ put $x:=\sqrt{T_w}\,\hat s_{ri}$, so $x\ge\sqrt{c_v/2}>0$ by (b), and $\sqrt v\ge\sqrt{c_v}$. Using $\sqrt v-x=(v-x^2)/(\sqrt v+x)$ and $\sqrt v+x\ge\sqrt{c_v}$,
\[
\Bigl\lvert\frac{\sqrt v}{x}-1\Bigr\rvert=\frac{\lvert\sqrt v-x\rvert}{x}=\frac{\lvert v-x^2\rvert}{(\sqrt v+x)\,x}\le\frac{\lvert v-x^2\rvert}{\sqrt{c_v}\,\sqrt{c_v/2}}=\frac{\sqrt2}{c_v}\,\bigl\lvert T_w\,\hat s_{ri}^2-v\bigr\rvert .
\]
So $\mathbf 1_{G_T}\lvert\sqrt v/x-1\rvert\le(\sqrt2/c_v)\lvert T_w\,\hat s_{ri}^2-v\rvert\convp0$ by part (a).
\end{proof}

\paragraph{Completion of the proof of Theorem~\ref{thm:dml-var}.}
The proof is now a short chain: Proposition~\ref{prop:B-main} gives the normal limit on the scale $\sqrt{v_{T,ri}}$, Lemma~\ref{lem:C-se} swaps that scale for the code's standard error, and Lemma~\ref{lem:A-target}(iv) controls the recentring.

\begin{proof}[Proof of Theorem~\ref{thm:dml-var}]
Fix $(r,i)$ and let $G_T$ be the event of Lemma~\ref{lem:C-se}, so $P(G_T)\to1$ and $\hat s_{ri}>0$ on $G_T$.

\emph{Step 1 (the $t$-ratio).} Let $Q_T:=(\hbTheta_{ri}-\bar\bTheta_{ri})/\hat s_{ri}$ and
\[
X_T:=\frac{\sqrt{T_w}\,(\hbTheta_{ri}-\bar\bTheta_{ri})}{\sqrt{v_{T,ri}}},\qquad
Y_T:=1+\mathbf 1_{G_T}\Bigl(\frac{\sqrt{v_{T,ri}}}{\sqrt{T_w}\,\hat s_{ri}}-1\Bigr).
\]
On $G_T$, multiplying and dividing by $\sqrt{T_w}\sqrt{v_{T,ri}}$,
\[
Q_T=\frac{\sqrt{T_w}\,(\hbTheta_{ri}-\bar\bTheta_{ri})}{\sqrt{v_{T,ri}}}\cdot\frac{\sqrt{v_{T,ri}}}{\sqrt{T_w}\,\hat s_{ri}}=X_TY_T \qquad\text{on }G_T .
\]
Now $X_T\convd\N(0,1)$ by Proposition~\ref{prop:B-main}, $Y_T\convp1$ by Lemma~\ref{lem:C-se}(c), and $U_T:=Q_T-X_TY_T\convp0$, since $P(G_T)\to1$ and $U_T=0$ on $G_T$. ($Q_T=X_TY_T$ holds only on $G_T$; the term $U_T$ makes $Q_T=X_TY_T+U_T$ hold at every outcome.) To use Slutsky's lemma in the additive form stated before Proposition~\ref{prop:B-main}, write
\[
Q_T=X_T+\bigl[X_T(Y_T-1)+U_T\bigr].
\]
A sequence that converges in distribution is $O_p(1)$, so $X_T=O_p(1)$; with $Y_T-1\convp0$ this gives $X_T(Y_T-1)=o_p(1)$, since $O_p(1)\cdot o_p(1)=o_p(1)$. So the bracket is $o_p(1)$, and by Slutsky's lemma, $Q_T\convd\N(0,1)$. This is the first claim of the theorem.

\emph{Step 2 (the bias bound).} Lemma~\ref{lem:A-target}(iv) gives $\lVert\bar\bTheta-\bTheta(u_0)\rVert\le w_T=C\,T_w/T$ with $C:=L\,K_J/c_J$. This bound is used in Step 3.

\emph{Step 3 (recentring at $\bTheta(u_0)$).} Assume $T_w=o(T^{2/3})$, i.e.\ $T_w/T^{2/3}\to0$. Raising to the power $3/2$ (a continuous map with $0\mapsto0$) gives $T_w^{3/2}/T=(T_w/T^{2/3})^{3/2}\to0$. Write
\[
\frac{\hbTheta_{ri}-\bTheta_{ri}(u_0)}{\hat s_{ri}}=Q_T+U_T',\qquad U_T':=\frac{b_T}{\hat s_{ri}},\quad b_T:=\bar\bTheta_{ri}-\bTheta_{ri}(u_0),
\]
where $b_T$ is deterministic. By $\lvert M_{ri}\rvert\le\lVert M\rVert$ and Step 2, $\lvert b_T\rvert\le\lVert\bar\bTheta-\bTheta(u_0)\rVert\le C\,T_w/T$. On $G_T$, by Lemma~\ref{lem:C-se}(b), $\sqrt{T_w}\,\hat s_{ri}\ge\sqrt{c_v/2}$, so
\begin{align*}
\mathbf 1_{G_T}\lvert U_T'\rvert
=\mathbf 1_{G_T}\frac{\sqrt{T_w}\,\lvert b_T\rvert}{\sqrt{T_w}\,\hat s_{ri}}
\le\sqrt{\frac2{c_v}}\;C\,\frac{T_w^{3/2}}{T}\longrightarrow0.
\end{align*}
Since $P(G_T)\to1$, $U_T'\convp0$. By Step 1 and Slutsky's lemma, $(\hbTheta_{ri}-\bTheta_{ri}(u_0))/\hat s_{ri}\convd\N(0,1)$. So $\bar\bTheta_{ri}$ may be replaced by $\bTheta_{ri}(u_0)$.
\end{proof}

\subsection{Definitions and Proof of Proposition~\ref{prop:granger}}\label{app:granger-proof}
\begin{proposition}[Connection to Granger Non-Causality]\label{prop:granger}
Assume the stationary partially linear model~\eqref{eq:dml-Y}--\eqref{eq:dml-T}
with $\bT_t$ square-integrable and $J := \E[\bV_t \bV_t^\top]$ positive definite.
Under causal sufficiency (i.e., $\bW_t$ contains all common causes of
$\bY_t$ and its lags), the entry $\bA_k[i,j] = 0$ as estimated by DML-VAR if
and only if $Y^{(j)}$ does not Granger-cause $Y^{(i)}$ at lag $k$, conditional on
the full information set $\mathcal{F}_{t-1} = \sigma(\bY_{t-1}, \bY_{t-2},
\ldots, \bW_{t-1}, \bW_{t-2}, \ldots)$.
For the direction from Granger non-causality to $\bA_k[i,j]=0$, assume also the \emph{non-degeneracy condition}: $Y^{(j)}_{t-k}$ is not almost surely a function of the other variables in $\mathcal{F}_{t-1}$.
\end{proposition}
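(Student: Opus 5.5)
The plan is to reduce both directions to one identity for the one-step conditional mean. Causal sufficiency is what justifies the martingale-difference condition (M) of Assumption~\ref{asm:main} with $\mathcal F_{t-1}=\sigma(\bY_{t-1},\bY_{t-2},\ldots,\bW_{t-1},\bW_{t-2},\ldots)$: every common cause is observed, so nothing outside $\mathcal F_{t-1}$ predicts $\beps_t$. Taking $\E[\,\cdot\mid\mathcal F_{t-1}]$ of~\eqref{eq:dml-Y}, exactly as in Step~3 of the proof of Lemma~\ref{lem:A-moments}, gives for coordinate $i$
\[
  m_{i,t}:=\E[Y^{(i)}_t\mid\mathcal F_{t-1}]=\sum_{k'=1}^{p}\sum_{j'=1}^{n}\bA_{k'}[i,j']\,Y^{(j')}_{t-k'}+g_i(\bC_t)\quad\text{a.s.}
\]

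The first step is to make precise that ``$\bA_k[i,j]$ as estimated by DML-VAR'' is the true coefficient. I will take this to mean the population DML estimand $J^{-1}\E[\bV_t(\bY_t-g_Y(\bC_t))^\top]$. By the oracle identity~\eqref{eq:A-oracle} of Lemma~\ref{lem:A-identities}, $\bY_t-g_Y(\bC_t)=\bTheta^\top\bV_t+\beps_t$. The tower property together with the $\mathcal F_{t-1}$-measurability of $\bV_t$ gives $\E[\bV_t\beps_t^\top]=0$. Since $J$ is positive definite, the estimand equals $\bTheta$ exactly, so its entry is $\bA_k[i,j]$. Theorem~\ref{thm:dml-var} then connects the sample estimate to this quantity.

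Next I fix the definition of Granger non-causality at lag $k$. Let $\mathcal G_{t-1}\subseteq\mathcal F_{t-1}$ be the $\sigma$-field generated by all variables in $\mathcal F_{t-1}$ except $Y^{(j)}_{t-k}$. Non-causality means $m_{i,t}=\E[Y^{(i)}_t\mid\mathcal G_{t-1}]$ a.s., or equivalently that $m_{i,t}$ is $\mathcal G_{t-1}$-measurable.

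\emph{($\Rightarrow$)} Suppose $\bA_k[i,j]=0$. Every term in the display for $m_{i,t}$ is then a Borel function of variables generating $\mathcal G_{t-1}$, since $\bC_t$ involves only $\bW$. Hence $m_{i,t}$ is $\mathcal G_{t-1}$-measurable, and the tower property gives $m_{i,t}=\E[m_{i,t}\mid\mathcal G_{t-1}]=\E[Y^{(i)}_t\mid\mathcal G_{t-1}]$.

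\emph{($\Leftarrow$)} Suppose non-causality holds, so $m_{i,t}$ is $\mathcal G_{t-1}$-measurable. Rearranging the display gives
\[
\bA_k[i,j]\,Y^{(j)}_{t-k}=m_{i,t}-\sum_{(k',j')\ne(k,j)}\bA_{k'}[i,j']\,Y^{(j')}_{t-k'}-g_i(\bC_t),
\]
and the right side is $\mathcal G_{t-1}$-measurable. If $\bA_k[i,j]\neq0$, dividing shows that $Y^{(j)}_{t-k}$ is a.s.\ a $\mathcal G_{t-1}$-measurable variable. By the Doob--Dynkin lemma it is then a.s.\ a function of the other variables in $\mathcal F_{t-1}$, which contradicts the non-degeneracy condition. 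Hence $\bA_k[i,j]=0$.

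I expect the main obstacle to be not the algebra but the measure-theoretic bookkeeping around $\mathcal G_{t-1}$. This $\sigma$-field is generated by infinitely many variables, so the Doob--Dynkin lemma must be applied with a countable family; it still holds, giving a measurable function on $\R^{\mathbb N}$. A second point needs care. I must check that positive definiteness of $J$ is compatible with, and in fact weaker than, non-degeneracy, and that it is used only to identify the DML estimand. Finally, the paper states the ``causal sufficiency $\Rightarrow$ (M)'' step informally, so I will state it as the operative assumption and not try to derive it.
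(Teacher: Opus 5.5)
Your proposal is correct and follows the paper's proof. It uses the same conditional-mean identity, the same observation that $\bA_k[i,j]=0$ removes $Y^{(j)}_{t-k}$ from that mean (which you make rigorous with the tower property), and the same division by $\bA_k[i,j]$ against non-degeneracy for the converse; your direct computation $J^{-1}\E[\bV_t(\bY_t-g_Y(\bC_t))^\top]=\bTheta$ stands in for the paper's uniqueness-of-decomposition step and its remark that $\bar\bTheta=\bTheta$ under stationarity.

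Drop your planned check that positive definiteness of $J$ is weaker than non-degeneracy. It holds only if non-degeneracy is imposed at every lag entry, not at the single pair $(k,j)$ as the proposition states it, and your argument never needs it because the two are separate hypotheses.
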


Proposition~\ref{prop:granger} connects the algebraic condition $\bA_k[i,j] = 0$ (a statement about model coefficients) to the statistical concept of Granger non-causality (a statement about predictive information).
Before proving the result, we state precise definitions of the key concepts.
These definitions formalize the central insight of \citet{granger1969investigating}: \emph{causes precede and help predict their effects}.

\paragraph{Key Definitions.}

\begin{definition}[Full Information Set]\label{def:info-set}
The \emph{full information set} available at time $t-1$ is the $\sigma$-algebra generated by the entire observable history:
\[
    \mathcal{F}_{t-1}
    \;=\; \sigma\!\bigl(\bY_{t-1}, \bY_{t-2}, \ldots,\;
                        \bW_{t-1}, \bW_{t-2}, \ldots\bigr).
\]
This includes every endogenous variable $Y^{(s)}_{t-\ell}$ and every exogenous variable $W_{r,t-\ell}$ for all $s,r$ and all lags $\ell \geq 1$.
\end{definition}

\begin{definition}[Granger Causality; \citet{granger1969investigating}]\label{def:granger-cause}
Variable $Y^{(j)}$ \emph{Granger-causes} variable $Y^{(i)}$ with respect to the information set $\mathcal{F}_{t-1}$ if there exists some lag $k \geq 1$ such that
\[
    \E\bigl[Y^{(i)}_t \mid \mathcal{F}_{t-1}\bigr]
    \;\neq\;
    \E\bigl[Y^{(i)}_t \mid \mathcal{F}_{t-1} \setminus \{Y^{(j)}_{t-k}\}\bigr].
\]
In words: $Y^{(j)}$ Granger-causes $Y^{(i)}$ if at least one past value $Y^{(j)}_{t-k}$ carries predictive information about $Y^{(i)}_t$ that is \emph{not already contained} in the remaining variables.
Removing that one lagged value from the information set strictly worsens the optimal (mean-squared-error) forecast of $Y^{(i)}_t$.
\end{definition}

\begin{definition}[Granger Non-Causality at Lag $k$]\label{def:granger-noncause}
Variable $Y^{(j)}$ \emph{does not Granger-cause} variable $Y^{(i)}$ \emph{at lag $k$} with respect to $\mathcal{F}_{t-1}$ if
\[
    \E\bigl[Y^{(i)}_t \mid \mathcal{F}_{t-1}\bigr]
    \;=\;
    \E\bigl[Y^{(i)}_t \mid \mathcal{F}_{t-1} \setminus \{Y^{(j)}_{t-k}\}\bigr].
\]
That is, the specific value $Y^{(j)}_{t-k}$ adds \emph{no marginal predictive information} about $Y^{(i)}_t$ beyond what is already known from all other variables.
Note that this is a \emph{lag-specific} concept: $Y^{(j)}$ may Granger-cause $Y^{(i)}$ at lag~1 but not at lag~2, for example.
Equivalently, $Y^{(j)}$ Granger-causes $Y^{(i)}$ at lag $k$ if removing $Y^{(j)}_{t-k}$ from the conditioning set \emph{does} change the optimal forecast of $Y^{(i)}_t$.
\end{definition}

\begin{definition}[Causal Sufficiency]\label{def:causal-suff}
The exogenous variable set $\bW_t$ satisfies \emph{causal sufficiency} for the endogenous system $\bY_t$ if every common cause of any pair of variables in $\bY_t$ (at any lag) is either
\emph{(i)}~already included in $\bY_t$ itself, or
\emph{(ii)}~contained in $\bW_t$.
Formally, there exists no unobserved stochastic process $\bU_t$ such that $\bU_t$ simultaneously causes both $Y^{(j)}_{t-k}$ and $Y^{(i)}_t$ for any $i, j \in \{1,\ldots,n\}$ and any $k \geq 1$.
\end{definition}

\begin{remark}[Why Causal Sufficiency Matters]\label{rem:causal-suff}
Causal sufficiency is the critical bridge between \emph{statistical} Granger causality (a statement about prediction) and \emph{structural} causality (a statement about direct causal mechanisms).
Without it, the equivalence in Proposition~\ref{prop:granger} can fail. Figure~\ref{fig:causal-suff} shows the failure (panel~b) and a hidden variable that does \emph{not} cause one (panel~c):
\begin{itemize}[leftmargin=1.5em]
    \item \textbf{Spurious edges (false positives):} An unobserved common cause $U$ of both $Y^{(j)}$ and $Y^{(i)}$ induces a statistical association between them.
    Since $U$ is not in $\bW_t$, DML cannot control for it.
    The estimator attributes $U$'s confounding effect to a direct $Y^{(j)} \to Y^{(i)}$ link, yielding $\hat{\bA}_k[i,j] \neq 0$ even when no direct causal relationship exists.
    \item \textbf{Hidden mediators (indirect edges, not a failure):} An unobserved mediator $U$ lies on the sole causal pathway from $Y^{(j)}$ to $Y^{(i)}$ (i.e., $Y^{(j)} \to U \to Y^{(i)}$), and no direct $Y^{(j)} \to Y^{(i)}$ edge exists.
    A mediator is not a common cause, so Definition~\ref{def:causal-suff} still holds, and the edge is not lost: with $U$ unobserved, the past of $Y^{(j)}$ carries $U$'s signal. The edge is real and $Y^{(j)}$ does Granger-cause $Y^{(i)}$; only reading it as a \emph{direct} mechanism would be wrong.
\end{itemize}
\end{remark}

\begin{figure}[ht]
\centering
\begin{tikzpicture}[
    scale=1.0,
    every node/.style={font=\small},
    obs/.style={circle, draw, thick, minimum size=10mm, inner sep=0pt},
    unobs/.style={circle, draw, thick, dashed, minimum size=10mm,
                  inner sep=0pt, fill=red!5},
    confobs/.style={rectangle, draw, thick, rounded corners=3pt,
                    fill=gray!12, minimum width=12mm, minimum height=8mm,
                    inner sep=2pt},
    causal/.style={-{Stealth[length=2.5mm]}, thick, blue!70!black},
    confedge/.style={-{Stealth[length=2.2mm]}, thick, gray!60, densely dashed},
    spurious/.style={-{Stealth[length=2.5mm]}, thick, red!70!black, densely dotted},
    hidden/.style={-{Stealth[length=2.5mm]}, thick, red!60!black, dashed},
    inferred/.style={-{Stealth[length=2.5mm]}, very thick, ForestGreen},
    wrong/.style={-{Stealth[length=2.5mm]}, very thick, red!80!black},
    panellabel/.style={font=\bfseries\small, anchor=north},
]

\begin{scope}[shift={(0,0)}]
    \node[obs] (Yja) at (0, 0) {$Y^{(j)}_{t\!-\!k}$};
    \node[obs] (Yia) at (3.5, 0) {$Y^{(i)}_t$};
    \node[confobs] (Wa) at (1.75, 2) {$W_t$};

    \draw[causal] (Yja) -- (Yia)
        node[midway, below, font=\scriptsize, yshift=-2pt] {$\bA_k[i,j]$};
    \draw[confedge] (Wa) -- (Yja);
    \draw[confedge] (Wa) -- (Yia);

    \draw[inferred] (Yja) to[bend right=30] (Yia);
    \node[font=\scriptsize, ForestGreen, anchor=north] at (1.75, -1.1)
        {$\hat{\bA}_k[i,j] \neq 0$ \checkmark};

    \node[panellabel] at (1.75, 3.3)
        {(a) Causal sufficiency holds};
    \node[font=\scriptsize, text width=3.8cm, align=center, anchor=north]
        at (1.75, -1.6)
        {$W_t$ observed and controlled by DML.\\True causal effect correctly identified.};
\end{scope}

\begin{scope}[shift={(6,0)}]
    \node[obs] (Yjb) at (0, 0) {$Y^{(j)}_{t\!-\!k}$};
    \node[obs] (Yib) at (3.5, 0) {$Y^{(i)}_t$};
    \node[unobs] (Ub) at (1.75, 2) {$U_t$};

    \draw[hidden] (Ub) -- (Yjb);
    \draw[hidden] (Ub) -- (Yib);

    \draw[wrong] (Yjb) to[bend right=30] (Yib);
    \node[font=\scriptsize, red!80!black, anchor=north] at (1.75, -1.1)
        {$\hat{\bA}_k[i,j] \neq 0$ {\texttimes}};

    \node[panellabel] at (1.75, 3.3)
        {(b) Spurious edge (false positive)};
    \node[font=\scriptsize, text width=3.8cm, align=center, anchor=north]
        at (1.75, -1.6)
        {$U_t$ unobserved common cause.\\No direct $Y^{(j)}\!\to\!Y^{(i)}$ edge exists,\\but omitted $U$ creates a spurious one.};
\end{scope}

\begin{scope}[shift={(12,0)}]
    \node[obs] (Yjc) at (0, 0) {$Y^{(j)}_{t\!-\!k}$};
    \node[obs] (Yic) at (3.5, 0) {$Y^{(i)}_t$};
    \node[unobs] (Uc) at (1.75, 1.1) {$U_t$};

    \draw[causal] (Yjc) -- (Uc);
    \draw[causal] (Uc) -- (Yic);

    \draw[inferred] (Yjc) to[bend right=30] (Yic);
    \node[font=\scriptsize, ForestGreen, anchor=north] at (1.75, -1.1)
        {$\hat{\bA}_k[i,j] \neq 0$ (indirect)};

    \node[panellabel] at (1.75, 3.3)
        {(c) Hidden mediator (edge kept)};
    \node[font=\scriptsize, text width=3.8cm, align=center, anchor=north]
        at (1.75, -1.6)
        {$U_t$ unobserved mediator on the sole\\causal path $Y^{(j)}\!\to\!U\!\to\!Y^{(i)}$.\\Edge is found, but it is indirect.};
\end{scope}

\matrix[anchor=north west, draw, rounded corners=2pt, inner sep=4pt,
        column sep=6pt, row sep=2pt, font=\scriptsize,
        nodes={anchor=west}]
    at (0, -3.0) {
    \node[obs, minimum size=5mm, font=\tiny] {}; &
        \node{Observed variable}; &
    \node[unobs, minimum size=5mm, font=\tiny] {}; &
        \node{Unobserved variable}; &
    \draw[causal] (0,0) -- (0.6,0); &
        \node{True causal edge}; &
    \draw[confedge] (0,0) -- (0.6,0); &
        \node{Confounder effect}; \\
    \draw[inferred] (0,0) -- (0.6,0); &
        \node{Correctly inferred}; &
    \draw[wrong] (0,0) -- (0.6,0); &
        \node{Incorrectly inferred}; &
    \draw[hidden] (0,0) -- (0.6,0); &
        \node{Hidden (from unobs.)}; &
    &  \\
};

\end{tikzpicture}
\caption{Three scenarios illustrating the role of causal sufficiency in Proposition~\ref{prop:granger}.
\textbf{(a)}~When all confounders are observed ($W_t$), DML correctly controls for them and the estimated coefficient $\hat{\bA}_k[i,j]$ reflects the true causal relationship.
\textbf{(b)}~An unobserved common cause $U_t$ (dashed circle) of $Y^{(j)}$ and $Y^{(i)}$ creates a spurious association: DML estimates $\hat{\bA}_k[i,j] \neq 0$ even though no direct causal link exists.
\textbf{(c)}~An unobserved mediator $U_t$ on the sole causal pathway from $Y^{(j)}$ to $Y^{(i)}$: the indirect effect $Y^{(j)} \to U \to Y^{(i)}$ is not lost. It appears as a nonzero $\hat{\bA}_k[i,j]$ at the lag where the two steps add up, so the edge is real but indirect.
Causal sufficiency (Definition~\ref{def:causal-suff}) rules out scenario (b) by requiring that all common causes are observed; scenario (c) does not violate it, because a mediator is not a common cause.}
\label{fig:causal-suff}
\end{figure}

\paragraph{Proof.}

We now prove the equivalence stated in Proposition~\ref{prop:granger}.
The proof has two directions: showing that a zero coefficient implies Granger non-causality (the ``easier'' direction), and that Granger non-causality implies a zero coefficient (which requires causal sufficiency and the non-degeneracy condition).

\begin{proof}[Proof of Proposition~\ref{prop:granger}]

\medskip\noindent\textbf{Setup and notation.}\;
Recall the partially linear VAR model~\eqref{eq:plvar}. For a specific outcome variable $Y^{(i)}$, the model reads:
\begin{equation}\label{eq:plvar-scalar}
    Y^{(i)}_t
    = \underbrace{\sum_{\ell=1}^{p} \sum_{s=1}^{n} \bA_\ell[i,s]\, Y^{(s)}_{t-\ell}}_{\text{linear in lagged } \bY}
    + \underbrace{g^{(i)}(\bC_t)}_{\text{nonlinear in confounders}}
    + \underbrace{\varepsilon^{(i)}_t}_{\text{innovation}},
\end{equation}
where $\bC_t = (\bW_{t-1}^\top, \ldots, \bW_{t-p}^\top)^\top$ stacks the lagged confounders and $\E[\varepsilon^{(i)}_t \mid \mathcal{F}_{t-1}] = 0$.

The \emph{full information set} (Definition~\ref{def:info-set}) is
$\mathcal{F}_{t-1} = \sigma(\bY_{t-1}, \bY_{t-2}, \ldots, \bW_{t-1}, \bW_{t-2}, \ldots)$.
Taking the conditional expectation of~\eqref{eq:plvar-scalar} with respect to $\mathcal{F}_{t-1}$ and using the zero-mean innovation property, we obtain
\begin{equation}\label{eq:cond-exp}
    \E\bigl[Y^{(i)}_t \mid \mathcal{F}_{t-1}\bigr]
    = \sum_{\ell=1}^{p} \sum_{s=1}^{n} \bA_\ell[i,s]\, Y^{(s)}_{t-\ell}
      + g^{(i)}(\bC_t).
\end{equation}
This is the \emph{best prediction} (in the $L^2$ sense) of $Y^{(i)}_t$ given all available past information.
Notice that this expression has two parts: a \emph{linear} part that depends on lagged $\bY$ through the coefficients $\bA_\ell[i,s]$, and a \emph{nonparametric} part $g^{(i)}(\bC_t)$ that depends only on the confounders.
The coefficient $\bA_k[i,j]$ captures the \emph{direct, linear} contribution of $Y^{(j)}_{t-k}$ to the forecast of $Y^{(i)}_t$.

\medskip\noindent\textbf{Direction (a): Zero coefficient $\Rightarrow$ Granger non-causality.}\;

\emph{Goal:} Show that if $\bA_k[i,j] = 0$, then $Y^{(j)}$ does not Granger-cause $Y^{(i)}$ at lag $k$ (Definition~\ref{def:granger-noncause}).

\emph{Proof of direction (a):}
Suppose $\bA_k[i,j] = 0$.
We need to show that removing $Y^{(j)}_{t-k}$ from the conditioning set does not change the conditional expectation of $Y^{(i)}_t$.
Expanding~\eqref{eq:cond-exp}, we can isolate the contribution of $Y^{(j)}_{t-k}$:
\begin{align}
    \E\bigl[Y^{(i)}_t \mid \mathcal{F}_{t-1}\bigr]
    &= \bA_k[i,j]\, Y^{(j)}_{t-k}
       + \underbrace{\sum_{\substack{(\ell,s) \neq (k,j)}} \bA_\ell[i,s]\, Y^{(s)}_{t-\ell}
       + g^{(i)}(\bC_t)}_{\text{does not depend on } Y^{(j)}_{t-k}} \notag \\[4pt]
    &= 0 \cdot Y^{(j)}_{t-k}
       + \sum_{\substack{(\ell,s) \neq (k,j)}} \bA_\ell[i,s]\, Y^{(s)}_{t-\ell}
       + g^{(i)}(\bC_t) \notag \\[4pt]
    &= \sum_{\substack{(\ell,s) \neq (k,j)}} \bA_\ell[i,s]\, Y^{(s)}_{t-\ell}
       + g^{(i)}(\bC_t). \label{eq:no-yjk}
\end{align}
Since $\bA_k[i,j] = 0$, the term involving $Y^{(j)}_{t-k}$ vanishes.
The remaining expression in~\eqref{eq:no-yjk} depends only on variables in $\mathcal{F}_{t-1} \setminus \{Y^{(j)}_{t-k}\}$.
In particular, the confounder term $g^{(i)}(\bC_t)$ depends only on lagged $\bW$ and is completely unaffected by $Y^{(j)}_{t-k}$.
Therefore:
\[
    \E\bigl[Y^{(i)}_t \mid \mathcal{F}_{t-1}\bigr]
    = \E\bigl[Y^{(i)}_t \mid \mathcal{F}_{t-1} \setminus \{Y^{(j)}_{t-k}\}\bigr],
\]
which is exactly the definition of Granger non-causality at lag $k$ (Definition~\ref{def:granger-noncause}). \hfill $\square$ (Direction a)

\medskip
\emph{Intuition:}
This direction is straightforward---if the coefficient is zero, then $Y^{(j)}_{t-k}$ simply does not appear in the prediction formula.
Removing it from the information set has no effect.
Note that this direction does \emph{not} require causal sufficiency: it holds regardless of whether there are unobserved variables.

\medskip\noindent\textbf{Direction (b): Granger non-causality $\Rightarrow$ zero coefficient.}\;

\emph{Goal:} Show that if $Y^{(j)}$ does not Granger-cause $Y^{(i)}$ at lag $k$, then $\bA_k[i,j] = 0$.

\emph{This is the harder direction.}
It requires three conditions: (i)~the model~\eqref{eq:plvar} is correctly specified, (ii)~causal sufficiency holds, and (iii)~the non-degeneracy condition of Proposition~\ref{prop:granger} holds.
The proof proceeds in three steps.

\medskip
\noindent\textbf{Step 1: Correct model specification under causal sufficiency.}

Under causal sufficiency (Definition~\ref{def:causal-suff}), all common causes of $\bY_t$ are either in $\bY_t$ itself or in $\bW_t$.
This means that the partially linear model~\eqref{eq:plvar} captures \emph{all} relevant sources of variation in $Y^{(i)}_t$: the lagged endogenous variables $\bY_{t-1}, \ldots, \bY_{t-p}$ account for direct causal effects, and the confounder function $g^{(i)}(\bC_t)$ absorbs the effect of every exogenous common cause.
There is no omitted variable, so equation~\eqref{eq:cond-exp} is the \emph{true} conditional expectation:
\[
    \E\bigl[Y^{(i)}_t \mid \mathcal{F}_{t-1}\bigr]
    = \sum_{\ell=1}^{p} \sum_{s=1}^{n} \bA_\ell[i,s]\, Y^{(s)}_{t-\ell}
      + g^{(i)}(\bC_t)
    \quad \text{(exact, not an approximation)}.
\]

\medskip
\noindent\textbf{Step 2: Unique decomposition via DML orthogonality.}

We use the stacked notation of~\eqref{eq:stacked}. Let $\bTheta_{\cdot i}\in\R^{np}$ be column $i$ of $\bTheta$, so that the linear part of~\eqref{eq:cond-exp} is $\bTheta_{\cdot i}^\top\bT_t$. Entry $r:=(k-1)n+j$ of $\bTheta_{\cdot i}$ is $\bTheta_{ri}=\bA_k[i,j]$, the coefficient on $(\bT_t)_r=Y^{(j)}_{t-k}$.
Let $\bV_t:=\bT_t-g_T(\bC_t)$ be the treatment residual of~\eqref{eq:dml-T}: the part of the lags that the confounders cannot predict. Since $\bT_t$ is square-integrable (and so, by stationarity, is $\bY_t$), every expectation below is finite. By construction $\bV_t$ is orthogonal to every square-integrable function of $\bC_t$, and its second-moment matrix is the $J$ of Proposition~\ref{prop:granger}:
\[
        \E\bigl[\bV_t \mid \bC_t\bigr] = \bzero,
    \qquad
    J = \E\bigl[\bV_t\bV_t^\top\bigr].
\]
This orthogonality is critical because it ensures a \emph{unique decomposition} of the conditional expectation into its linear and nonparametric parts.
To see why, suppose two coefficient vectors $\bTheta_{\cdot i}$, $\bTheta_{\cdot i}'$ and two nuisance functions $g^{(i)}$, $g'^{(i)}$ yield the same conditional expectation:
\[
        \bTheta_{\cdot i}^\top\bT_t + g^{(i)}(\bC_t)
    = \bTheta_{\cdot i}'^{\top}\bT_t + g'^{(i)}(\bC_t)
    \quad \text{a.s.}
\]
Rearranging, multiplying on the left by the vector $\bV_t$ and taking expectations gives
\[
        \E\bigl[\bV_t\bT_t^\top\bigr]\bigl(\bTheta_{\cdot i}-\bTheta_{\cdot i}'\bigr)
    = \E\Bigl[\bV_t\bigl(g'^{(i)}(\bC_t) - g^{(i)}(\bC_t)\bigr)\Bigr] = \bzero,
\]
where the last equality uses $\E[\bV_t \mid \bC_t] = \bzero$ (the orthogonality property): a residual with conditional mean zero is uncorrelated with every square-integrable function of $\bC_t$, and $g'^{(i)}(\bC_t)-g^{(i)}(\bC_t)=(\bTheta_{\cdot i}-\bTheta_{\cdot i}')^\top\bT_t$ is one.
The same property simplifies the left-hand side. Since $\bT_t=g_T(\bC_t)+\bV_t$,
\[
        \E\bigl[\bV_t\bT_t^\top\bigr]
    = \underbrace{\E\bigl[\bV_t\,g_T(\bC_t)^\top\bigr]}_{=\,\bzero}
      {}+ \E\bigl[\bV_t\bV_t^\top\bigr]
    = J,
    \qquad\text{so}\qquad
    J\,\bigl(\bTheta_{\cdot i}-\bTheta_{\cdot i}'\bigr)=\bzero .
\]
By the assumption of Proposition~\ref{prop:granger}, $J$ is positive definite, hence invertible, which forces $\bTheta_{\cdot i}=\bTheta_{\cdot i}'$ (and then $g^{(i)}(\bC_t)=g'^{(i)}(\bC_t)$ a.s.).
Hence the whole coefficient vector of equation $i$, and in particular its entry $\bA_k[i,j]$, is \emph{uniquely identified}.
We project on all of $\bV_t$ at once because projecting on the single component $(\bV_t)_r$ would leave cross terms with the other components, which need not vanish unless $J$ is diagonal.

\medskip
\noindent\textbf{Step 3: Granger non-causality forces the coefficient to zero.}

Now suppose that $Y^{(j)}$ does not Granger-cause $Y^{(i)}$ at lag $k$ (Definition~\ref{def:granger-noncause}):
\[
    \E\bigl[Y^{(i)}_t \mid \mathcal{F}_{t-1}\bigr]
    = \E\bigl[Y^{(i)}_t \mid \mathcal{F}_{t-1} \setminus \{Y^{(j)}_{t-k}\}\bigr].
\]
The right-hand side is a function of the variables in $\mathcal{F}_{t-1} \setminus \{Y^{(j)}_{t-k}\}$ only, that is, of the past without $Y^{(j)}_{t-k}$.
By Step~1, the left-hand side equals $\bTheta_{\cdot i}^\top\bT_t + g^{(i)}(\bC_t)$.
In the linear component $\bTheta_{\cdot i}^\top\bT_t$, the only term that involves $Y^{(j)}_{t-k}$ is $\bTheta_{ri}(\bT_t)_r=\bA_k[i,j]\, Y^{(j)}_{t-k}$; every other entry of $\bT_t$ is a different variable of the past.
The confounder component $g^{(i)}(\bC_t)$ depends only on $\bC_t = (\bW_{t-1}^\top, \ldots, \bW_{t-p}^\top)^\top$, which does not include $Y^{(j)}_{t-k}$.
So the non-causality equation reads
\[
        \bA_k[i,j]\, Y^{(j)}_{t-k}
    + \underbrace{\bigl(\text{other terms of } \bTheta_{\cdot i}^\top\bT_t\bigr) + g^{(i)}(\bC_t)}_{\text{does not involve } Y^{(j)}_{t-k}}
    = \underbrace{\E\bigl[Y^{(i)}_t \mid \mathcal{F}_{t-1} \setminus \{Y^{(j)}_{t-k}\}\bigr]}_{\text{does not involve } Y^{(j)}_{t-k}}
    \quad \text{a.s.}
\]
Moving the middle term to the right-hand side, the equality says
\[
        \bA_k[i,j]\, Y^{(j)}_{t-k}
    = \text{a function of the other variables in } \mathcal{F}_{t-1}
    \quad \text{a.s.}
\]
If $\bA_k[i,j]\neq0$, dividing by it would make $Y^{(j)}_{t-k}$ itself almost surely a function of the other variables in $\mathcal{F}_{t-1}$.
The non-degeneracy condition of Proposition~\ref{prop:granger} rules this out, so $\bA_k[i,j] = 0$.
(Positive definiteness of $J$, discussed after (S)(ii) in Appendix~\ref{app:assumptions}, does not give this by itself: $Y^{(j)}_{t-k}$ could be a nonlinear function of the other lags, or of lags beyond~$p$.)

Finally, this $\bA_k[i,j]$ is also the coefficient that DML-VAR estimates.
In the stationary model the coefficients do not change over time, $\bTheta_t=\bTheta$, so the window target~\eqref{eq:A-target} is $\bar\bTheta=\bar J^{-1}\bar J\,\bTheta=\bTheta$.
Under the conditions of Theorem~\ref{thm:dml-var}, the estimator $\hbTheta$ is centred on $\bar\bTheta$, which here equals $\bTheta$, and entry $r$ of $\bTheta$ is $\bA_k[i,j]$.
\hfill $\square$ (Direction b)

\medskip
\noindent\textbf{Summary.}
Causal sufficiency (Definition~\ref{def:causal-suff}) rules out the spurious-edge failure mode:
\begin{enumerate}[leftmargin=2em]
    \item There are no unobserved common causes $\Rightarrow$ no omitted variable bias $\Rightarrow$ no spurious edges.
    \item Unobserved mediators are not ruled out, and need not be: the indirect effect shows up as a nonzero coefficient at the lag where the two steps add up, so no edge is masked (Remark~\ref{rem:causal-suff}).
\end{enumerate}
Under causal sufficiency and the non-degeneracy condition, the partially linear model~\eqref{eq:plvar} is correctly specified, the DML coefficients are uniquely identified, and the equivalence $\bA_k[i,j] = 0 \iff \text{Granger non-causality at lag } k$ holds exactly.
\end{proof}

\section{Complete Experimental Results}\label{app:results}

Table~\ref{tab:full-results} reports all evaluation metrics for every method, observability level, and first-stage learner combination in the synthetic benchmark.

\begin{table*}[!htbp]
\centering
\caption{Complete synthetic benchmark results. All metrics are averaged over the rolling evaluation window. Obs.: observability level. FDR, Power and F1 use BH at level $q=0.05$. VARX treats the observed confounders as extra series; its lag $\hat{p}_t$ is chosen on the equations of the $n$ target series only, and only their edges are scored. Bold indicates the best value in each column.}\label{tab:full-results}
\footnotesize
\setlength{\tabcolsep}{3.4pt}
\begin{tabular}{llcccccc}
\toprule
\textbf{Method} & \textbf{Obs.} & \textbf{Nonzero MAE}$\downarrow$ & \textbf{Fcst.\ MAE}$\downarrow$ & \textbf{Lag RMSE}$\downarrow$ & \textbf{FDR}$\downarrow$ & \textbf{Power}$\uparrow$ & \textbf{F1}$\uparrow$ \\
\midrule
\multicolumn{8}{l}{\emph{Non-DML methods}} \\
\midrule
VAR               & none      & 0.0605 & 0.1100 & 1.475 & 0.129 & 0.655 & 0.734 \\
ACLE-VAR           & none      & 0.0655 & 0.1094 & 1.006 & 0.147 & 0.661 & 0.730 \\
VARX               & all       & 0.0627 & 0.1092 & 1.239 & 0.076 & 0.650 & 0.755 \\
ACLE-VARX           & all       & 0.0691 & 0.1081 & 1.099 & 0.080 & 0.651 & 0.754 \\
VARX               & partial-2 & 0.0604 & 0.1098 & 1.292 & 0.096 & 0.647 & 0.745 \\
ACLE-VARX           & partial-2 & 0.0679 & 0.1091 & 1.060 & 0.108 & 0.651 & 0.743 \\
VARX               & partial-1 & 0.0601 & 0.1093 & 1.386 & 0.108 & 0.648 & 0.739 \\
ACLE-VARX           & partial-1 & 0.0659 & 0.1084 & 1.030 & 0.122 & 0.654 & 0.738 \\
\midrule
\multicolumn{8}{l}{\emph{DML methods: Extra Trees}} \\
\midrule
OR-VARX            & all       & 0.0627 & 0.0945 & 1.212 & 0.094 & 0.649 & 0.749 \\
ORACLE-VARX        & all       & 0.0626 & 0.0943 & 1.073 & 0.101 & 0.652 & 0.749 \\
OR-VARX            & partial-2 & 0.0595 & 0.0988 & 1.232 & 0.085 & 0.652 & 0.753 \\
ORACLE-VARX        & partial-2 & 0.0607 & 0.0984 & 1.033 & 0.096 & 0.656 & 0.751 \\
OR-VARX            & partial-1 & \textbf{0.0572} & 0.1097 & 1.532 & 0.152 & 0.662 & 0.724 \\
ORACLE-VARX        & partial-1 & 0.0620 & 0.1088 & 1.045 & 0.175 & 0.671 & 0.719 \\
\midrule
\multicolumn{8}{l}{\emph{DML methods: Random Forest}} \\
\midrule
OR-VARX            & all       & 0.0715 & 0.0961 & 1.247 & 0.103 & 0.650 & 0.741 \\
ORACLE-VARX        & all       & 0.0647 & 0.0958 & 1.107 & 0.117 & 0.652 & 0.736 \\
OR-VARX            & partial-2 & 0.0621 & 0.1013 & 1.371 & 0.100 & 0.651 & 0.746 \\
ORACLE-VARX        & partial-2 & 0.0617 & 0.1004 & 1.029 & 0.114 & 0.655 & 0.743 \\
OR-VARX            & partial-1 & 0.0606 & 0.1129 & 1.619 & 0.155 & 0.658 & 0.718 \\
ORACLE-VARX        & partial-1 & 0.0642 & 0.1125 & 1.164 & 0.186 & 0.665 & 0.707 \\
\midrule
\multicolumn{8}{l}{\emph{DML methods: LightGBM}} \\
\midrule
OR-VARX            & all       & 0.0665 & 0.1023 & 1.288 & \textbf{0.045} & 0.666 & 0.779 \\
ORACLE-VARX        & all       & 0.0579 & 0.1007 & 0.962 & 0.047 & \textbf{0.673} & \textbf{0.784} \\
OR-VARX            & partial-2 & 0.0687 & 0.1068 & 1.311 & 0.072 & 0.651 & 0.757 \\
ORACLE-VARX        & partial-2 & 0.0643 & 0.1048 & 1.052 & 0.084 & 0.655 & 0.755 \\
OR-VARX            & partial-1 & 0.0582 & 0.1130 & 1.303 & 0.169 & 0.651 & 0.713 \\
ORACLE-VARX        & partial-1 & 0.0661 & 0.1121 & 1.068 & 0.180 & 0.660 & 0.714 \\
\midrule
\multicolumn{8}{l}{\emph{DML methods: XGBoost}} \\
\midrule
OR-VARX            & all       & 0.1057 & 0.1111 & 1.609 & 0.081 & 0.667 & 0.762 \\
ORACLE-VARX        & all       & 0.0797 & 0.1103 & 1.037 & 0.103 & 0.671 & 0.752 \\
OR-VARX            & partial-2 & 0.0958 & 0.1205 & 1.745 & 0.084 & 0.652 & 0.753 \\
ORACLE-VARX        & partial-2 & 0.0802 & 0.1194 & 0.937 & 0.106 & 0.650 & 0.743 \\
OR-VARX            & partial-1 & 0.0777 & 0.1354 & 2.165 & 0.131 & 0.654 & 0.732 \\
ORACLE-VARX        & partial-1 & 0.0660 & 0.1384 & \textbf{0.915} & 0.173 & 0.661 & 0.717 \\
\midrule
\multicolumn{8}{l}{\emph{DML methods: TabPFN}} \\
\midrule
OR-VARX            & all       & 0.0671 & 0.0913 & 1.159 & 0.056 & 0.660 & 0.770 \\
ORACLE-VARX        & all       & 0.0601 & \textbf{0.0910} & 0.981 & 0.066 & \textbf{0.673} & 0.775 \\
OR-VARX            & partial-2 & 0.0594 & 0.0980 & 1.222 & 0.064 & 0.650 & 0.761 \\
ORACLE-VARX        & partial-2 & 0.0606 & 0.0973 & 1.023 & 0.076 & 0.654 & 0.758 \\
OR-VARX            & partial-1 & 0.0588 & 0.1080 & 1.423 & 0.150 & 0.659 & 0.727 \\
ORACLE-VARX        & partial-1 & 0.0618 & 0.1078 & 1.023 & 0.167 & 0.667 & 0.723 \\
\bottomrule
\end{tabular}
\end{table*}

\subsection{Partial Observability}\label{app:partial-obs}
At Partial-2 and Partial-1 the data are still generated by all three
confounders, but the method is given only two or one of them. The hidden
confounders then act like the hidden common cause of
Figure~\ref{fig:causal-suff}: they are autocorrelated and drive every
series at lag~1, so they create spurious lagged links between the
observed series. This violates causal sufficiency, one of the assumptions under which
edges are causal (Remark~\ref{rem:causal-suff}) and BH's FDR guarantee
holds (Section~\ref{sec:theory}). These levels therefore test robustness
outside the method's assumptions, not its guarantees. The two
causal-discovery baselines of Appendix~\ref{app:cd-baselines} (PCMCI and
VAR-LiNGAM) rely on the same assumption.

Lag recovery barely changes: across learners and levels, the lag RMSE of
ORACLE-VARX stays within 0.92--1.16 (LightGBM: 0.962 with all
confounders, 1.052 with two, 1.068 with one), and the adaptive methods
stay well ahead of their non-adaptive counterparts (Extra Trees with one
confounder: OR-VARX 1.532 $\to$ ORACLE-VARX 1.045, a 32\% improvement).
Edge discovery degrades, as expected once causal sufficiency fails: the
FDR of ORACLE-VARX (LightGBM) rises from 0.047 (All) to 0.084 (Partial-2)
and 0.180 (Partial-1), at similar power (Table~\ref{tab:full-results}).

\subsection{Causal-Discovery Baselines}\label{app:cd-baselines}
\paragraph{Setup.}
We run PCMCI~\citep{runge2019detecting} and
VAR-LiNGAM~\citep{hyvarinen2010estimation} from scratch in every rolling
window. PCMCI (tigramite 5.2.10.1) first prunes each variable's candidate
parents with a screening step at the default level $0.2$, then tests each
remaining link conditional on the selected parents; we use its linear
partial-correlation test (ParCorr). VAR-LiNGAM (lingam 1.13.0, defaults)
fits a VAR whose lag order is chosen by BIC within $1,\dots,5$, and then
prunes it with the adaptive lasso, a penalised refit that sets small
coefficients to exactly zero. Both use lags $1,\dots,p_{\max}=5$, the
$n=3$ series and the observed confounders as variables, and only rows
before output day $t$: PCMCI's regression targets are rows
$t-200,\dots,t-1$, the targets of VAR at $p_{\max}$, and VAR-LiNGAM fits
VAR's 205-row window. As for every method, only links among the $n$
series are scored.
\begin{itemize}
    \item \emph{Edges.} PCMCI's $p$-values for the $n^2p_{\max}$ links go
    through the same per-window BH step at $q=0.05$. VAR-LiNGAM's edges are
    its nonzero pruned lag coefficients; adaptive-lasso pruning has no FDR
    control.
    \item \emph{Coefficients and forecasts.} PCMCI returns a graph, not a
    model, so we refit each series by OLS on its discovered parents (plus
    confounder links with $p\le 0.05$), and take its lag order to be the
    largest lag with a discovered edge. VAR-LiNGAM forecasts from its
    fitted model; its VAR has no intercept, so each window is centred
    first.
\end{itemize}
Table~\ref{tab:cd-baselines} scores them on the DML methods' days
(Table~\ref{tab:full-results}); the code is \texttt{scripts/run\_cd\_baselines.py}.

\begin{table*}[!htbp]
\centering
\caption{Causal-discovery baselines on the synthetic benchmark, with ORACLE-VARX at full observability for reference. Same days and metrics as Table~\ref{tab:full-results}.}\label{tab:cd-baselines}
\footnotesize
\setlength{\tabcolsep}{3.4pt}
\begin{tabular}{llcccccc}
\toprule
\textbf{Method} & \textbf{Obs.} & \textbf{Nonzero MAE}$\downarrow$ & \textbf{Fcst.\ MAE}$\downarrow$ & \textbf{Lag RMSE}$\downarrow$ & \textbf{FDR}$\downarrow$ & \textbf{Power}$\uparrow$ & \textbf{F1}$\uparrow$ \\
\midrule
PCMCI + OLS refit & all       & 0.0646 & 0.1081 & 1.214 & 0.045 & 0.685 & 0.792 \\
PCMCI + OLS refit & partial-2 & 0.0664 & 0.1089 & 1.223 & 0.052 & 0.672 & 0.780 \\
PCMCI + OLS refit & partial-1 & 0.0664 & 0.1083 & 1.211 & 0.052 & 0.672 & 0.780 \\
PCMCI + OLS refit & none      & 0.0657 & 0.1091 & 1.136 & 0.066 & 0.673 & 0.775 \\
\midrule
VAR-LiNGAM        & all       & 0.0704 & 0.1074 & 1.132 & 0.187 & 0.641 & 0.703 \\
VAR-LiNGAM        & partial-2 & 0.0710 & 0.1082 & 1.132 & 0.196 & 0.641 & 0.697 \\
VAR-LiNGAM        & partial-1 & 0.0709 & 0.1081 & 1.132 & 0.209 & 0.641 & 0.692 \\
VAR-LiNGAM        & none      & 0.0703 & 0.1091 & 1.132 & 0.216 & 0.641 & 0.690 \\
\midrule
ORACLE-VARX, LightGBM & all   & 0.0579 & 0.1007 & 0.962 & 0.047 & 0.673 & 0.784 \\
ORACLE-VARX, TabPFN   & all   & 0.0601 & 0.0910 & 0.981 & 0.066 & 0.673 & 0.775 \\
\bottomrule
\end{tabular}
\end{table*}

\paragraph{Results.}
With all confounders observed, PCMCI and ORACLE-VARX (LightGBM) find edges
about equally well (FDR 0.045 vs 0.047, power 0.685 vs 0.673), and
ORACLE-VARX is better at what it is built for: forecasts (MAE 0.091 with
TabPFN vs 0.108), the lag order (RMSE 0.96--0.98 vs 1.21) and the
coefficients (nonzero MAE 0.058--0.060 vs 0.065). PCMCI tests every lag up
to $p_{\max}$, while our edges are tested only within the selected lag
order $\hat p_t$, so an under-selected lag also hides its edges; running
our BH step on the saved $z$-statistics over all $p_{\max}$ lags instead
gives FDR 0.023, power 0.676 and F1 0.797 (LightGBM). Below All, PCMCI's
FDR stays near 0.05 (0.052 at Partial-2 and Partial-1; 0.062 at None on
VAR's days, against 0.129 for VAR), while ours rises
(Appendix~\ref{app:partial-obs}); both methods are then outside their
causal-sufficiency assumption. VAR-LiNGAM's BIC picks lag~1 on every
scored day, its pruned edges have FDR 0.19--0.22, and its forecasts are
about as accurate as VARX's. Its distinctive step, identifying same-day
effects from non-Gaussian noise, has nothing to work on here: the
benchmark has no same-day effects and Gaussian structural noise, so its
lagged edges are pruned VAR coefficients without FDR control.

\section{ETF Trading Results}\label{app:etf-trading}

\subsection{Portfolio Construction}

We evaluate the economic value of the discovered causal structure through a
simple long-short trading strategy. At each trading day~$t$, the model produces
one-day-ahead log return forecasts $\hat{r}_{i,t}$ for each ETF~$i$. We
consider three portfolio construction rules:

\paragraph{Naive (sign-based).} Positions are determined solely by the sign of the
forecast:
\[
    w_{i,t} = \frac{\operatorname{sgn}(\hat{r}_{i,t})}{\sum_{j=1}^{n} |\operatorname{sgn}(\hat{r}_{j,t})|}.
\]

\paragraph{Weighted (magnitude-scaled).} Position sizes are proportional to the
magnitude of the forecasted return, so that assets with stronger predicted moves
receive larger allocations:
\[
    w_{i,t} = \frac{\hat{r}_{i,t}}{\sum_{j=1}^{n} |\hat{r}_{j,t}|}.
\]

\paragraph{Top-$q$ (quantile-filtered).} Only assets whose absolute forecast
exceeds the $q$-th quantile of $\{|\hat{r}_{j,t}|\}_{j=1}^{n}$ are
traded, that is, the $\lfloor n(1-q)\rfloor$ assets with the largest absolute forecasts. Among the selected assets, positions are sign-based and equally
weighted. We report results for $q \in \{0.25, 0.50, 0.75\}$, which trade 6, 4 and 2 of the 9 ETFs. The labels ``Top~25\%'', ``Top~50\%'' and ``Top~75\%'' in the tables name this quantile threshold, not the share of assets traded: Top~75\% is the most concentrated book.

\medskip
All strategies have unit gross exposure (positions are normalized so that
$\sum_i |w_{i,t}| = 1$ each day) and are rebalanced daily. The net exposure
$G_t = \sum_i w_{i,t}$ is generally not zero, so we offset it with an opposite
position in SPY, which makes the combined book dollar-neutral:
\[
    r_{\mathrm{adj},t} = \sum_{i} w_{i,t}\, r_{i,t} - G_t \cdot b_t,
\]
where $r_{i,t}$ and $b_t$ are the simple open-to-close returns of ETF~$i$
and of SPY on day~$t$. This is a dollar hedge, not a beta hedge: it treats
each sector ETF as having unit beta to SPY, so some market exposure can remain. Transaction costs are not modeled. All returns
are compounded multiplicatively: the cumulative return at day~$T$ is
$\prod_{t=1}^{T}(1 + r_{\mathrm{adj},t}) - 1$.

\subsection{Results}

Table~\ref{tab:etf-pnl} reports annualized returns and Sharpe ratios for the
weighted strategy across all methods using the \texttt{all10} confounder
preset (10 macroeconomic variables). Methods without confounders (VAR, ACLE-VAR)
are included as baselines.

A notable pattern emerges: simpler methods (ACLE-VAR, VARX) achieve higher
Sharpe ratios than the DML-based methods. The DML forecast still uses
the confounders, through the learned $\hat{\E}[\bY_t\mid\bW]$ term, but
fitting that term and the lag effects from 504 noisy days adds variance;
a short-lag VAR without confounders lets its lag coefficients absorb the
predictable part of the confounding instead (Section~\ref{sec:etf}). The primary contribution of ORACLE-VARX is the
quality of the recovered causal graph, not forecast-based trading performance.

\begin{table*}[!htbp]
\centering
\caption{ETF portfolio performance (weighted strategy, market-adjusted). All
DML methods use the \texttt{all10} confounder preset (10 macroeconomic
variables). Annualized return and Sharpe ratio are reported over the common
test period (2004--2020).}\label{tab:etf-pnl}
\small
\begin{tabular}{llcc}
\toprule
\textbf{Method} & \textbf{First-Stage Learner} & \textbf{Ann.\ Return (\%)} & \textbf{Sharpe Ratio} \\
\midrule
VAR               & ---         & 5.93 & 1.08 \\
ACLE-VAR          & ---         & 7.34 & 1.32 \\
VARX              & ---         & 4.22 & 0.75 \\
ACLE-VARX         & ---         & 4.80 & 0.86 \\
\midrule
OR-VARX           & Extra Trees & 3.51 & 0.67 \\
ORACLE-VARX       & Extra Trees & 3.55 & 0.68 \\
OR-VARX           & TabPFN      & 2.40 & 0.47 \\
ORACLE-VARX       & TabPFN      & 1.82 & 0.36 \\
\bottomrule
\end{tabular}
\end{table*}

\subsection{Full Strategy Comparison}\label{app:etf-strategies}

Tables~\ref{tab:perf-vix}--\ref{tab:perf-all10} disaggregate the results
across all five trading strategies and all three confounder presets.
Each cell reports the annualized return~(\%) and Sharpe ratio, separated by
a slash. Methods without confounders (VAR, ACLE-VAR) are repeated across
tables for ease of comparison.

\begin{table*}[!htbp]
\centering
\caption{Full strategy performance: \texttt{vix} confounder preset. Each
cell shows Ann.\ Return~(\%)\,/\,Sharpe Ratio (market-adjusted,
2004--2020). VAR and ACLE-VAR use no confounders and are included as
baselines. Best Sharpe per column in \textbf{bold}.}\label{tab:perf-vix}
\footnotesize
\begin{tabular}{llccccc}
\toprule
\textbf{Method} & \textbf{Learner} & \textbf{Naive} & \textbf{Weighted} & \textbf{Top 50\%} & \textbf{Top 25\%} & \textbf{Top 75\%} \\
\midrule
VAR         & ---           & 4.32\,/\,1.04 & 5.93\,/\,1.08 & 6.16\,/\,0.95 & 5.12\,/\,1.01 & 8.41\,/\,0.86 \\
ACLE-VAR    & ---           & 4.64\,/\,\textbf{1.15} & 7.34\,/\,\textbf{1.32} & 8.54\,/\,\textbf{1.31} & 6.73\,/\,\textbf{1.31} & 10.14\,/\,\textbf{1.04} \\
\midrule
VARX        & OLS           & 3.35\,/\,0.83 & 6.00\,/\,1.11 & 7.16\,/\,1.10 & 5.59\,/\,1.09 & 8.98\,/\,0.92 \\
ACLE-VARX   & OLS           & 3.09\,/\,0.78 & 5.83\,/\,1.07 & 6.47\,/\,1.00 & 5.85\,/\,1.14 & 8.76\,/\,0.90 \\
\midrule
OR-VARX     & LightGBM      & 1.94\,/\,0.51 & 2.55\,/\,0.50 & 2.63\,/\,0.45 & 1.99\,/\,0.43 & 2.04\,/\,0.26 \\
OR-VARX     & XGBoost       & 0.98\,/\,0.27 & 1.43\,/\,0.29 & 2.07\,/\,0.36 & 1.79\,/\,0.38 & 3.21\,/\,0.38 \\
OR-VARX     & Extra Trees   & 2.99\,/\,0.77 & 3.74\,/\,0.71 & 4.02\,/\,0.65 & 2.75\,/\,0.57 & 5.59\,/\,0.59 \\
OR-VARX     & Random Forest & 2.71\,/\,0.71 & 3.70\,/\,0.72 & 4.81\,/\,0.78 & 3.51\,/\,0.73 & 5.73\,/\,0.62 \\
OR-VARX     & TabPFN        & 2.82\,/\,0.73 & 3.47\,/\,0.68 & 3.25\,/\,0.54 & 3.66\,/\,0.75 & 3.71\,/\,0.43 \\
\midrule
ORACLE-VARX & LightGBM      & 0.65\,/\,0.18 & 1.44\,/\,0.29 & 2.49\,/\,0.42 & 1.65\,/\,0.35 & 1.52\,/\,0.20 \\
ORACLE-VARX & XGBoost       & 0.16\,/\,0.06 & 0.38\,/\,0.10 & 0.78\,/\,0.15 & 0.85\,/\,0.19 & 0.85\,/\,0.14 \\
ORACLE-VARX & Extra Trees   & 2.39\,/\,0.63 & 4.33\,/\,0.83 & 4.48\,/\,0.73 & 5.05\,/\,1.02 & 5.76\,/\,0.61 \\
ORACLE-VARX & Random Forest & 2.48\,/\,0.65 & 3.67\,/\,0.70 & 4.00\,/\,0.65 & 2.85\,/\,0.60 & 3.89\,/\,0.44 \\
ORACLE-VARX & TabPFN        & 2.13\,/\,0.56 & 3.57\,/\,0.70 & 4.89\,/\,0.80 & 4.67\,/\,0.94 & 4.55\,/\,0.51 \\
\bottomrule
\end{tabular}
\end{table*}

\begin{table*}[!htbp]
\centering
\caption{Full strategy performance: \texttt{macro5} confounder preset. Each
cell shows Ann.\ Return~(\%)\,/\,Sharpe Ratio (market-adjusted,
2004--2020). Best Sharpe per column in \textbf{bold}.}\label{tab:perf-macro5}
\footnotesize
\begin{tabular}{llccccc}
\toprule
\textbf{Method} & \textbf{Learner} & \textbf{Naive} & \textbf{Weighted} & \textbf{Top 50\%} & \textbf{Top 25\%} & \textbf{Top 75\%} \\
\midrule
VAR         & ---           & 4.32\,/\,1.04 & 5.93\,/\,1.08 & 6.16\,/\,0.95 & 5.12\,/\,1.01 & 8.41\,/\,0.86 \\
ACLE-VAR    & ---           & 4.64\,/\,\textbf{1.15} & 7.34\,/\,\textbf{1.32} & 8.54\,/\,\textbf{1.31} & 6.73\,/\,\textbf{1.31} & 10.14\,/\,\textbf{1.04} \\
\midrule
VARX        & OLS           & 0.87\,/\,0.23 & 2.34\,/\,0.43 & 2.51\,/\,0.40 & 1.51\,/\,0.32 & 2.99\,/\,0.35 \\
ACLE-VARX   & OLS           & 1.99\,/\,0.48 & 4.66\,/\,0.81 & 5.25\,/\,0.79 & 3.56\,/\,0.69 & 7.87\,/\,0.82 \\
\midrule
OR-VARX     & LightGBM      & 0.11\,/\,0.05 & 0.47\,/\,0.11 & $-$0.94\,/\,$-$0.12 & $-$0.36\,/\,$-$0.05 & 0.52\,/\,0.10 \\
OR-VARX     & XGBoost       & $-$0.21\,/\,$-$0.03 & $-$0.83\,/\,$-$0.12 & $-$0.95\,/\,$-$0.12 & $-$0.16\,/\,$-$0.01 & $-$2.96\,/\,$-$0.27 \\
OR-VARX     & Extra Trees   & 2.12\,/\,0.55 & 3.72\,/\,0.71 & 5.70\,/\,0.90 & 3.37\,/\,0.70 & 6.14\,/\,0.66 \\
OR-VARX     & Random Forest & 0.21\,/\,0.07 & 2.11\,/\,0.41 & 3.51\,/\,0.57 & 2.22\,/\,0.46 & 4.76\,/\,0.53 \\
OR-VARX     & TabPFN        & 2.43\,/\,0.63 & 2.86\,/\,0.54 & 2.74\,/\,0.45 & 3.55\,/\,0.71 & 3.95\,/\,0.45 \\
\midrule
ORACLE-VARX & LightGBM      & 0.59\,/\,0.17 & 1.57\,/\,0.31 & 1.32\,/\,0.24 & 1.41\,/\,0.31 & 0.97\,/\,0.15 \\
ORACLE-VARX & XGBoost       & $-$0.67\,/\,$-$0.15 & $-$0.78\,/\,$-$0.11 & 0.06\,/\,0.04 & $-$0.23\,/\,$-$0.02 & $-$2.20\,/\,$-$0.19 \\
ORACLE-VARX & Extra Trees   & 2.02\,/\,0.54 & 4.42\,/\,0.85 & 7.18\,/\,1.15 & 4.18\,/\,0.85 & 8.55\,/\,0.92 \\
ORACLE-VARX & Random Forest & $-$0.38\,/\,$-$0.08 & 1.56\,/\,0.32 & 2.74\,/\,0.47 & 1.13\,/\,0.25 & 5.14\,/\,0.59 \\
ORACLE-VARX & TabPFN        & 2.28\,/\,0.59 & 3.91\,/\,0.74 & 4.50\,/\,0.73 & 3.74\,/\,0.77 & 6.24\,/\,0.68 \\
\bottomrule
\end{tabular}
\end{table*}

\begin{table*}[!htbp]
\centering
\caption{Full strategy performance: \texttt{all10} confounder preset. Each
cell shows Ann.\ Return~(\%)\,/\,Sharpe Ratio (market-adjusted,
2004--2020). Best Sharpe per column in \textbf{bold}.}\label{tab:perf-all10}
\footnotesize
\begin{tabular}{llccccc}
\toprule
\textbf{Method} & \textbf{Learner} & \textbf{Naive} & \textbf{Weighted} & \textbf{Top 50\%} & \textbf{Top 25\%} & \textbf{Top 75\%} \\
\midrule
VAR         & ---           & 4.32\,/\,1.04 & 5.93\,/\,1.08 & 6.16\,/\,0.95 & 5.12\,/\,1.01 & 8.41\,/\,0.86 \\
ACLE-VAR    & ---           & 4.64\,/\,\textbf{1.15} & 7.34\,/\,\textbf{1.32} & 8.54\,/\,\textbf{1.31} & 6.73\,/\,\textbf{1.31} & 10.14\,/\,\textbf{1.04} \\
\midrule
VARX        & OLS           & 1.87\,/\,0.46 & 4.22\,/\,0.75 & 5.03\,/\,0.77 & 3.45\,/\,0.66 & 7.27\,/\,0.77 \\
ACLE-VARX   & OLS           & 1.89\,/\,0.47 & 4.80\,/\,0.86 & 5.68\,/\,0.86 & 3.38\,/\,0.65 & 5.98\,/\,0.66 \\
\midrule
OR-VARX     & LightGBM      & $-$0.44\,/\,$-$0.09 & 0.13\,/\,0.05 & $-$0.33\,/\,$-$0.02 & $-$0.46\,/\,$-$0.07 & 1.44\,/\,0.19 \\
OR-VARX     & XGBoost       & 0.79\,/\,0.21 & 0.29\,/\,0.08 & 0.64\,/\,0.13 & 0.15\,/\,0.06 & $-$2.95\,/\,$-$0.27 \\
OR-VARX     & Extra Trees   & 2.52\,/\,0.65 & 3.51\,/\,0.67 & 4.52\,/\,0.74 & 3.26\,/\,0.68 & 4.35\,/\,0.49 \\
OR-VARX     & Random Forest & 0.44\,/\,0.13 & 1.21\,/\,0.25 & 1.44\,/\,0.26 & 0.60\,/\,0.15 & 6.74\,/\,0.73 \\
OR-VARX     & TabPFN        & 1.94\,/\,0.52 & 2.40\,/\,0.47 & 3.25\,/\,0.53 & 2.01\,/\,0.43 & 1.55\,/\,0.20 \\
\midrule
ORACLE-VARX & LightGBM      & 0.00\,/\,0.02 & 0.66\,/\,0.15 & 0.97\,/\,0.18 & 0.39\,/\,0.10 & 2.13\,/\,0.26 \\
ORACLE-VARX & XGBoost       & 1.02\,/\,0.27 & 0.90\,/\,0.19 & 1.49\,/\,0.26 & 1.16\,/\,0.26 & $-$1.84\,/\,$-$0.16 \\
ORACLE-VARX & Extra Trees   & 2.08\,/\,0.54 & 3.55\,/\,0.68 & 4.23\,/\,0.71 & 3.51\,/\,0.72 & 7.27\,/\,0.80 \\
ORACLE-VARX & Random Forest & 0.48\,/\,0.15 & 0.90\,/\,0.20 & 1.08\,/\,0.21 & 0.50\,/\,0.13 & 3.74\,/\,0.44 \\
ORACLE-VARX & TabPFN        & 0.86\,/\,0.24 & 1.82\,/\,0.36 & 2.73\,/\,0.47 & 2.03\,/\,0.44 & 2.58\,/\,0.31 \\
\bottomrule
\end{tabular}
\end{table*}

\subsection{Cumulative PnL}\label{app:etf-pnl-plots}

Figure~\ref{fig:pnl-comparison} shows cumulative market-adjusted returns
for three representative configurations, illustrating the five trading
strategies over the full test period.

\begin{figure*}[htbp]
\centering
\includegraphics[width=0.75\textwidth]{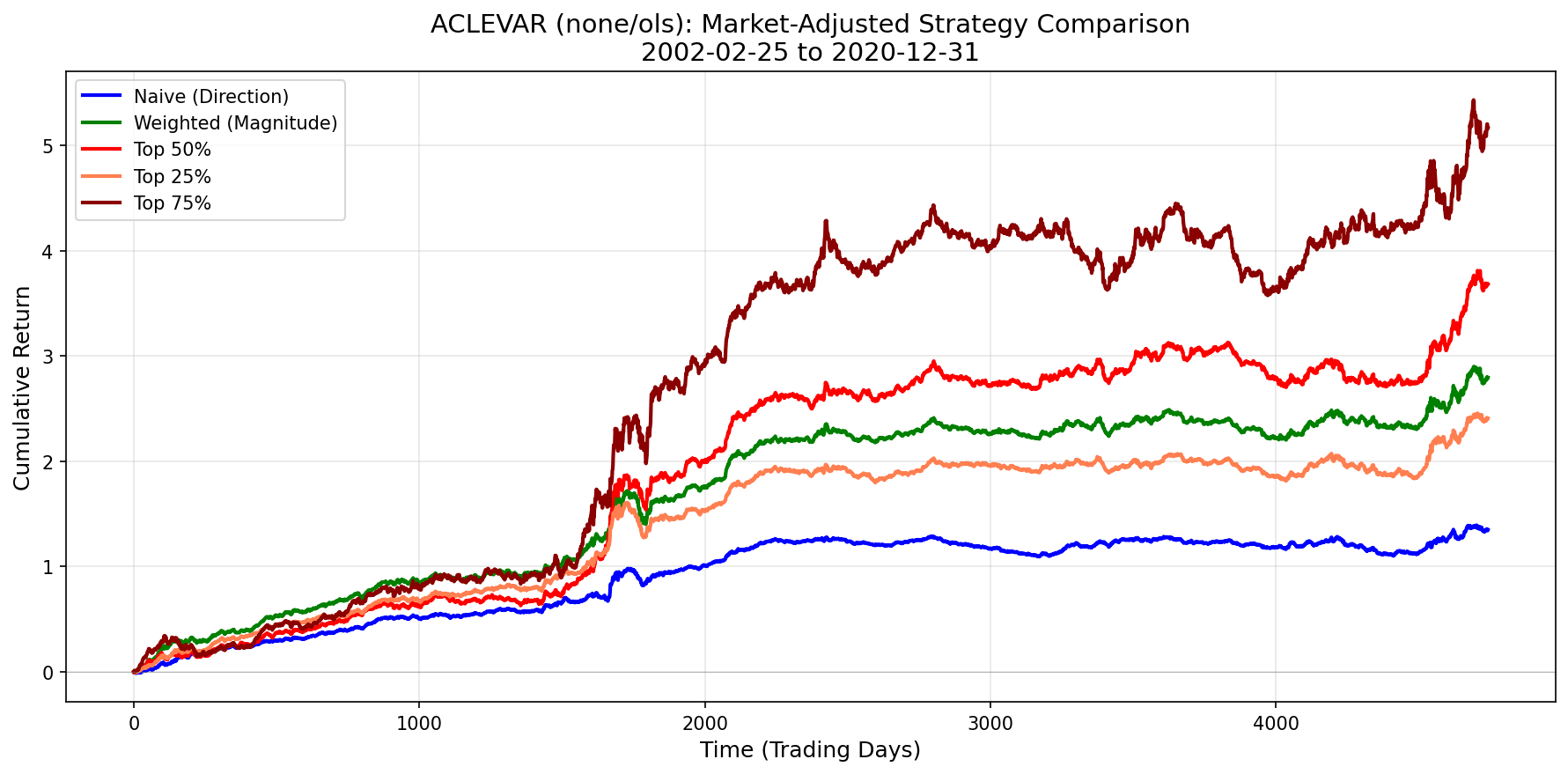}
\\[\smallskipamount]
{\small (a) ACLE-VAR (no confounders)}
\\[\medskipamount]
\includegraphics[width=0.75\textwidth]{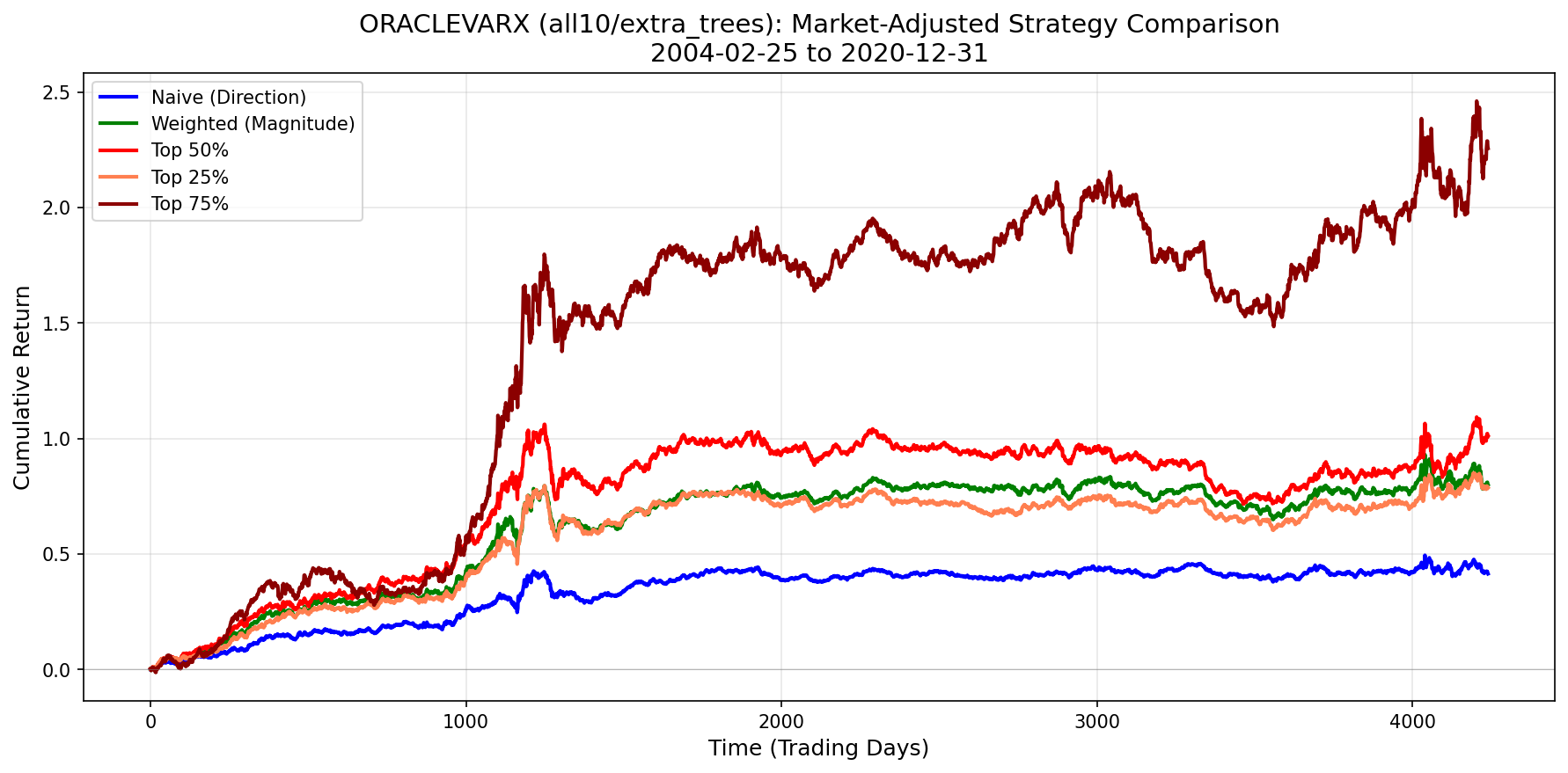}
\\[\smallskipamount]
{\small (b) ORACLE-VARX (Extra Trees, all10)}
\\[\medskipamount]
\includegraphics[width=0.75\textwidth]{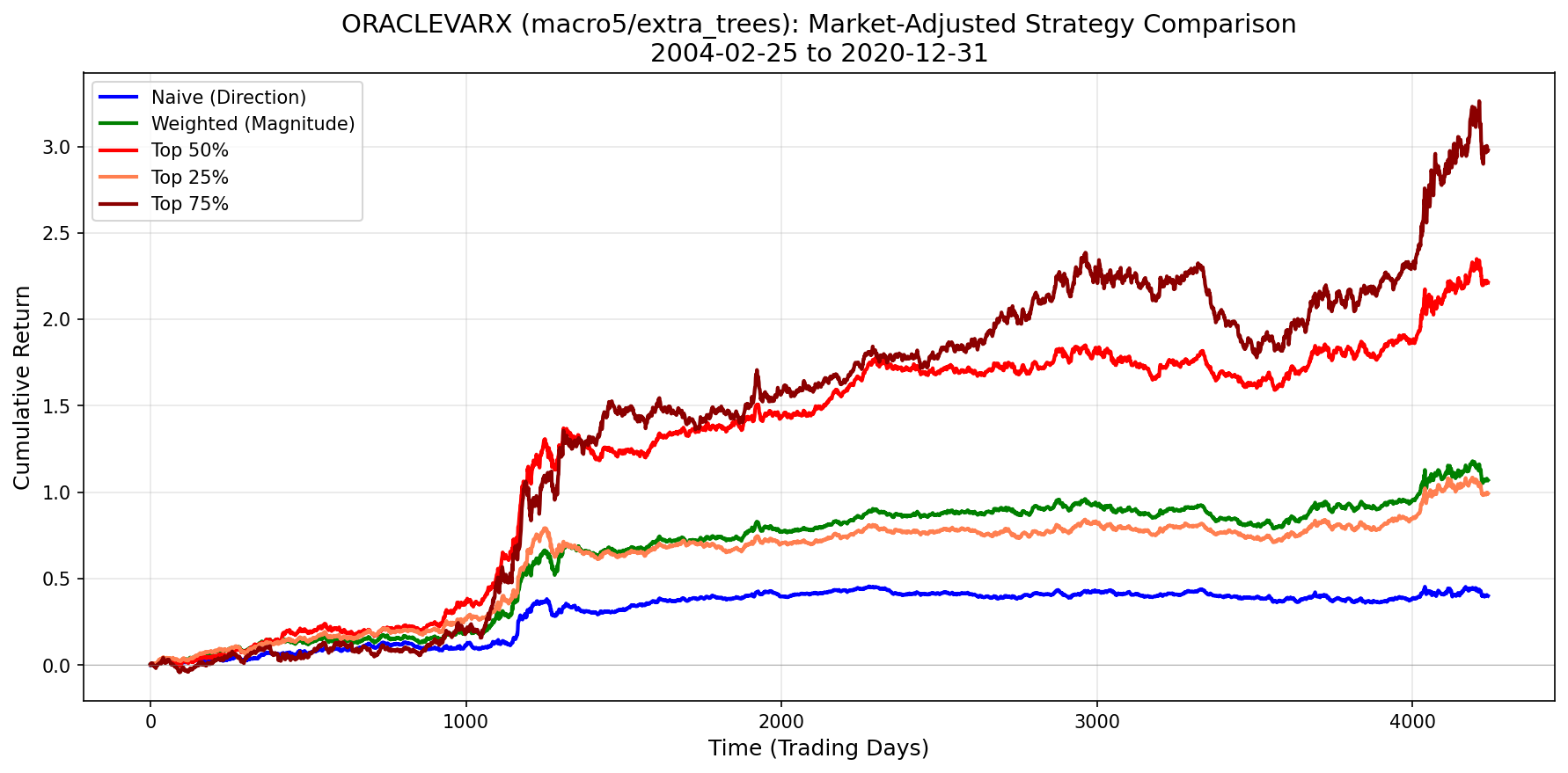}
\\[\smallskipamount]
{\small (c) ORACLE-VARX (Extra Trees, macro5)}
\caption{Cumulative market-adjusted returns for three representative
configurations. (a)~The ACLE-VAR baseline achieves the highest Sharpe ratios
overall (up to 1.32). (b)~ORACLE-VARX with Extra Trees and the
\texttt{all10} preset---the configuration used for the lag-volatility
analysis in Figure~\ref{fig:etf-lag}---produces stable, positive returns
across all strategies. (c)~The \texttt{macro5} preset with Extra Trees
yields the highest DML Sharpe ratio (1.15, Top~50\% strategy) among all DML
experiments.}\label{fig:pnl-comparison}
\end{figure*}

\paragraph{Discussion.}
Several patterns emerge from the full results.
\emph{First}, the OLS-based baselines (VAR, ACLE-VAR) consistently achieve
the highest Sharpe ratios, confirming that DML debiasing trades predictive
power for causal validity (Section~\ref{sec:experiments}).
\emph{Second}, among DML first-stage learners, Extra Trees is the most consistent, and TabPFN is usually next: LightGBM and XGBoost frequently produce near-zero or negative
returns, suggesting their nuisance estimates are less stable in the
rolling-window setting.
\emph{Third}, the \texttt{macro5} preset is generally the weakest for DML
methods---with the notable exception of ORACLE-VARX with Extra Trees, which
achieves the highest DML Sharpe ratio of 1.15 (Top~50\% strategy under
\texttt{macro5}).
\emph{Fourth}, the quantile-filtered strategies (Top~25\%, Top~50\%,
Top~75\%) often outperform the naive and weighted strategies in annualized
return, indicating that concentration on high-conviction signals is
beneficial.

\section{Implementation Details}\label{app:implementation}

\subsection{Experiment Configuration and Learner Hyperparameters}

Table~\ref{tab:config} summarizes the experiment configuration and first-stage learner hyperparameters for both benchmarks.

We use Extra Trees~\citep{geurts2006extremely} and Random Forest~\citep{breiman2001random} from scikit-learn~\citep{pedregosa2011scikit}, LightGBM~\citep{ke2017lightgbm}, XGBoost~\citep{chen2016xgboost} and TabPFN-2.5~\citep{hollmann2025tabpfn,grinsztajn2025tabpfn}. The causal-discovery baselines use the \texttt{tigramite} package for PCMCI~\citep{runge2019detecting} and the \texttt{lingam} package for VAR-LiNGAM~\citep{ikeuchi2023python,hyvarinen2010estimation}. The ETF open-to-close returns come from CRSP~\citep{crsp}. Calculated based on data from CRSP US Stock Databases \textcopyright 2024 Center for Research in Security Prices, LLC, An Affiliate of the University of Chicago Booth School of Business. All ten confounders come from FRED~\citep{fred}: the Economic Policy Uncertainty index is that of \citet{baker2016measuring}, the BBB spread is an ICE BofA index (ICE Data Indices, LLC), and VIX and the gold volatility index are copyright of the Chicago Board Options Exchange. Appendix~\ref{app:compute} lists the versions we ran, and the code repository lists all dependencies in its requirements files.

\begin{table*}[!htbp]
\centering
\caption{Experiment configuration and first-stage learner hyperparameters.}\label{tab:config}
\begin{tabular}{l|c|c}
\toprule
\multicolumn{3}{l}{\textbf{Panel A: Experiment Configuration}} \\
\midrule
\textbf{Parameter} & \textbf{Synthetic} & \textbf{ETF} \\
\midrule
$p_{\max}$ & 5 & 10 \\
$\alpha$-grid & \multicolumn{2}{c}{$\{0.01, 0.05, 0.10, 0.15, 0.20, 0.25, 0.30\}$} \\
OLS window & 200 days & 504 days \\
Tree training window & 200 days & 504 days \\
Validation days & 20 & 21 \\
Block length $n_{\mathrm{test}}$ (refit interval) & 20 days & 21 days \\
\midrule
\multicolumn{3}{l}{\textbf{Panel B: First-Stage Learner Hyperparameters}} \\
\midrule
\textbf{Learner} & \multicolumn{2}{l}{\textbf{Key Hyperparameters}} \\
\midrule
Extra Trees & \multicolumn{2}{l}{n\_estimators $= 100$, max\_features $= 1.0$, max\_depth $= 5$} \\
\addlinespace
Random Forest & \multicolumn{2}{l}{n\_estimators $= 100$, max\_features $= 1.0$, max\_depth $= 5$} \\
\addlinespace
LightGBM & \multicolumn{2}{l}{n\_estimators $= 100$, learning\_rate $= 0.1$, max\_depth $= 5$, num\_leaves $= 31$} \\
\addlinespace
XGBoost & \multicolumn{2}{l}{n\_estimators $= 100$, learning\_rate $= 0.3$, max\_depth $= 5$} \\
\addlinespace
TabPFN & \multicolumn{2}{l}{Pretrained transformer; no tuning required.} \\
       & \multicolumn{2}{l}{v2.5 weights (\texttt{tabpfn} 6.3.2); one forward pass, no ensembling.} \\
\bottomrule
\end{tabular}
\end{table*}

\subsection{Full Algorithm}\label{app:algorithm}

\begin{algorithm}[H]
\SetAlgoLined
\KwIn{Time series $\{\bY_t,\bW_t\}_{t=1}^T$;
  $p_{\max}$; window size $T_w$; $\alpha$-grid $\{\alpha_1,\ldots,\alpha_M\}$;
  validation window $v$; learner $\mathcal{L}$; edge FDR level $q$}
\KwOut{Sequence of causal lag graphs $\{\hat{\mathcal{G}}_t\}$ and forecasts $\{\hat{\bY}_t\}$}
\BlankLine
\For{each rolling window ending at $t = T_w, T_w{+}1, \ldots, T$}{
  \For{$p = 1,\ldots,p_{\max}$}{
    Build stacked treatment $\bT^{(p)}$ and confounders
    $\bC^{(p)}$\;
    Run DML cross-fitting with learner $\mathcal{L}$
    to obtain $\hg_Y^{(p)},\,\hg_T^{(p)}$\;
    Compute residuals
    $\tilde{\bY}^{(p)},\tilde{\bT}^{(p)}$
    via~\eqref{eq:residuals}\;
    Estimate $\hbTheta^{(p)}$ and standard errors
    via~\eqref{eq:theta-hat}--\eqref{eq:ols-se}\;
    Forecast $\hat{\bY}_t^{(p)} \leftarrow \hg_Y^{(p)}(\bC_t)
    + \hbTheta^{(p)\top}\bigl(\bT_t^{(p)} - \hg_T^{(p)}(\bC_t)\bigr)$\;
  }
  \tcp{Lag selection by ACLE (OR-VARX: RMSE~\eqref{eq:rmse-select} instead)}
  \For{$\alpha \in \{\alpha_1,\ldots,\alpha_M\}$}{
    $p_\alpha \leftarrow$ sequential significance testing
    at level $\alpha$ for $p = 2,\ldots,p_{\max}$ (lag 1 is always kept), using the $z$-scores of $\hbTheta^{(p_{\max})}$\;
  }
  $\alpha^*_t \leftarrow \argmin_\alpha \mathrm{RMSE}_v(p_\alpha)$
  via~\eqref{eq:alpha-select}\;
  $\hat{p}_t \leftarrow p_{\alpha^*_t}$;\quad $\hat{\bY}_t \leftarrow \hat{\bY}_t^{(\hat{p}_t)}$\;
  \BlankLine
  \tcp{Edge discovery with FDR control}
  Compute $z_{k,i,j} = \hat{\bA}_k[i,j] \,/\, \hat s(\hat{\bA}_k[i,j])$
  for all $(k,i,j)$ with $k\le\hat{p}_t$, from the $p_{\max}$ fit\;
  Apply BH-FDR at level $q$ to the $n^2\hat{p}_t$ $z$-scores\;
  $\hat{\mathcal{G}}_t \leftarrow$ edges surviving FDR threshold\;
}
\Return $\{\hat{\mathcal{G}}_t,\,\hat{\bY}_t\}_{t}$\;
\caption{OR(ACLE)-VARX: Orthogonalized Causal Lag Discovery}
\label{alg:oracle}
\end{algorithm}

\subsection{Computational Cost}\label{app:compute}

\paragraph{Reuse across windows.}
Algorithm~\ref{alg:oracle} is written one window at a time, but the implementation shares work across windows. The blocked split of Section~\ref{sec:dml-var} puts the cross-fitting blocks on one grid of days that does not depend on the window: block~$k$ trains on $[s_k,\,s_k+n_{\mathrm{train}})$, predicts the next $n_{\mathrm{test}}$ days, and $s_{k+1}=s_k+n_{\mathrm{test}}$. So, for each lag order $p$:
\begin{enumerate}
    \item each block's nuisance models are fit once, predict their $n_{\mathrm{test}}$ days, and are discarded;
    \item the out-of-sample residuals $(\tilde{\bY}_t,\tilde{\bT}_t^{(p)})$ are stored once, one row per day;
    \item the window ending at day $t$ reads rows $t-T_w,\dots,t-1$ of this array, so neighbouring windows share $T_w-1$ of their $T_w$ rows;
    \item batched least-squares calls solve~\eqref{eq:theta-hat} and~\eqref{eq:ols-se} for all windows together.
\end{enumerate}
The reuse is exact: recomputing these residuals separately inside each window would give the same numbers. OR-VARX and ORACLE-VARX share this first stage, and every $\alpha$ in the grid reuses the same $(\hbTheta,\widehat{\mathrm{SE}})$; only the $z$-tests and BH are rerun. There is no hyperparameter search (Table~\ref{tab:config}).

\paragraph{Number of first-stage fits.}
For each block and each $p=1,\dots,p_{\max}$, the first stage fits $n$ single-output learners for $\bY_t$ and $np$ for the lags $\bT_t^{(p)}$, so $S=\sum_{p=1}^{p_{\max}}n(p+1)=n\,p_{\max}(p_{\max}+3)/2$ fits per block. With $F$ blocks, the implementation makes $F\cdot S$ fits. Table~\ref{tab:fit-counts} compares this with two per-window alternatives:
\begin{itemize}
    \item \emph{(A) one fresh split per window}: train once on the $n_{\mathrm{train}}$ days before the window and predict its $T_w$ days. This is a cheaper, different estimator, and costs about $n_{\mathrm{test}}$ times more fits.
    \item \emph{(B) the same estimator, recomputed per window}: refit every block that covers the window. Its output equals ours, and it costs roughly $T_w$ times more fits.
\end{itemize}

\begin{table}[!htbp]
\centering
\caption{First-stage fits per run (one learner, one observation level). Ratios are relative to our implementation.}\label{tab:fit-counts}
\small
\begin{tabular}{lrr}
\toprule
 & Synthetic & ETF \\
\midrule
Windows & 2,595 & 4,261 \\
Blocks $F$ & 140 & 228 \\
Fits per block $S$ & 60 & 585 \\
\midrule
Ours ($F\cdot S$) & 8,400 & 133,380 \\
(A) fresh split per window & 155,700 (18.5$\times$) & 2,492,685 (18.7$\times$) \\
(B) same estimator per window & 1,704,960 (203$\times$) & 62,198,370 (466$\times$) \\
\bottomrule
\end{tabular}
\end{table}

\paragraph{Wall-clock time.}
Table~\ref{tab:runtime} times the full synthetic run (seed 42, all three confounders observed) on an Apple M5 laptop (10 CPU cores, 24~GB), one learner at a time with 9 threads (Python 3.12, PyTorch 2.14.1, scikit-learn 1.6.1, LightGBM 4.7, XGBoost 3.4). ``Ours'' covers the shared first stage, the batched OLS and OR-VARX over all 2,595 windows. Column (A) is measured on 10 evenly spaced windows and scaled to 2,595. Column (B) is computed as the measured time per block times the number of block fits (B) needs; it was not run. ORACLE-VARX adds 0.11--0.18\,s on top of the shared first stage, and the four OLS baselines take 0.10--0.31\,s each for all windows.

\begin{table}[!htbp]
\centering
\caption{Wall-clock time of one synthetic run (Apple M5 laptop, CPU).}\label{tab:runtime}
\small
\begin{tabular}{lrrrr}
\toprule
Learner & Ours & Per window & (A) & (B) \\
\midrule
Extra Trees & 6.1 min & 0.14 s & 1.9 h & 20.8 h \\
Random Forest & 9.4 min & 0.22 s & 2.4 h & 31.8 h \\
LightGBM & 13.6 min & 0.31 s & 3.7 h & 45.9 h \\
XGBoost & 17.1 min & 0.40 s & 6.1 h & 57.9 h \\
\bottomrule
\end{tabular}
\end{table}

\paragraph{Estimating other settings.}
The first stage dominates, so the run time is roughly
\[
    \text{time} \;\approx\; F\cdot S\cdot c_{\mathrm{fit}},\qquad F\approx\frac{T-n_{\mathrm{train}}-p_{\max}}{n_{\mathrm{test}}},
\]
where $c_{\mathrm{fit}}$ is the time of one single-output fit on $n_{\mathrm{train}}$ rows. From Table~\ref{tab:runtime}, $c_{\mathrm{fit}}$ is about 44\,ms (Extra Trees), 67\,ms (Random Forest), 97\,ms (LightGBM) and 122\,ms (XGBoost) on this laptop. For example, the 12 tree-based synthetic runs (4 learners $\times$ 3 observation levels) take about 2.3 hours in total, if the partially observed runs cost the same as the fully observed one. Assuming the same $c_{\mathrm{fit}}$ at $n_{\mathrm{train}}=504$, one ETF run with Extra Trees would take about $133{,}380\times 44\,\mathrm{ms}\approx 1.6$ hours; this is an extrapolation, not a measurement.

\paragraph{TabPFN.}
TabPFN needs no training, only a forward pass of the pretrained transformer. For each $p$, all (block, output) problems with the same test length are stacked into the transformer's batch dimension, and split only as GPU memory requires. With 80~GB of GPU memory, the 8,400 single-output problems of a synthetic run need 10--12 forward passes, and the 133,380 of an ETF run need 104--496, depending on the number of confounders. The same predictions give both the residuals and the forecasts, so forecasting needs no extra passes. On one cloud NVIDIA A100 GPU (80~GB; PyTorch 2.14.1, TabPFN 6.3.2), a full synthetic run takes 48--57\,s per observation level, 34--40\,s of it in TabPFN. The ETF runs take 17\,min (\texttt{vix}), 49\,min (\texttt{macro5}) and 88\,min (\texttt{all10}), almost all of it in TabPFN: more confounders mean more features per problem, so fewer problems fit in one pass.

\paragraph{Causal-discovery baselines.}
PCMCI and VAR-LiNGAM have no first stage to share across windows: each
window is a new search, so a run costs the time per window times the
number of windows. Table~\ref{tab:cd-runtime} times three evenly spaced
windows of each (one for CMIknn on the ETF data), one window at a time in
a single process on the same laptop, with all series and confounders as
variables, lags $1,\dots,p_{\max}$ and default settings; VAR-LiNGAM keeps
the lag fixed at $p_{\max}$. The ETF
problem has 19 variables (the 9 ETFs and the 10 \texttt{all10}
confounders), $T_w=504$ and $p_{\max}=10$. The code is
\texttt{scripts/time\_cd\_baselines.py}.

\begin{table}[!htbp]
\centering
\caption{Time of the causal-discovery baselines on the Apple M5 laptop CPU (TabPFN: one A100 GPU). Baselines: one window at a time in one process, scaled to a full run (2,595 synthetic and 4,261 ETF windows); windows are independent, so these runs split evenly over cores. Ours uses 9 threads. CMIknn is PCMCI with the nonparametric conditional-mutual-information test.}\label{tab:cd-runtime}
\small
\setlength{\tabcolsep}{3pt}
\begin{tabular}{lrrrr}
\toprule
 & \multicolumn{2}{c}{Synthetic} & \multicolumn{2}{c}{ETF (\texttt{all10})} \\
Method & Window & Run & Window & Run \\
\midrule
PCMCI (ParCorr) & 0.16 s & 7 min & 4.4--9.1 s & 5--11 h \\
PCMCI (CMIknn) & 6.2--6.6 min & 268--284 h & $>$1 h & $>$4,000 h \\
VAR-LiNGAM & 0.02 s & 1 min & 9--15 s & 11--17 h \\
\midrule
Ours, Extra Trees & 0.14 s & 6.1 min & -- & $\approx$1.6 h$^\ast$ \\
Ours, TabPFN (A100) & -- & 48--57 s & -- & 88 min \\
\bottomrule
\end{tabular}

\smallskip
{\footnotesize $^\ast$Extrapolated, not measured (see above).}
\end{table}

With linear tests the baselines are cheap: spread over 9 cores like
ours, linear PCMCI and VAR-LiNGAM would take about a minute on the
synthetic data and 0.6--1.9 hours on the ETF data, against 6.1 minutes
and an extrapolated 1.6 hours for ORACLE-VARX with Extra Trees on the same
laptop. The gap is in the nonlinear setting that ORACLE-VARX targets.
PCMCI with the nonparametric CMIknn test, the variant suited to nonlinear
confounding, takes over 6 minutes per synthetic window (about 30 hours
for a synthetic run even on 9 cores) and did not finish one ETF window
within an hour (over 450 hours for an ETF run on 9 cores). ORACLE-VARX
handles nonlinear confounding with the reused first stage instead.

\subsection{BH-FDR Procedure}

Edge discovery converts each debiased coefficient $\hat{\bA}_k[i,j]$ into a $z$-score:
\[
    z_{k,i,j} = \frac{\hat{\bA}_k[i,j]}{\hat s(\hat{\bA}_k[i,j])},
\]
where $\hat s$ is from the residualized OLS standard errors~\eqref{eq:ols-se}. Two-sided $p$-values are $p_{k,i,j} = 2(1 - \Phi(|z_{k,i,j}|))$. The Benjamini--Hochberg procedure is applied at level $q = 0.05$ to all $n^2 \cdot \hat{p}_t$ $p$-values per window:
\[
    r^* = \max\!\left\{r : p_{(r)} \leq \frac{r}{m} \cdot q\right\},
\]
where $m = n^2 \cdot \hat{p}_t$ is the number of tests and $p_{(1)} \leq \cdots \leq p_{(m)}$ are the sorted $p$-values. The edges with the $r^*$ smallest $p$-values are kept; if no $r$ satisfies the inequality, no edge is kept.

\section{Use of AI in the Theory}\label{app:ai-use}
The result to prove came from the authors: we needed a locally stationary version of the DML central limit theorem, and expected its proof to follow the DML argument of \citet{chernozhukov2018double}, with extra error terms from the time series (the drift of the coefficients inside the window, and dependence handled through martingale differences). The proofs in Appendix~\ref{app:proofs} were then produced by a multi-agent setup: Claude models (Opus~5.5 and Fable~5.1, by Anthropic) drafted the proofs and checked each other's drafts adversarially, and Codex (by OpenAI, with the Astra model) gave an independent adversarial review. Opus was instructed to write the proofs so that they spell out every step and skip none, so that a reader can verify each step. The one external theorem used, the martingale central limit theorem, was checked against the printed page of \citet{billingsley1995probability}.

An author with a background in mathematical analysis then checked the proofs line by line, verifying every step by hand, including all bounds and constants, and asking questions wherever a step was unclear; answers that exposed a problem led to a fix. Apart from the example below and presentation changes, this review found no mathematical errors.

\paragraph{Example: an inconsistency found during review.} In the proof of Theorem~\ref{thm:dml-var} (Step~1 of its completion in Appendix~\ref{app:se}), a display stated $Q_T=X_TY_T$, while the next sentence used $Q_T=X_TY_T+U_T$ with $U_T\convp0$, where $X_T\convd\N(0,1)$ and $Y_T\convp1$. While checking this step, the author asked why $Q_T$ was first written as $X_TY_T$ and then as $X_TY_T+U_T$:
\begin{quote}\itshape
``I thought $Q_T=X_TY_T$, how is it then [$X_TY_T$] $+\,U_T$ again now?''
\end{quote}
The first identity holds only on a high-probability event $G_T$, and $U_T$ is the correction that makes the second hold at every outcome; the display did not say so. The fix labels the display ``on $G_T$'' and adds one sentence explaining $U_T$. In resolving this, the assistant also noticed that the proof applied Slutsky's lemma to the product $X_TY_T$, whereas the appendix stated the lemma only in its additive form; the step was rewritten as $Q_T=X_T+[X_T(Y_T-1)+U_T]$ with $X_T(Y_T-1)=O_p(1)\cdot o_p(1)=o_p(1)$, so the stated form applies.

\end{document}